\newcommand{\bea}{\begin{eqnarray}}
\newcommand{\eea}{\end{eqnarray}}
\newcommand{\<}{\langle}
\renewcommand{\>}{\rangle}
\newcommand{\E}{{\mathbb E}}
\def\Unif{{\sf Unif}}
\def\BI{{\rm I}}
\def\eps{{\varepsilon}}
\def\normeq{\cong}
\def\rE{{\rm E}}
\def\id{{\rm I}}
\def\bh{\boldsymbol{h}}
\def\Psp{{\mathfrak{P}}}
\def\hx{\hat{x}}
\def\cond{{\sf c}}
\def\bcond{\underline{\sf c}}
\def\ro{\overline{{\sf c}}}
\def\btau{{\boldsymbol{\tau}}}
\def\bsigma{{\boldsymbol{\sigma}}}
\def\bsm{{\boldsymbol{\sigma}}_{\rm M}}
\def\bxi{{\boldsymbol{\xi}}}
\def\bfe{{\boldsymbol{e}}}
\def\bap{\overline{p}}
\def\bZ{{\boldsymbol{Z}}}
\def\hgamma{\widehat{\gamma}}
\def\dc{{\Delta{\sf c}}}
\def\bC{{\boldsymbol{C}}}
\def\bQ{{\boldsymbol{Q}}}
\def\bM{{\boldsymbol{M}}}
\def\hbQ{{\boldsymbol{\widehat{Q}}}}
\def\hbu{{\boldsymbol{\widehat{u}}}}
\def\bmu{{\boldsymbol{\mu}}}
\def\bg{{\boldsymbol{g}}}
\def\hnu{\widehat{\nu}}
\def\bzero{{\mathbf 0}}
\def\bone{{\mathbf 1}}
\def\cF{{\mathcal F}}
\def\cG{{\mathcal G}}
\def\cT{{\mathcal T}}
\def\cC{{\mathcal C}}
\def\cO{{\mathcal O}}
\def\Z{{\mathbb Z}}
\def\RS{\mbox{\tiny RS}}
\def\field{{\mathbb F}}
\def\reals{{\mathbb R}}
\def\complex{{\mathbb C}}
\def\integers{{\mathbb Z}}
\def\Z{{\mathbb Z}}
\def\sg{s_{\mathfrak G}}
\def\normal{{\sf N}}
\def\cnormal{{\sf CN}}
\def\sT{{\sf T}}
\def\Group{{\mathfrak G}}
\def\sg{s_{{\mathfrak G}}}
\def\oh{\overline{h}}
\def\bv{{\boldsymbol{v}}}
\def\bz{{\boldsymbol{z}}}
\def\bx{{\boldsymbol{x}}}
\def\ba{{\boldsymbol{a}}}
\def\bb{{\boldsymbol{b}}}
\def\bA{\boldsymbol{A}}
\def\bB{\boldsymbol{B}}
\def\bxz{\boldsymbol{x_0}}
\def\hbx{\boldsymbol{\hat{x}}}
\def\lBayes{\lambda_{c}^{\mbox{\tiny{Bayes}}}}
\def\lml{\lambda_{c}^{\mbox{\tiny{ML}}}}
\def\lsdp{\lambda_{c}^{\mbox{\tiny{SDP}}}}
\def\tlsdp{\tilde{\lambda}_{c}^{\mbox{\tiny{SDP}}}}
\def\sopt{{\mbox{\tiny{opt}}}}
\def\OPT{{\rm OPT}}
\def\de{{\rm d}}
\def\csdp{c^{\mbox{\tiny{SDP}}}}
\def\cpca{c^{\mbox{\tiny{PCA}}}}
\def\xpca{\boldsymbol{\hat{x}}^{\mbox{\tiny{PCA}}}}
\def\xml{\boldsymbol{\hat{x}}^{\mbox{\tiny{ML}}}}
\def\xsdp{\boldsymbol{\hat{x}}^{\mbox{\tiny{SDP}}}}
\def\xBayes{\boldsymbol{\hat{x}}^{\mbox{\tiny{Bayes}}}}
\def\bX{\boldsymbol{X}}
\def\bY{\boldsymbol{Y}}
\def\bW{\boldsymbol{W}}
\def\prob{{\mathbb P}}
\def\E{{\mathbb E}}
\def\MSE{{\rm MSE}}
\def\Ove{{\rm Overlap}}
\def\<{\langle}
\def\>{\rangle}
\def\Tr{{\sf Tr}}
\def\T{{\sf T}}
\def\argmax{{\arg\!\max}}
\def\sign{{\rm sign}}
\def\diag{{\rm diag}}
\def\ed{\stackrel{{\rm d}}{=}}
\def\Poisson{{\rm Poisson}}
\def\cM{{\cal M}}
\def\usigma{{\underline{\bsigma}}}
\def\Pp{{\sf P}^{\perp}}
\def\obv{{\boldsymbol{\overline{v}}}}
\def\di{{\partial i}}
\def\conv{{\rm conv}}
\def\by{{\boldsymbol{y}}}
\def\bw{{\boldsymbol{w}}}
\def\P{{\sf P}}
\def\cov{{\rm Cov}}
\def\tol {{\tt tol}}
\def\be{{\boldsymbol{e}}}
\def\DeltaMax{\Delta_\text{max}}
\def\bphi{{\boldsymbol{\varphi}}}
\def\hG{\widehat{G}}
\def\hF{\widehat{F}}
\def\bhSigma{{\bf\widehat{\Sigma}}} 
\def\Bind{{\sf Bind}}
\def\Var{{\rm Var}}
\newtheorem{claim}{Claim}[section]
\newtheorem{lemma}[claim]{Lemma}
\newtheorem{proposition}[claim]{Proposition}
\newtheorem{corollary}[claim]{Corollary}
\newtheorem{remark}[claim]{Remark}
\title{
Phase Transitions in Semidefinite Relaxations}
\author{Adel Javanmard\footnote{USC Marshall School of Business, University of Southern California},\;\;
Andrea~Montanari\footnote{Department of Electrical
    Engineering and Department of Statistics, Stanford University}
\, and\, Federico Ricci-Tersenghi\footnote{Dipartimento di Fisica,
    Universit\'a di Roma, La Sapienza}}
\date{\today}                                           % Activate to display a given date or no date
\begin{document}
\maketitle

\begin{abstract}
Statistical inference problems arising within signal processing, data
mining, and machine learning naturally give rise to hard combinatorial
optimization problems. These problems become intractable when the 
dimensionality of the data is large, as is often  the case for modern
datasets. A popular idea is to construct convex
relaxations of these combinatorial problems, which can be solved
efficiently for large scale datasets.

Semidefinite programming (SDP) relaxations are among the most powerful
methods in this family, and are surprisingly well-suited for a broad range of
problems where data take the form of matrices or graphs.
It has been observed several times that, when the `statistical noise'
is small enough, SDP relaxations correctly detect the underlying
combinatorial structures. 

In this paper we develop asymptotic predictions for several
`detection thresholds,' as well as for the estimation error above
these thresholds.
We study some classical SDP relaxations for statistical problems
motivated by graph synchronization and community detection in
networks. We map these optimization problems to statistical mechanics
models with vector spins, and use non-rigorous techniques from
statistical mechanics to characterize the corresponding phase
transitions. Our results clarify the effectiveness of SDP relaxations
in solving high-dimensional statistical problems. 
\end{abstract}

\tableofcontents

\section{Introduction}

Many information processing tasks can be formulated 
as optimization problems. This idea has been central to data analysis
and statistics at least since Gauss and Legendre's invention of the
least-squares method in the early 19th century \cite{gauss1809theoria}. 

Modern datasets pose new challenges to this 
centuries' old framework. On the one hand, high-dimensional applications
require to estimate simultaneously  millions of
parameters. Examples span genomics \cite{ben1999clustering}, imaging \cite{plaza2009recent},
web-services \cite{KBV09}, and so on. On the other hand, the unknown object to
be estimated has often a combinatorial structure: In
clustering we aim at estimating a partition of the data points \cite{von2007tutorial}.
Network analysis  tasks usually require to identify a
discrete subset of nodes in a  graph \cite{girvan2002community,krzakala2013spectral}. 
Parsimonious data explanations are sought by imposing
combinatorial sparsity constraints \cite{wasserman2000bayesian}.

There is an obvious tension between the above requirements. 
While efficient algorithms are needed to estimate a
large number of parameters, the maximum likelihood method often
requires to solve  NP-hard
combinatorial optimizations. A flourishing  line of work
addresses this conundrum by designing effective convex relaxations of
these combinatorial problems \cite{Tibs96,chen1998atomic,candes2010power}. 

Unfortunately, the statistical 
properties of such convex relaxations are well understood only in
a few cases (compressed sensing being the most important success
story \cite{DoTa05,Dantzig,DMM09,amelunxen2014living}). In
this paper we use tools from statistical mechanics
to develop a precise picture of the behavior of a class of semidefinite programming relaxations.
Relaxations of this type appear to be surprisingly effective in a
variety of problems ranging from clustering to graph synchronization.
For the sake of concreteness we will focus on three specific problems:

\vspace{0.25cm}

\noindent {\bf $\Z_2$-synchronization.} In the general synchronization 
problem, we aim at estimating $x_{0,1}, x_{0,2}, \dots, x_{0,n}$ which
are unknown elements of a known  group $\Group$.  This is done 
using data that consists of noisy observations of `relative positions'
$Y_{ij}= x_{0,i}^{-1}x_{0,j}+{\sf noise}$. A large number of practical
problems can be modeled in this framework. For instance, the case
$\Group = SO(3)$ (the orthogonal group in three dimensions) is
relevant for camera registration, and molecule structure
reconstruction in electron microscopy \cite{singer2011three}.

The $\Z_2$-synchronization problem is arguably the simplest problem in
this class, and corresponds to $\Group = \Z_2$ (the group of integers
modulo $2$). Without loss of generality, we will identify this with
the group $(\{+1,-1\}, \,\cdot\,)$ (elements of the group are $+1$,
$-1$ and the group operation is ordinary multiplication).  We assume
observations to be distorted by Gaussian noise, namely for each $i<j$ we
observe $Y_{ij} = (\lambda/n)\, x_{0,i}x_{0,j} + W_{ij}$, where
$W_{ij}\sim\normal(0,1/n)$ are independent standard normal random
variables. This fits the general definition since $x_{0,i}^{-1} = x_{0,i}$ for
$x_{0,i}\in \{+1,-1\}$.

In matrix notation, we observe a symmetric matrix $\bY=\bY^*\in
\reals^{n\times n}$ given by
\begin{align}
\bY = \frac{\lambda}{n}\, \bxz\, \bxz^{*} + \bW\, . \label{eq:MainModel}
\end{align} 
(Note that entries on the diagonal carry no information.)
Here $\bxz\in\{+1,-1\}^n$, $\bxz^*$ denote the transpose of $\bxz$, and
$\bW = (W_{ij})_{i,j\le n}$ is a 
random matrix from the Gaussian Orthogonal Ensemble (GOE), i.e. a
symmetric matrix with independent entries (up to symmetry)
$(W_{ij})_{1\le i<j\le n}\sim_{i.i.d.}\normal(0,1/n)$ and
$(W_{ii})_{1\le i\le n}\sim_{i.i.d.}\normal(0,2/n)$.

A solution of the $\Z_2$ synchronization problem can be interpreted 
as a bi-partition of the set $\{1,\dots,n\}$, and hence this has been
used as a model for partitioning signed networks
\cite{cucuringu2015synchronization,abbe2014decoding}.

\vspace{0.25cm}

\noindent {\bf $U(1)$-synchronization.} 
This is again an instance of the synchronization problem. However,
 we take $\Group = U(1)$.
This is the group of complex number of modulus one, with the operation
of complex multiplication $\Group = (\{x\in\complex:\, |x|=1\},\,\cdot\,)$

As in the previous case, we assume
observations to be distorted by Gaussian noise, i.e. for each $i<j$ we
observe $Y_{ij} = (\lambda/n)\, x_{0,i}\overline{x}_{0,j} + W_{ij}$, where
$\overline{z}$ denotes complex conjugation\footnote{Here and below
  $\cnormal(\mu,\sigma^2)$, with $\mu = \mu_1+i\, \mu_2$  and
  $\sigma^2\in\reals_{\ge 0}$ denotes the
  complex normal distribution. Namely, $X\sim\cnormal(\mu,\sigma^2)$
  if $X = X_1+i\, X_2$ with $X_1\sim\normal(\mu_1,\sigma^2/2)$,
  $X_2\sim\normal(\mu_2,\sigma^2/2)$ independent Gaussian random variables.} 
and $W_{ij}\sim \cnormal(0,1/n)$.

In matrix notations, this model takes the same form
\eqref{eq:MainModel}, provided we interpret $\bxz^*$ as the conjugate
transpose of vector $\bxz\in \complex^{n}$, with components $x_{0,i}$,
$|x_{0,i}|=1$. We will follows this convention throughout.

$U(1)$ synchronization has been used as a model for clocks
synchronization over networks
\cite{singer2011angular,bandeira2014tightness}.
It is also closely related to the phase-retrieval problem in signal
processing \cite{candes2015phase,waldspurger2015phase,alexeev2014phase}.
An important qualitative difference with respect to the previous
example ($\Z_2$ synchronization) lies in the fact that $U(1)$ is a
continuous group. We regard this as a prototype of
synchronization problems over compact Lie groups (e.g. $SO(3)$)

\vspace{0.25cm}
\noindent {\bf Hidden partition (a.k.a. community detection).}
The hidden (or planted) partition model is a statistical model for 
the problem of finding clusters in large network datasets (see \cite{krzakala2013spectral,massoulie2014community,mossel2013proof}
and references therein for earlier work). 
The data consist of graph $G = (V, E)$ over vertex set $V=
[n]\equiv\{1,2,\dots,n\}$ generated as follows. We partition $V =
V_+\cup V_-$ by setting $i\in V_+$  or $i\in V_-$ independently across
vertices with $\prob(i\in V_+) = \prob(i\in V_-)=1/2$. Conditional on
the partition, edges are independent with 
\begin{equation}
\prob\big\{(i,j)\in E\,\big|\, V_+,V_-\big\} = 
\begin{cases}
a/n & \mbox{ if $\{i,j\}\subseteq V_+$ or $\{i,j\}\subseteq V_-$,}\\
b/n & \mbox{ otherwise.}\\
\end{cases}\label{eq:SBMDefinition}
\end{equation}
\begin{figure}[t!]
\centerline{\includegraphics[width=.55\textwidth]{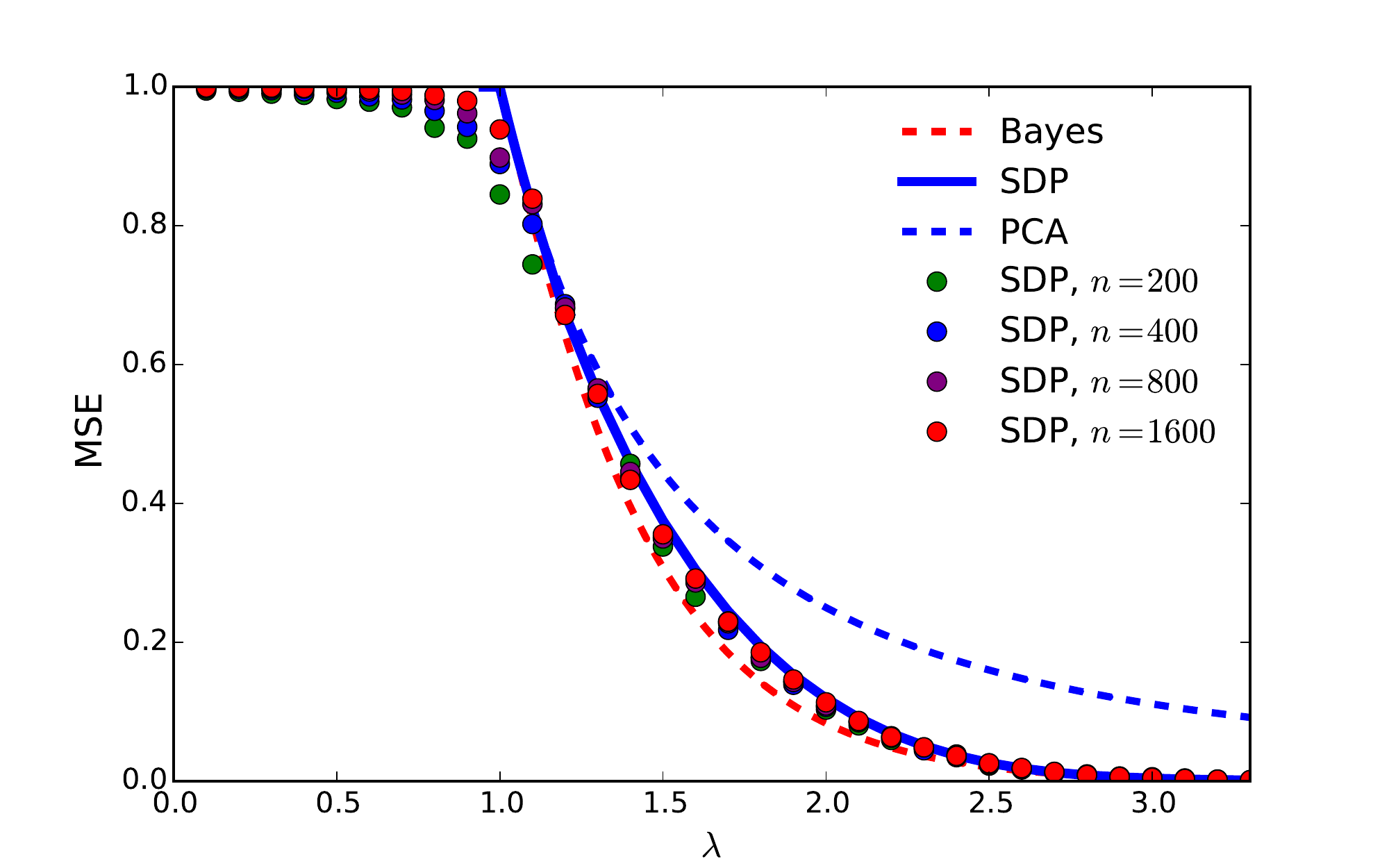}}
\caption{Estimating $\bxz\in\{+1,-1\}^n$ under the noisy $\Z_2$
synchronization model of \eqref{eq:MainModel}. Curves correspond to
(asymptotic) analytical predictions, and dots to numerical
simulations (averaged over $100$ realizations). \label{fig:sdpZ2}}
\end{figure}
Here $a>b>0$ are model parameters that will be kept of order one as 
$n\to\infty$. This corresponds to a random graph with bounded average
degree $d=(a+b)/2$, and a cluster (a.k.a. `block' or `community') structure
corresponding to the partition $V_+\cup V_-$. Given a realization of
such a graph, we are interested in estimating the underlying
partition.

We can encode the partition $V_+$, $V_-$ by a vector $\bxz\in \{+1,-1\}^n$,
letting $x_{0,i} = +1$ if $i\in V_+$ and $x_{0,i} = -1$ if $i\in
V_-$. An  important insight --that we will develop below-- is that this problem is analogous to 
$\Z_2$-synchronization, with signal strength $\lambda =
(a-b)/\sqrt{2(a+b)}$.  The parameters' correspondence is obtained, at
a heuristics level, by noting that, if $\bA_G$ is the adjacency matrix
of $G$, then $\E\<\bxz,\bA_G\bxz\>/(n\E\|\bA_G\|^2_F)^{1/2}\approx (a-b)/\sqrt{2(a+b)}$.
(Here and below $\<\ba,\bb\> = \sum_ia_ib_i$ denotes the standard scalar product
between vectors.)

A generalization of this problem to the case of more than two blocks
has been studied since the eighties as a model for social network structure
\cite{holland},  under the name of `stochastic block
model.'  For the sake of simplicity, we will focus here on the
two-blocks case.
 
\section{Illustrations}

As a first preview of our results, Figure \ref{fig:sdpZ2} reports our
analytical predictions for the estimation error in the $\Z_2$
synchronization problem, comparing them with numerical simulations
using SDP. 
An estimator is a map $\hbx:\reals^{n\times
  n}\to \reals^n$, $\bY\mapsto \hbx(\bY)$. We compare various
estimators in terms of their per-coordinate mean square error:
\begin{align}
\MSE_n(\hbx)\equiv\frac{1}{n}\E\big\{\min_{s\in\{+1,-1\}}\|\hbx(\bY)-s\,
  \bxz\big\|_2^2\big\}\, ,
\end{align}
where expectation is with respect to the noise model
\eqref{eq:MainModel} and $\bxz\in\{+1,-1\}^n$ uniformly random. 
Note the minimization with respect to the sign $s\in \{+1,-1\}$ inside
the  expectation: because of the symmetry of 
\eqref{eq:MainModel},
the vector $\bxz$ can  only be estimated  up to a global sign.
We will be interested in the high-dimensional limit $n\to\infty$
and will omit the subscript $n$ --thus writing $\MSE(\hbx)$-- to
denote  this limit.
Note that trivial estimator that always returns $0$ has error
 $\MSE_n({\mathbf{0}}) =1$, and hence for every other method
we should achieve $\MSE(\hbx) \in [0,1]$,

Classical statistical theory suggests two natural reference
estimators: the Bayes-optimal  and the maximum likelihood estimators.
We will discuss these methods first, in order to set the stage for SDP relaxations.

\vspace{0.2cm}

\noindent{\bf  Bayes-optimal estimator} (a.k.a. minimum MSE). This  provides a
lower bound on the performance of any other approach.
It takes the conditional expectation of the unknown signal given the observations:
\begin{align}
\xBayes(\bY) = \E\big\{\, \bx\, \big|\, (\lambda/n)\bx\bx^*+\bW=
  \bY\big\}\, .
\end{align}
Explicit formulas are given in Supplementary Information (SI). We note
that $\xBayes(\bY)$ assumes knowledge of the prior distribution.
The red-dashed curve in Fig. \ref{fig:sdpZ2} presents our analytical
prediction for the  asymptotic
MSE for $\xBayes(\,\cdot\,)$. Notice that $\MSE(\xBayes)
= 1$ for all $\lambda\le  1$ and $\MSE(\xBayes)
< 1$ strictly for all $\lambda> 1$, with  $\MSE(\xBayes)\to 0$ quickly
as $\lambda\to\infty$. The point $\lBayes=1$ corresponds to a phase
transition for optimal estimation, and no method can have non-trivial $\MSE$ for
$\lambda\le \lBayes$. 

\vspace{0.2cm}

\noindent{\bf Maximum likelihood (MLE).} The estimator $\xml(Y)$ is given by
the solution of
\begin{align}
\xml(\bY) = c(\lambda)\, \argmax_{\bx\in \{+1,-1\}^n}\; 
\<\bx,\bY\bx\>\;\, . \label{eq:ML}
\end{align}
Here 
$c(\lambda)$ is a scaling factor\footnote{In practical
applications, $\lambda$ might not be known. We are not
concerned by this at the moment, since maximum likelihood is used as a
idealized benchmark here.

Note that, strictly speaking, this is a `scaled' maximum likelihood
estimator. We prefer to scale $\xml(\bY)$ in order to keep
$\MSE(\xml)\in [0,1]$.} 
that is chosen according to
the asymptotic theory as to minimize the MSE.
As for the Bayes-optimal curve, we obtain $\MSE(\xml) = 1$ for
$\lambda\le \lml = 1$ and $\MSE(\xml) < 1$ (and rapidly decaying to 0) for
$\lambda> \lml$ .
(We refer to the SI for this result.)

\vspace{0.2cm}

\noindent{\bf  SDP.} Neither the Bayes, nor the maximum likelihood approaches can be
implemented efficiently. In particular, solving the combinatorial
optimization problem in Eq. \eqref{eq:ML} is a prototypical NP-complete
problem.  Even worse, approximating the optimum value within 
a sub-logarithmic factor is computationally hard \cite{arora2005non} (from a worst case perspective). 
SDP relaxations allow to obtain tractable
approximations.
Specifically --and following a standard `lifting' idea--  we replace the problem \eqref{eq:ML} by
the following semidefinite program over the symmetric matrix
$\bX\in\reals^{n\times n}$: 
\begin{align}
\mbox{maximize} & \;\;\;\<\bX,\bY\>\, ,\label{eq:SDP}\\
\mbox{subject to} & \;\;\; \bX\succeq 0\, ,\; \;\;\; X_{ii}=1 \;\; \forall i\in [n]\, .\nonumber
\end{align}
We use $\<\, \cdot\,,\cdot\,\>$ to denote the scalar product between
matrices,
namely $\<\bA,\bB\> \equiv \Tr(\bA^*\bB)$, and $\bA\succeq 0$ to
indicate that $\bA$ is positive semidefinite\footnote{Recall that a
  symmetric matrix $\bA$ is said to be PSD if all of its eigenvalues
  are non-negative.} (PSD). If we assume $\bX = \bx\bx^*$, the SDP
\eqref{eq:SDP} reduces to the maximum-likelihood problem \eqref{eq:ML}.
By dropping this condition, we obtain a convex optimization problem
that is solvable in polynomial time. Given an optimizer $\bX_\sopt= \bX_\sopt(\bY)$ of
this convex problem, we need to produce a vector estimate. We follow a
different strategy from standard `rounding' methods in computer
science, which is motivated by our analysis below.
We compute the eigenvalue
decomposition
$\bX_\sopt = \sum_{i=1}^n\xi_i\, \bv_i\bv_i^{*}$,
with eigenvalues $\xi_1\ge \xi_2\ge \dots\ge \xi_n\ge
0$, and eigenvectors $\bv_i = \bv_i(\bX_\sopt(\bY))$, with  $\|\bv_i\|_2 = 1$. We
then return the estimate
\begin{align}
\xsdp(\bY)= \sqrt{n}\, \csdp(\lambda)\, \bv_1(\bX_\sopt(\bY))\, ,
\end{align}
with $\csdp(\lambda)$ a certain scaling factor, see SI.

Our analytical prediction for $\MSE(\xsdp)$ is plotted as
 blue solid line in Fig.~\ref{fig:sdpZ2}. Dots report the results of numerical simulations
with this relaxation for increasing problem dimensions. The asymptotic  theory appears to
capture very well these data already for $n = 200$. 
For further comparison, alongside the above estimators, we report the  asymptotic prediction for
$\MSE(\xpca)$, the  mean square error of principal component 
analysis. This method simply returns the principal eigenvector of
$\bY$, suitably rescaled (see SI).

Figure \ref{fig:sdpZ2} reveals several interesting features:
\begin{enumerate}
\item\emph{Phase transition for optimal estimation.}
Bayes-optimal estimation achieves non-trivial accuracy as soon as $
\lambda>\lBayes =1$. The same is achieved by a
method as simple as PCA (blue-dashed curve). On the other hand, for $\lambda<1$
no method can achieve a mean square error that is asymptotically smaller than one 
(the latter can be achieved trivially by returning $\hbx = 0$.)
\item\emph{Suboptimality of PCA at large signal strength.}
PCA can be implemented efficiently, but does not exploit the
information $x_{0,i}\in\{+1,-1\}$. As a consequence, its estimation
error is significantly sub-optimal at large $\lambda$ (see SI).
\item\emph{Near-optimality of  SDP relaxations.}
The tractable estimator $\xsdp(\bY)$ achieves the best of both
worlds. Its phase transition coincides with the Bayes-optimal one
$\lBayes=1$, and $\MSE(\xsdp)$ decays exponentially at
large $\lambda$, staying close to $\MSE(\xBayes)$ and strictly smaller
than $\MSE(\xpca)$, for $\lambda \ge1$.
\end{enumerate}

We believe that the above features are generic: as shown in the
SI,  $U(1)$ synchronization confirms this expectation.

\begin{figure}[t!]
\centerline{\includegraphics[width=.65\textwidth]{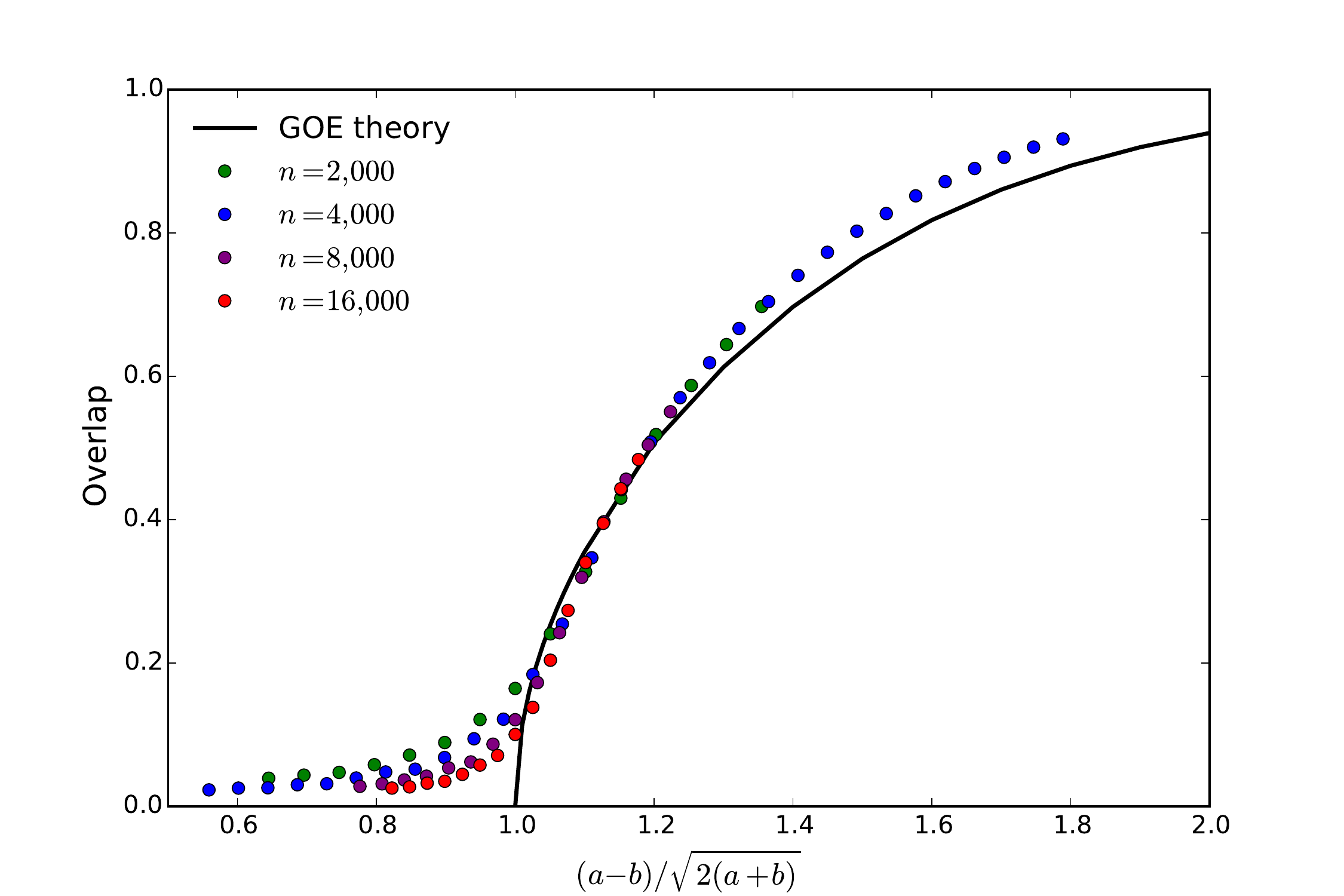}}
\caption{Community detection under the hidden partition model of
  Eq. \eqref{eq:SBMDefinition},
for average degree $(a+b)/2 = 5$. Dots corresponds to the
performance of the SDP reconstruction method (averaged over $500$
realizations). The continuous curve is the asymptotic analytical
prediction for the Gaussian model (which captures the large-degree behavior).
\label{fig:sdpSBM}}
\end{figure}

Figures \ref{fig:sdpSBM} illustrates our results for the community
detection problem under the hidden partition model of
Eq. \eqref{eq:SBMDefinition}.
Recall that we encode the ground truth by a vector $\bxz\in \{+1,-1\}^n$.
In the present context, an  estimator is  required to return a
partition of the vertices of the graph.
Formally, it is a function on the space of graphs with $n$ vertices
$\cG_n$, namely $\hbx:\cG_n\to \{+1,-1\}^n$, $G\mapsto\hbx(G)$. 
We will measure the performances of such an estimator through the overlap:
\begin{align}
\Ove_n(\hbx)= \frac{1}{n}\E\big\{\big|\<\hbx(G),\bxz\>\big|\big\}\, .
\end{align}
and its asymptotic $n\to\infty$ limit (for which we omit the subscript).
In order to motivate the SDP relaxation we note that the maximum
likelihood estimator partitions $V$ in two sets of equal size as to
minimize the number of edges across the partition (the minimum
bisection problem). Formally
\begin{align}
\xml(G) \equiv \arg\!\!\max_{\bx \in \{+1,-1\}^n}
\Big\{ \sum_{(i,j)\in E}x_ix_j:\;\;\; \<\bx,\bone\>=0\Big\}\, ,
\end{align}
where $\bone = (1,1,\dots, 1)$ is the all-ones vector. 
Once more, this problem is hard to approximate \cite{khot2006ruling}, which motivates the
following SDP relaxation:
\begin{align}
\mbox{maximize} & \;\;\;\sum_{(i,j)\in E}X_{ij}\, ,\label{eq:SDPpart}\\
\mbox{subject to} & \;\;\; \bX\succeq 0\, ,
\; \;\;\; \bX\bone = \bzero\, ,\;\;\;\; X_{ii}=1 \;\; \forall i\in [n]\, .\nonumber
\end{align}
Given an optimizer $\bX_\sopt= \bX_\sopt(G)$, we extract a partition of the
vertices $V$ as follows. As for the $\Z_2$ synchronization problem, we
compute the principal eigenvector $\bv_1(\bX_\sopt)$. We then partition
$V$ according to the sign of $\bv_1(\bX_{\sopt})$. Formally
\begin{align}
\xsdp(G) = \sign\big(\bv_1(\bX_\sopt(G))\big)\, .
\end{align}
Let us emphasize a few features of Figure \ref{fig:sdpSBM}:
\begin{enumerate}
\item\emph{Accuracy of GOE theory.} The continuous curve of Figure \ref{fig:sdpSBM} reports
  the analytical prediction within the $\Z_2$ synchronization model,
  with Gaussian noise (the GOE theory). This can be shown to capture
  the large degree limit: $d=(a+b)/2\to\infty$, with
  $\lambda=(a-b)/\sqrt{2(a+b)}$ fixed. However, it 
  describes well the empirical
  results for sparse graphs of average degree as small as $d=5$.
\item\emph{Superiority of SDP to PCA.} A sequence of recent papers
 (see \cite{krzakala2013spectral} and references therein) demonstrate that classical spectral methods --such as PCA--
  fail to detect the hidden partition in graphs with bounded average
  degree. 
In contrast,  Figure \ref{fig:sdpSBM} shows that a standard SDP
relaxation does not break down in the sparse regime. See \cite{montanari2015semidefinite}
for rigorous evidence towards the same conclusion.
\item\emph{Near optimality of SDP.} As proven in \cite{mossel2012stochastic}, no
  estimator can achieve $\Ove_n(\hbx)\ge \delta>0$ as $n\to \infty$,
  if $\lambda=(a-b)/\sqrt{2(a+b)}<1$. 
\end{enumerate}
\begin{figure}[t!]
\hspace{3cm}\includegraphics[width=.65\textwidth]{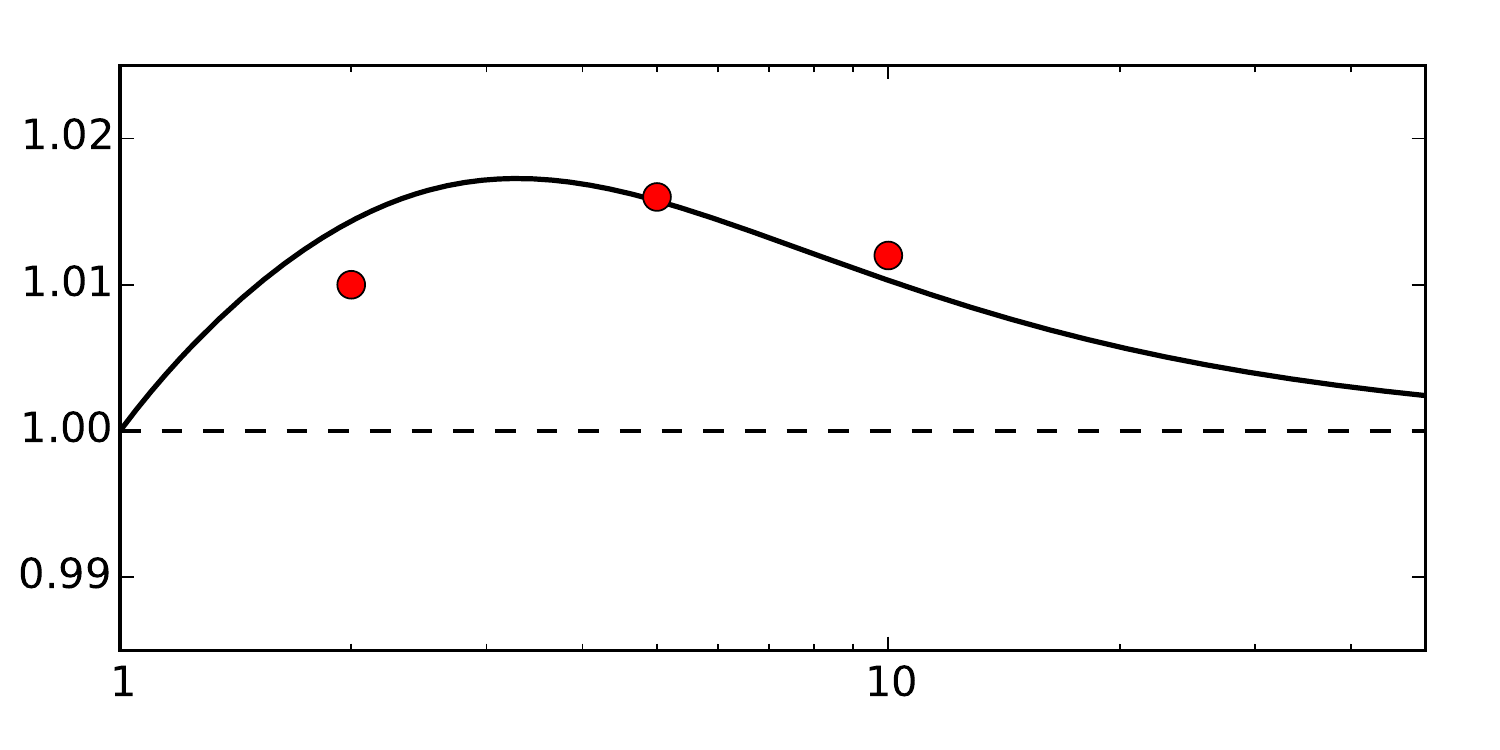}
\put(-307,55){{\small $\lsdp$}}
\put(-150,0){{\small $d$}}
\caption{Continuous line:  prediction for the critical point of the SDP
  estimator (in terms of the rescaled parameter
  $\lambda=(a-b)/\sqrt{2(a+b)}$). Dashed line: ideal phase transition
  $\lambda=1$. Red circles: numerical estimates of the phase
  transition location for $d=2$, $5$, $10$. \label{fig:sdp_sparse}}
\end{figure}
Figure \ref{fig:sdpSBM} (and the theory developed in the next section) 
suggests that SDP has a phase transition threshold. Namely, there exists
$\lsdp = \lsdp(d)$ such that, if
\begin{align}
\lambda = \frac{a-b}{\sqrt{2(a+b)}}\ge \lsdp\big(d = (a+b)/2\big)\, . \label{eq:lsdp}
\end{align}
then SDP achieves overlap bounded away from zero: $\Ove(\xsdp)>0$. 
The figure also suggests $\lsdp(5) \approx \lBayes = 1$, i.e. SDP is nearly
optimal.

Below we will derive an accurate approximation  for the critical
point $\lsdp(d)$. The factor $\lsdp(d)$ measures the sub-optimality of SDP for
graphs of average degree $d$. 

Figure \ref{fig:sdp_sparse} plots our prediction for the function 
$\lsdp(d)$, together with empirically determined values for 
this threshold, obtained through Monte Carlo experiments for $d\in
\{2,5,10\}$ (red circles). These were obtained by running the SDP
estimator on randomly generated graphs with size up to
$n=64,000$ (total CPU time was about $10$ years). 
In particular, we obtain $\lsdp(d)>1$ strictly, but
the gap $\lsdp(d)-1$  is very small (at most of the order of $2\%$) for
all $d$. This confirms in a precise quantitative way the conclusion
that SDP is nearly optimal for the hidden partition problem.

Simulations results are in broad agreement with our predictions,
but present small discrepancies (below $0.5\%$). These discrepancies
might be due to the extrapolation form finite-$n$ simulations to
$n\to\infty$, or to the inaccuracy of our analytical calculation.

%
%************************************************************
%
\section{Analytical results}

Our analysis is based on a connection with statistical mechanics. 
The models arising from this connection are spin
models in the so-called `large-$N$' limit, a topic of intense study
across statistical mechanics and quantum field theory \cite{brezin1993large}.
Here we exploit this connection to apply non-rigorous but sophisticated tools from 
the theory of mean field spin glasses \cite{MezardMontanari,SpinGlass}. 
The paper \cite{montanari2015semidefinite} provides partial rigorous
evidence towards the predictions developed here. 

We will first  focus on the  simpler problem of synchronization
under Gaussian noise, treating together the $\Z_2$ and $U(1)$ 
case. We will then discuss the new features arising within the sparse
hidden partition problem. Most technical derivations are presented in
the SI. 
In order to treat the real ($\Z_2$) and complex ($U(1)$) cases
jointly, we will use $\field$ to denote any of the fields of reals or
complex numbers, i.e. either $\reals$ or $\complex$.

\subsection{Gibbs measures, vector spin models}

We start by recalling that a matrix $\bX\in \field^{n\times n}$ is
PSD if and only if it can be written as $\bX = \bsigma\bsigma^*$
for some $\bsigma\in \field^{n\times m}$.
Indeed, without loss of generality, one can take $m=n$,  and  any
$m\ge n$ is equivalent.

 Letting $\bsigma_1$, \dots
$\bsigma_n\in\field^m$ be the rows of $\bsigma$, the SDP \eqref{eq:SDP}
can be rewritten as
\begin{align}
\mbox{maximize} & \;\;\;\sum_{(i,j)}Y_{ij}\<\bsigma_i,\bsigma_j\>\, ,\label{eq:NonConvex}\\
\mbox{subject to} & \;\;\; \bsigma_i\in S^{m-1}\;\; \forall i\in [n]\, ,\nonumber
\end{align}
with $S^{m-1} = \{\bz\in \field^m:\; \|\bz\|_2 = 1\}$ the unit sphere
in $m$ dimensions. The SDP relaxation corresponds to any case $m\ge n$ or
--following the physics parlance-- $m=\infty$. Note however that cases
with bounded (small) $m$ are of independent interest. In particular,
for $m=1$ we have $\bsigma_i\in\{-1,+1\}$ (for the real case) or
$\bsigma_i\in U(1)\subset \complex$ (for the complex case). Hence we
recover the maximum-likelihood estimator setting $m=1$.
It is also known that, for $m> \sqrt{2n}$, the problem
\eqref{eq:NonConvex} has no local optima except the global ones \cite{burer2003nonlinear}.

A crucial question is how the solution of \eqref{eq:NonConvex}
depends on the spin dimensionality $m$, for $m\ll n$. Denote by $\OPT(\bY;m)$ the
optimum value when the dimension is $m$ (in particular $\OPT(\bY;m)$
is also the value of \eqref{eq:SDP} for $m\ge n$). It was proven in
\cite{montanari2015semidefinite} that there exists a constant $C$ independent of $m$ and $n$
such that
\begin{align}
\Big(1-\frac{C}{m}\Big)\, \OPT(\bY;\infty)\le \OPT(\bY;m)\le
  \OPT(\bY;\infty)\, ,\label{eq:Gro}
\end{align}
with probability converging to one as $n\to\infty$ (whereby $\bY$ is chosen with any
of the distributions studied in the present paper).
The upper bound in Eq.~\eqref{eq:Gro} follows immediately from the
definition. The lower bound  is a generalization of the celebrated
Grothendieck inequality from functional analysis \cite{khot2012grothendieck}.

The above inequalities imply that
we can obtain  information about the SDP \eqref{eq:SDP} in the
$n\to\infty$ limit, by taking $m\to\infty$ \emph{after} $n\to\infty$.
This is the asymptotic regime usually studied in physics under the term
`large-$N$ limit.'

Finally, we can associate to the problem \eqref{eq:NonConvex} 
a finite-temperature Gibbs measure as follows:
\begin{align}
p_{\beta,m}(\de \bsigma) = \frac{1}{Z}\exp\Big\{2m\beta \sum_{i<j} \Re(Y_{ij}
  \<\bsigma_i,\bsigma_j\>)\Big\}\, \prod_{i=1}^np_0(\de \bsigma_i)\, . \label{eq:GibbsModel}
\end{align}
where $p_0(\de \bsigma_i)$ is the uniform measure over the
$m$-dimensional sphere $S^{m-1}$, and $\Re(z)$ denotes the real part of $z$.
This allows to treat in a unified framework all of the estimators
introduced above. 
The optimization problem \eqref{eq:NonConvex}  is recovered by taking
the limit $\beta\to\infty$ (with maximum likelihood for $m=1$ and SDP 
for $m\to\infty$). The Bayes-optimal estimator is recovered by setting
$m=1$
and $\beta=\lambda/2$ (in the real case) or $\beta = \lambda$ (in the
complex case).

\subsection{Cavity method: $\Z_2$ and $U(1)$ synchronization}

The cavity method from spin-glass theory can be used to
analyze the asymptotic structure of the Gibbs measure
\eqref{eq:GibbsModel}
as $n\to\infty$. Below we will state the
predictions of our approach for the SDP estimator $\xsdp$.

Here we list the main steps of our analysis for the expert reader,
deferring a complete derivation to the SI: 
\begin{itemize}
\item[$(i)$] We use the cavity
method to derive the `replica symmetric' predictions for the model
\eqref{eq:GibbsModel} in the limit $n\to\infty$. 
\item[$(ii)$] By setting $m=1$,
$\beta=\lambda/2$ (in the real case) 
or $\beta = \lambda$ (in the
complex case) we obtain the Bayes-optimal error $\MSE(\xBayes)$: on
the basis of \cite{deshpande2015asymptotic}, we expect the replica symmetric assumption to
hold, and these predictions to be exact. (See also
\cite{lesieur2015mmse} for related work.)
\item[$(iii)$] By setting $m=1$ and
$\beta\to\infty$ we obtain a prediction for the error of maximum
likelihood estimation $\MSE(\xml)$. While this prediction is not
expected to be exact (because of replica symmetry breaking),
it should be nevertheless rather accurate, especially for large $\lambda$.
\item[$(iv)$] By setting $m\to\infty$ and $\beta\to\infty$, we obtain the SDP
estimation error $\MSE(\xsdp)$, which is our main object of interest. Notice that the inversion of limits
$m\to\infty$ and $n\to\infty$ is justified (at the level of objective
value) by Grothendieck inequality. Further, since the $m=\infty$ case
is equivalent to a convex program, we expect the replica symmetric
prediction to be exact in this case.
\end{itemize}

The properties of the SDP estimator are given in terms of the solution
of a set 3 non-linear equations for the 3 scalar parameters 
$\mu$, $q$, $b\in\reals$, that we state next. Let $Z\sim \normal(0,1)$
(in the real case) or $Z\sim \cnormal(0,1)$ (in the complex case).
Define $\rho = \rho(Z;\mu,q,r)$ as the only non-negative
solution of the following equation in $(0,\infty)$:
\begin{align}
1=\frac{|\mu+\sqrt{q}Z|^2}{(\rho+r)^2} +
  \frac{1-q}{\rho^2}\, .
\end{align}
Then $\mu$, $q$, $r$ satisfy
\begin{align}
\mu & = \lambda \E\left\{\frac{\mu+\sqrt{q}\, \Re(Z)}{\rho+r}\right\}\, ,\;\;\;\;\;\;\;
q  = \E\left\{\frac{|\mu+\sqrt{q}\,
    Z|^2}{(\rho+r)^2}\right\}\, ,\label{eq:SDP_FP1}\\
r & =\E\Big\{\frac{1}{\rho}-
      \frac{\mu}{\sqrt{q}}\,\frac{\Re(Z)}{\rho+r}
      -\frac{|Z|^2}{\rho+r}\Big\}\, . \label{eq:SDP_FP3}
\end{align}
These equations can be solved by iteration, after approximating the
expectations  on the right-hand side numerically.
The properties of the SDP estimator can be derived from this
solution. Concretely, we have
\begin{align}
\MSE(\xsdp) = 1-\frac{\mu(\lambda)^2}{\lambda^2 q(\lambda)}\, .
\end{align}
The corresponding curve is reported in Figure \ref{fig:sdpZ2} for the
real case $\Group = \Z_2$. We can also obtain the asymptotic overlap
from the solution of these equations. The cavity prediction is
\begin{align}
\Ove(\xsdp)= 1-2\Phi\Big(-\frac{\mu(\lambda)}{\sqrt{q(\lambda)}}\Big)\, .
\end{align}
The corresponding curve is plotted in Figure \ref{fig:sdpSBM}.

More generally, 
for any dimension $m$, and inverse temperature $\beta$, we obtain
equations that are analogous to Eqs.~\eqref{eq:SDP_FP1}-\eqref{eq:SDP_FP3}.
The parameters $\mu, q, b$ characterize the asymptotic
structure of the probability measure $p_{\beta,m}(\de\bsigma)$ defined
in Eq.~\eqref{eq:GibbsModel}, as follows. We assume, for simplicity
$\bxz = (+1,\dots,+1)$. 
Define the following probability measure on 
unit sphere $S^{m-1}$, parametrized by $\bxi\in\reals^m$, $r\in\reals$
\begin{align}
\nu_{\bxi,r} (\de\bsigma)=
\frac{1}{z(\bxi,r)}\,
\exp\Big\{2\beta\, m\Re\<\bxi,\bsigma\> - \beta\,
  m\, r |\sigma_1|^2 \Big\}\, p_0(\de\bsigma)\, ,\label{eq:Generalnu}
\end{align}
For $\nu$ a probability measure on $S^{m-1}$ and $R$ an orthogonal (or
unitary) matrix, let $\nu^R$ be the measure obtained
by\footnote{Formally, $\nu^R(\bsigma\in A) \equiv \nu(R^{-1}\bsigma
  \in A)$.} `rotating' $\nu$. Finally, let $p^{(m,\beta)}_{i(1),\dots,i(k)}$ denote
the joint distribution of $\bsigma_{i(1)},\cdots,\bsigma_{i(k)}$ under $p_{m,\beta}$.
Then, for any fixed $k$, and any sequence of $k$-uples $(i(1),\dots,i(k))_n\in [n]$, we
have
\begin{align}
p^{(m,\beta)}_{i(1),\dots,i(k)}\Rightarrow 
\int \nu^R_{\bxi_1,r} (\,\cdot\,)\times  \cdots\times \nu^R_{\bxi_k,r} (\,\cdot\,)\,
  \de R
\end{align}
Here $\de R$ denotes the uniform (Haar) measure on the orthogonal
group, $\Rightarrow$ denotes convergence in distribution (note that
$p^{(m,\beta)}_{i(1),\dots,i(k)}$ is a random variable), and 
$\bxi_1,\dots,\bxi_k\sim_{iid} \normal(\mu\, \bfe_1,\bQ)$ with 
$\bQ = \diag(q,q_0\dots,q_0)$, $q_0 = (1-q)/(m-1)$.

\subsection{Cavity method: Community detection in sparse graphs}

We next consider the hidden partition model, defined
by Eq.~\eqref{eq:SBMDefinition}. As above, we denote by $d=(a+b)/2$ the
asymptotic average degree of the graph $G$,
and by $\lambda = (a-b)/\sqrt{2(a+b)}$ the `signal-to-noise' ratio.
As illustrated by  Figure \ref{fig:sdpSBM} (and further simulations presented in SI),
$\Z_2$ synchronization appears to be a very accurate approximation for
the hidden partition model already
at moderate $d$. 

The main change with respect to the dense case is that the
phase transition at $\lambda = 1$, is slightly shifted, as
per Eq.~\eqref{eq:lsdp}. 
Namely, SDP can detect the hidden partition with high probability 
if and only if $\lambda\ge \lsdp(d)$, for some $\lsdp(d)>1$.

Our prediction for the curve $\lsdp(d)$ will be denoted by $\tlsdp(d)$
and is plotted in Figure \ref{fig:sdp_sparse}. It is obtained by
finding an approximate solution of the RS cavity equations.
We see that $\tlsdp(d)$  approaches very quickly the ideal value
$\lambda=1$ for $d\to\infty$. Indeed our prediction implies
\begin{align}
\tlsdp(d) = 1+\frac{1}{8d} + O(d^{-2})\, .
\end{align}
Also, $\tlsdp(d)\to 1$ as $d\to 1$. This is to be expected because the
constraints $a\ge b\ge 0$ imply $(a-b)/2\le d$, with $b=0$ at $(a-b)/2
=d$.
Hence the community detection problem becomes trivial at $(a-b)/2=d$:
it is sufficient to identify the connected components in $G$. This
implies the bound $\lsdp(d)\le \sqrt{d}$.

More interestingly, $\tlsdp(d)$ admits a characterization in terms of 
a distributional recursion, that can be evaluated numerically, and is
plotted as a continuous line in Figure \ref{fig:sdp_sparse}. Surprisingly, \emph{the SDP
detection threshold appears to be sub-optimal at most by  $2\,\%$.}
In order to state this characterization, consider first the recursive
distributional equation (RDE)
\begin{align}
\cond \ed \sum_{i=1}^L\frac{\cond_i}{1+\cond_i}\, .\label{eq:firstRDE}
\end{align}
Here $\ed$ denotes equality in distribution, $L\sim \Poisson(d)$ and
$\cond_1,\dots,\cond_L$ are i.i.d.  copies of $\cond$. This has to be read as an
equation for the law of the random variable $\cond$ (see, e.g., \cite{aldous2005survey}
for further background on RDEs). 
We are interested in a specific solution of this equation, which can
be constructed as follows. Set $c^{0} = \infty$ almost surely, and,
for $\ell\ge 0$, let $\cond^{\ell+1} \ed
\sum_{i=1}^L\cond_i^{\ell}/(1+\cond^{\ell}_i)$. It is proved in \cite{lyons1997unsolved}
 that the resulting sequence of random variables converges in
distribution to a solution of Eq.~(\ref{eq:firstRDE}): $\cond^{\ell}
\stackrel{{\rm d}}{\Rightarrow} \cond_*$.

The quantity $\cond_*$ has a beautiful interpretation. Consider a
(rooted) Galton-Watson tree with offspring distribution $d$, and
imagine each edge to be a conductor with conductance equal to
one. Then
$\cond_*$ is the total conductance between the root, and the boundary
of the tree `at infinity.' In particular, $\cond_*=0$ almost surely
for $d\le 1$, and $\cond_*>0$ with positive probability if $d>1$ (see
\cite{lyons2013probability} and SI). 

Next consider the distributional recursion 
\begin{align}
\big(\cond^{\ell+1};h^{\ell+1}\big)\ed
  \Big(\sum_{i=1}^{L_++L_-}\frac{\cond^{\ell}_i}{1+\cond^{\ell}_i}
  ;\sum_{i=1}^{L_++L_-}\frac{s_ih^{\ell}_i}{\sqrt{1+\cond^{\ell}_i}}
\Big)\, ,\label{eq:Stability}
\end{align}
where $s_1,\dots, s_{L_+} = +1$,
$s_{L_++1},\dots, s_{L_++L_-} = -1$, and we use
 initialization $(\cond^{0},h^{0}) = (+\infty,1)$. This recursion determines
sequentially the distribution of $(\cond^{\ell+1},h^{\ell+1})$ from the
distribution of $(\cond^{\ell},h^{\ell})$. Here $L_+\sim \Poisson((d+\lambda)/2)$,
$L_-\sim\Poisson((d-\lambda)/2)$, and
$(\cond^{\ell}_1,h^{\ell}_1),\dots,(\cond^{\ell}_{L},h^{\ell}_{L})$ are i.i.d. copies
of $(\cond^{\ell},h^{\ell})$, independent of $L_+,L_-$. Notice that since
$L_++L_-\sim\Poisson(d)$, we have $\cond^{\ell}\stackrel{{\rm d}}{\Rightarrow} \cond_*$. The threshold
$\tlsdp(d)$ is defined as the smallest $\lambda$ such that the $h^t$
`diverges exponentially':
\begin{align}
\tlsdp(d) \equiv \inf\Big\{\lambda\in [0,\sqrt{d}] :\,
  \lim\inf_{t\to\infty}\frac{1}{t}\log \E(|h^{t}|^2) > 0\Big\}\, .
\end{align}
This value can be computed numerically, for instance by sampling the
recursion \eqref{eq:Stability}. The results of such an
evaluation  are plotted as a continuous line in Figure \ref{fig:sdp_sparse}.
%
%************************************************************
%
\section{Final algorithmic considerations}

We have shown that ideas from statistical mechanics can be used
to precisely locate phase transitions in SDP relaxations for
high-dimensional statistical problems. In the problems investigated
here, we find that SDP relaxations have optimal thresholds (in $\Z_2$
and $U(1)$ synchronization) or nearly-optimal thresholds (in community
detection under the hidden partition model). Here `near-optimality' is
to be interpreted in a precise quantitative sense: SDP's threshold is
sub-optimal --at most-- by a $2\,\%$ factor. 
As such SDPs provide a very useful tool for designing computationally
efficient
algorithms, that are also statistically efficient.

Let us emphasize that other polynomial-time algorithms can be used
for the specific problems studied here. In the synchronization problem, 
naive PCA achieves the optimal threshold $\lambda=1$.
In the community detection problem, several authors
recently developed ingenious spectral algorithms that achieve the
information theoretically optimal threshold $(a-b)/\sqrt{2(a+b)} = 1$,
see e.g.
\cite{decelle2011asymptotic,krzakala2013spectral,massoulie2014community,mossel2013proof,saade2014spectral}.

However, SDP relaxations have the important feature of being robust to
model  miss-specifications (see also \cite{moitra2015robust} for an
independent investigation of robustness issues). In order to illustrate this point, we
perturbed the hidden partition model as follows. For a perturbation
level $\alpha\in [0,1]$, we draw $n\alpha$ vertices
$i_1,\dots,i_{n\alpha}$ uniformly at
random in $G$. For each such vertex $i_{\ell}$ we connect by edges all
the neighbors of $i_{\ell}$. In our case, this results in adding
$O(nd^2\alpha)$ edges. 

In the SI, we compare the behavior of SDP and the Bethe Hessian
algorithm of \cite{saade2014spectral} for this perturbed model: while SDP
appears to be rather insensitive to the perturbation, the performance
of Bethe Hessian are severely degraded by it. We expect a similar
fragility to arise in other spectral algorithms. 

%
%*********
%
\section*{Acknowledgments}
A.J. and A.M. were partially supported by NSF grants CCF-1319979 and DMS-1106627 and the
AFOSR grant FA9550-13-1-0036.

\newpage

\bibliographystyle{amsalpha}

\newcommand{\etalchar}[1]{$^{#1}$}
\providecommand{\bysame}{\leavevmode\hbox to3em{\hrulefill}\thinspace}
\providecommand{\MR}{\relax\ifhmode\unskip\space\fi MR }
% \MRhref is called by the amsart/book/proc definition of \MR.
\providecommand{\MRhref}[2]{%
  \href{http://www.ams.org/mathscinet-getitem?mr=#1}{#2}
}
\providecommand{\href}[2]{#2}

\newpage

\phantom{a}

\vspace{0.5cm}

\begin{center}
{\Large \bf Supplementary Information}
\end{center}

\vspace{1cm}

Most of the derivations in this documents are based on non-rigorous method from
statistical physics. All the results that are rigorously proved
will be stated as lemmas, propositions, and so on.

\section{Notations}

\subsection{General notations}

%
% Algebra
%
We will often treat $\Z_2$ and $U(1)$ synchronization
simultaneously.
Throughout $\field = \reals$ or $\field = \complex$ 
depending on whether we are treating the real case ($\Z_2$
synchronization) or the complex case ($U(1)$ synchronization).

We let $S^{m-1}$ denotes the radius one sphere in 
$\reals^m$ or $\complex^m$ depending on the context.
Namely $S^{m-1} = \{\bz\in \field^m:\, \|\bz\|_2=1\}$. In particular
$S^0= \{+1,-1\}$ in the real case, and $S^0= \{z\in\complex: |z|=1\}$
in the complex case.

Some of our formulae depends upon the domain that we are  considering (real or
complex). In order to write them in a compact form,
we introduce the notation $s_\Group = 1$ for $\Group = \Z_2$, and
 $s_\Group = 2$ for $\Group = U(1)$. 

%
% Probability
%
%We use special symbols for special probability laws.

We write $X\sim\Poisson(a)$ to indicate that $X$ is a Poisson random variable with mean $a$.
A Gaussian random vector $\bz$  with mean $\ba = \E(\bz)$ and
covariance $\bC = \E((\bz-\ba)(\bz-\ba)^*)$
is denoted by $\bz\sim \normal(\ba,\bC)$. 
Note that in the complex case, this means that  $\bC$ is Hermitian and 
 $\E((\bz-\ba)(\bz-\ba)^{\sT}) = 0$. 
Occasionally, we will write $\bz\sim\cnormal(\ba,\cC)$ for complex
Gaussians, whenever it is useful to emphasize that $\bz$ is complex.

The standard Gaussian density is denoted by $\phi(x) =
e^{-x^2/2}/\sqrt{2\pi}$, and the  Gaussian distribution by $\Phi(x)
=\int_{-\infty}^x \phi(t) \,\de t$.

Given two un-normalized measures  $p$ and $q$ on the same space,
we write $p(s) \normeq q(s)$ if they are equal up to an overall
normalization constant. We use $\doteq$ to denote equality up to
subexponential factors, i.e. $f(n)\doteq g(n)$ if
$\lim_{n\to\infty}n^{-1}\log [f(n)/g(n)] =0$.

\subsection{Estimation metrics}

We recall the definition of some estimation metrics used in the main
text. For the sake of uniformity, we consider estimators
$\hbx:\reals^{n\times n}\to\reals^n$.

It is convenient to define a \emph{scaled \MSE}, with scaling factor
$c\in \field$:
\begin{align}
\MSE_n(\hbx;c) \equiv
  \frac{1}{n}\E\big\{\min_{s\in S^0}\big\|\bxz-s\,
  c\,\hbx(\bY)\big\|^2_2\big\} \, . \label{eq:ScaledMSEdef}
\end{align}
We also define  the overlap as follows in the real case
\begin{align}
\Ove_n(\hx) \equiv
  \frac{1}{n}\E\big\{|\<\sign(\hbx(\bY)),\bxz\>|\big\} \, .
\end{align}
In the complex case, we replace $\sign(z)$ by $z/|z|$ (defined to be
$0$ at $z=0$):
\begin{align}
\Ove_n(\hx) \equiv
  \frac{1}{n}\E\left\{\left|\sum_{i=1}^n\frac{\hx_i(\bY)}{|\hx_i(\bY)|}\,
  x_{0,i}\right|\right\} \, .\label{eq:OverlapDef}
\end{align}
This formula applies to the real case as well.
(Note that, in the main text, we defined the overlap only for
estimators taking values in $\{+1,-1\}^n$, in the real
case. Throughout these notes, we generalize  that definition for the
sake of uniformity.)

We omit the subscript $n$ to refer to the $n\to\infty$ limit of these
quantities.

\section{Preliminary facts}

\subsection{Some estimation identities}

\begin{lemma}\label{lemma:Expectation}
Let $p_0(\,\cdot\,)$ be a probability measure on the real line
$\reals$, symmetric around $0$ (i.e. $p_0((a,b)) = p_0((-b,-a))$ for
any interval $(a,b)$). For $\gamma\ge 0$, define $f:\reals\to\reals$ as
\begin{align}
f(y;\gamma) \equiv \frac{\int \sigma\,
  e^{\sqrt{\gamma}y\sigma-\frac{1}{2}\gamma\sigma^2}\, p_0(\de\sigma)}
{\int  e^{\sqrt{\gamma}y\sigma-\frac{1}{2}\gamma\sigma^2}\,
  p_0(\de\sigma)}\, .
\end{align}
Then we have the identity
\begin{align}
\E\{|\sigma|\, f(\sqrt{\gamma}|\sigma| +Z;\gamma)\} =\E\big\{
  f(\sqrt{\gamma}|\sigma| +Z;\gamma)^2\big\} \, . \label{eq:ExpectationTower}
\end{align}
where the expectation is with respect to the independent random
variables  $Z\sim\normal(0,1)$, and $\sigma\sim p_0(\,\cdot\,)$.

Analogously, let $p_0(\,\cdot\,)$ be a probability measure on
$\complex$, symmetric under rotations (i.e. $p_0(e^{i\theta}R) =
p_0(R)$, $p_0(R) = p_0(R^*)$ for
any Borel set $R\subseteq \complex$ and any $\theta\in (0,2\pi]$).
For $\gamma\ge 0$, define $f:\complex\to\reals$ as
\begin{align}
f(y;\gamma) \equiv \frac{\int \sigma\,
  e^{2\sqrt{\gamma}\Re(y^*\sigma)-\gamma|\sigma|^2}\, p_0(\de\sigma)}
{\int  e^{2\sqrt{\gamma}\Re(y^*\sigma)-\gamma|\sigma|^2}\,\label{eq:ExpectationTowerComplex}
  p_0(\de\sigma)}\, .
\end{align}
Then we have the identity (with $Z\sim \cnormal(0,1)$ a complex normal)
\begin{align}
\E\{|\sigma|\, f(\sqrt{\gamma}|\sigma| +Z;\gamma)\} =\E\big\{
  |f(\sqrt{\gamma}|\sigma| +Z;\gamma)|^2\big\} \, .
\end{align}
\end{lemma}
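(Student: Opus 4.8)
The plan is to recognize the identity \eqref{eq:ExpectationTower} as a ``Nishimori''-type (or tower/self-averaging) identity for the scalar denoising channel, and to prove it by interpreting $f(y;\gamma)$ as a posterior expectation and invoking the fact that the posterior mean is an $L^2$ projection. Concretely, set up the Bayesian picture: let $\sigma\sim p_0$ and, conditionally on $\sigma$, let $Y = \sqrt{\gamma}\,\sigma + Z$ with $Z\sim\normal(0,1)$ independent. A direct computation of the posterior density shows that $\E[\sigma\mid Y=y] = f(y;\gamma)$, since Bayes' rule gives posterior weight proportional to $e^{-\frac12(y-\sqrt{\gamma}\sigma)^2}p_0(\de\sigma) \normeq e^{\sqrt{\gamma}y\sigma - \frac12\gamma\sigma^2}p_0(\de\sigma)$, matching the definition of $f$. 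Then the tower property of conditional expectation gives $\E\{\sigma\, f(Y;\gamma)\} = \E\{\E[\sigma\mid Y]\, f(Y;\gamma)\} = \E\{f(Y;\gamma)^2\}$, which is the desired identity \emph{except} that the left-hand side of \eqref{eq:ExpectationTower} has $|\sigma|$ and the argument of $f$ is $\sqrt{\gamma}|\sigma|+Z$ rather than $\sqrt{\gamma}\sigma+Z$.

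The reconciliation of that discrepancy is the one genuinely non-routine step, and it rests on the symmetry hypothesis $p_0((a,b)) = p_0((-b,-a))$. First I would record the parity of $f$: since $p_0$ is symmetric about $0$, the substitution $\sigma\mapsto-\sigma$ in both integrals defining $f$ shows $f(-y;\gamma) = -f(y;\gamma)$, so in particular $f$ is an odd function and $\sigma f(\sqrt{\gamma}\sigma+Z;\gamma)$ is invariant under $(\sigma,Z)\mapsto(-\sigma,-Z)$, while $f(\sqrt{\gamma}\sigma+Z;\gamma)^2$ is also invariant under this sign flip. Now decompose the expectation over $\sigma$ according to the sign of $\sigma$: conditioning on $\sigma>0$ vs. $\sigma<0$ and using the $p_0$-symmetry together with the symmetry of the standard Gaussian law of $Z$, one checks that $\E\{\sigma f(\sqrt\gamma\sigma+Z)\} = \E\{|\sigma| f(\sqrt\gamma|\sigma|+Z)\}$ and likewise $\E\{f(\sqrt\gamma\sigma+Z)^2\} = \E\{f(\sqrt\gamma|\sigma|+Z)^2\}$; i.e. replacing $\sigma$ by $|\sigma|$ throughout is neutral under the stated symmetry. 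Combining this with the Bayes/tower computation of the previous paragraph yields \eqref{eq:ExpectationTower}.

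For the complex statement the argument is structurally identical. One sets $Y = \sqrt{\gamma}\,\sigma + Z$ with $Z\sim\cnormal(0,1)$, computes that the posterior weight is proportional to $\exp\{-|y-\sqrt\gamma\sigma|^2\}p_0(\de\sigma) \normeq \exp\{2\sqrt\gamma\,\Re(y^*\sigma) - \gamma|\sigma|^2\}p_0(\de\sigma)$ (using $|y-\sqrt\gamma\sigma|^2 = |y|^2 - 2\sqrt\gamma\Re(y^*\sigma) + \gamma|\sigma|^2$ and that $\cnormal(0,1)$ has density $\propto e^{-|z|^2}$), so again $f(y;\gamma) = \E[\sigma\mid Y=y]$. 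The tower property gives $\E\{\overline{\sigma}\, f(Y;\gamma)\} = \E\{|f(Y;\gamma)|^2\}$; since $f$ is real-valued and, by the rotational symmetry of $p_0$, equivariant in the sense $f(e^{i\theta}y;\gamma) = e^{i\theta}f(y;\gamma)$ — wait, more carefully, $f$ maps to $\reals$ only after the symmetrization, so I would instead argue directly that the rotational invariance $p_0(e^{i\theta}R)=p_0(R)$ and the rotational invariance of the law of $Z$ let one average over the global phase of $\sigma$, reducing $\E\{\overline\sigma f(\sqrt\gamma\sigma+Z)\}$ to $\E\{|\sigma| f(\sqrt\gamma|\sigma|+Z)\}$ and $\E\{|f(\sqrt\gamma\sigma+Z)|^2\}$ to $\E\{|f(\sqrt\gamma|\sigma|+Z)|^2\}$. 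The main obstacle, as noted, is not any of these individual steps but organizing the symmetry bookkeeping cleanly so that the substitution $\sigma\to|\sigma|$ inside $f$'s argument is rigorously justified in both cases; once the parity/equivariance of $f$ is in hand this is a short verification.
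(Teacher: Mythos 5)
Your proposal is correct and follows essentially the same route as the paper's proof: interpret $f(y;\gamma)$ as the posterior mean $\E[\sigma\mid Y=y]$ in the scalar Gaussian channel, apply the tower property of conditional expectation to get $\E\{\sigma f(Y)\}=\E\{f(Y)^2\}$, and then use the symmetry of $p_0$ (equivalently the oddness of $f$, resp.~rotational equivariance in the complex case) to justify replacing $\sigma$ by $|\sigma|$ in the argument of $f$. Your explicit decomposition over the sign of $\sigma$ and your phase-averaging argument in the complex case are slightly more detailed bookkeeping than what the paper writes, but they realize exactly the step the paper summarizes as ``follows by exploiting the symmetry of $p_0$, which implies $f(-y;\gamma)=-f(y;\gamma)$.''
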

\begin{proof}
Consider, to be definite, the real case, and define the observation
model
\begin{align}
Y = \sqrt{\gamma}\, \sigma +Z\, ,
\end{align}
where $\sigma\sim p_0(\,\cdot\,)$ independent of the noise
$Z\sim\normal(0,1)$. Then a straightforward calculation shows that
\begin{align}
f(y;\gamma) = \E\{\sigma|Y=y\}\, .
\end{align}
Then, by the tower property of conditional expectation
$\E\{\sigma\, f(Y;\gamma)\} = \E\{f(Y;\gamma)^2\}$ or,
equivalently
\begin{align}
\E\{\sigma\, f(\sqrt{\gamma}\,\sigma+Z;\gamma)\} =
  \E\{f(\sqrt{\gamma}\,\sigma+Z;\gamma)^2\}\, .
\end{align}
The identity (\ref{eq:ExpectationTower}) follows by exploiting the
symmetry of $p_0$, which implies $f(-y;\gamma) = -f(y;\gamma)$.

The proof follows a similar argument in the complex case.
\end{proof}

We apply the above lemma to specific cases that will be of interest to us.
Below, $\BI_k(z)$ denotes the modified Bessel function of the second
kind. Explicitly, for $k$ integer, we have the integral representation
\begin{align}
\BI_k(z) = \frac{1}{2\pi}\int_{0}^{2\pi}e^{z\cos\theta}\cos(k\theta)\,
  \de\theta\, .
\end{align}
\begin{corollary}\label{coro:Identities}
For any $\gamma\ge 0$, we have the identities
\begin{align}
\E\big\{\tanh(\gamma+\sqrt{\gamma}\, Z)\big\} &= 
\E\big\{\tanh(\gamma+\sqrt{\gamma}\, Z)^2\big\}\, ,\label{eq:TanhIdentity}\\
\E\left\{\frac{\gamma+\sqrt{\gamma}\, Z}{|\gamma+\sqrt{\gamma}\, Z|}\,\frac{\BI_1(2|\gamma+\sqrt{\gamma}\, Z|)}{\BI_0(2|\gamma+\sqrt{\gamma}\, Z|)}\right\} &= 
\E\left\{\frac{\BI_1(2|\gamma+\sqrt{\gamma}\, Z|)^2}{\BI_0(2|\gamma+\sqrt{\gamma}\, Z|)^2}\right\}\, ,\label{eq:BesselIdentity}
\end{align}
where the expectation is with respect to $Z\sim\normal(0,1)$ (first line)
or $Z\sim \cnormal(0,1)$ (second line).
\end{corollary}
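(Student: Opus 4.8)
The plan is to deduce Corollary~\ref{coro:Identities} from Lemma~\ref{lemma:Expectation} by specializing the prior $p_0$ in each of the two cases and identifying the resulting posterior-mean function $f(y;\gamma)$ in closed form. For the first identity \eqref{eq:TanhIdentity}, I take $p_0$ to be the uniform measure on $\{+1,-1\}$ in the real case. Then $p_0$ is symmetric about $0$, so Lemma~\ref{lemma:Expectation} applies, and a direct computation gives
\begin{align}
f(y;\gamma)=\frac{e^{\sqrt{\gamma}\,y-\gamma/2}-e^{-\sqrt{\gamma}\,y-\gamma/2}}
{e^{\sqrt{\gamma}\,y-\gamma/2}+e^{-\sqrt{\gamma}\,y-\gamma/2}}
=\tanh(\sqrt{\gamma}\,y)\, .
\end{align}
Since $|\sigma|=1$ almost surely, the left side of \eqref{eq:ExpectationTower} becomes $\E\{\tanh(\sqrt{\gamma}(\sqrt{\gamma}+Z))\}=\E\{\tanh(\gamma+\sqrt{\gamma}Z)\}$, and similarly for the right side, which is exactly \eqref{eq:TanhIdentity}.

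For the second identity \eqref{eq:BesselIdentity}, I take $p_0$ to be the uniform measure on the unit circle $S^0=\{z\in\complex:|z|=1\}$ in the complex case; this measure is rotation-invariant and invariant under conjugation, so the complex part of Lemma~\ref{lemma:Expectation} applies. Writing $\sigma=e^{i\phi}$, $y=|y|e^{i\psi}$, the exponent $2\sqrt{\gamma}\,\Re(y^*\sigma)-\gamma|\sigma|^2$ equals $2\sqrt{\gamma}\,|y|\cos(\phi-\psi)-\gamma$. Using the Bessel integral representation $\BI_k(z)=\frac{1}{2\pi}\int_0^{2\pi}e^{z\cos\theta}\cos(k\theta)\,\de\theta$, both numerator and denominator of $f(y;\gamma)$ reduce (after substituting $\theta=\phi-\psi$ and using parity to kill the $\sin$ term) to Bessel functions, giving
\begin{align}
f(y;\gamma)=\frac{y}{|y|}\,\frac{\BI_1(2\sqrt{\gamma}\,|y|)}{\BI_0(2\sqrt{\gamma}\,|y|)}\, .
\end{align}
Since $|\sigma|=1$, evaluating at $y=\sqrt{\gamma}|\sigma|+Z=\sqrt{\gamma}+Z$ gives $2\sqrt{\gamma}\,|y|=2|\gamma+\sqrt{\gamma}Z|$ and $y/|y|=(\gamma+\sqrt{\gamma}Z)/|\gamma+\sqrt{\gamma}Z|$, so the identity $\E\{|\sigma|f\}=\E\{|f|^2\}$ becomes precisely \eqref{eq:BesselIdentity}.

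The only mildly delicate point is the bookkeeping of the $\sqrt{\gamma}$ factors: Lemma~\ref{lemma:Expectation} uses the parametrization $Y=\sqrt{\gamma}\,\sigma+Z$ and $f$ is a function of $Y$, so one must carefully track that the argument fed to $\tanh$ (resp. to the Bessel ratio) after composing $f$ with $\sqrt{\gamma}|\sigma|+Z$ is $\gamma+\sqrt{\gamma}Z$ (resp. $2|\gamma+\sqrt{\gamma}Z|$) rather than $\sqrt{\gamma}+\sqrt{\gamma}Z$. I do not anticipate any real obstacle; the main work is the elementary evaluation of the two Gaussian-weighted integrals over $S^0$ and recognizing the Bessel functions, which is routine given the stated integral representation.
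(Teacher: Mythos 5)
Your proposal matches the paper's own proof essentially verbatim: both specialize Lemma~\ref{lemma:Expectation} with $p_0 = \frac12\delta_{+1}+\frac12\delta_{-1}$ (real case) to get $f(y;\gamma)=\tanh(\sqrt{\gamma}y)$, and with $p_0$ the uniform measure on the unit circle (complex case) to get $f(y;\gamma)=\frac{y}{|y|}\frac{\BI_1(2\sqrt{\gamma}|y|)}{\BI_0(2\sqrt{\gamma}|y|)}$ via the Bessel integral representation and the vanishing of the odd $\sin$ part. The bookkeeping of the $\sqrt{\gamma}$ factors that you flag is handled implicitly in the paper exactly as you describe.
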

\begin{proof}
These follows from Lemma \ref{lemma:Expectation}. For the first line
we apply the real case~\eqref{eq:ExpectationTower} with  $p_0 =
(1/2)\delta_{+1}+(1/2)\delta_{-1}$, whence
\begin{align}
f(y;\gamma) = \tanh(\sqrt{\gamma}\, y)\, .
\end{align}

For the second line
we apply the complex case~\eqref{eq:ExpectationTowerComplex} with $p_0$ the uniform measure over the
unit circle. Consider the change of variables $y = |y| e^{j\phi}$ and $\sigma = e^{j(\phi+\theta)}$.
Computing the curve integral, we have

\begin{align}
f(y;\gamma) 
&= \frac{\int_0^{2\pi} \, e^{j(\phi+\theta)} e^{2\sqrt{\gamma}|y|\cos(\theta)} \de \theta}
{\int_0^{2\pi}\, e^{2\sqrt{\gamma}y \cos(\theta)} \de \theta}\nonumber\\
&= \frac{e^{j\phi}\int_0^{2\pi} \,  e^{2\sqrt{\gamma}|y|\cos(\theta)}  \cos(\theta) \de \theta}
{\int_0^{2\pi}\, e^{2\sqrt{\gamma}|y| \cos(\theta)} \de \theta}
= \frac{y}{|y|}\frac{\BI_1(2\sqrt{\gamma}|y|)}{\BI_0(2\sqrt{\gamma}|y|)}\, ,
\end{align}
where in the second equality we used the fact that $\int_{0}^{2\pi} e^{2\sqrt{\gamma}|y| \cos(\theta)} \sin(\theta) \de \theta = 0$.
\end{proof}

\section{Analytical results for $\Z_2$ and and $U(1)$ synchronization}\label{sec:DenseGeneral}

As explained in the main text, we are interested in the following
probability measure over $\bsigma =
(\bsigma_1,\bsigma_2,\dots,\bsigma_n)$,
where $\bsigma_i\in S^{m-1}$:
\begin{align}
p_{\beta,m}(\de \bsigma)= \frac{1}{Z_{n,\bY}(\beta,m)}\exp\Big\{2m\beta \sum_{i<j} \Re(Y_{ij}
  \<\bsigma_i,\bsigma_j\>)\Big\}\, \prod_{i=1}^np_0(\de \bsigma_i)\, . \label{eq:GibbsModel}
\end{align}
Here $p_0(\de\bsigma_i)$ is the uniform measure over $\bsigma_i\in
S^{m-1}$. 

%By symmetry we can assume that $\bxz = (+1,+1,\dots,+1)$ and hence 
%observations are distributed according to
%%
%\begin{align}
%%
%Y_{ij} = \frac{\lambda}{n} +W_{ij}\, ,
%%
%\end{align}
%%
%where $W_{ij}\sim \normal(0,1/n)$. \aj{Where do you use this simplifying assumption?}

We define a general $m,\beta$ estimator as follows.
\begin{enumerate}
\item In order to  break the $\cO(m)$ symmetry, we add a term
  $\beta\sum_{i=1}^n\<\bh,\bsigma\>$ in the exponent of
  Eq.~(\ref{eq:GibbsModel}), for $\bh$ an arbitrary small vector. It
  is understood throughout that $\|\bh\|_2\to 0$ after $n\to\infty$. 

As is customary in statistical physics, we will not explicitly carry
out calculations with the perturbation $\bh\neq 0$, but only using
this device to select the relevant solution at $\bh=0$.
\item We compute the expectation
\begin{align}
\bz(\bY;\beta,m) = \int \bsigma \;  p_{\beta,m}(\de \bsigma)\, .\label{eq:BetaEstimator}
\end{align}
Note that for $\beta\to\infty$ this amounts to maximizing the
exponent term in equation~\eqref{eq:GibbsModel}. 
\item Compute the empirical covariance
\begin{align}
\hbQ \equiv \frac{1}{n}\sum_{i=1}^n\bz_i\bz_i^*\, .
\end{align}
Let $\hbu$ be its principal eigenvector.
\item Return $\hbx^{(\beta,m)}$
\begin{align}
\hx_i^{(\beta,m)}= c\, \<\hbu,\bz_i\>\, , \label{eq:GeneralEstimator}
\end{align}
where $c = c(\lambda)$ is the optimal scaling predicted by the
asymptotic theory.
\end{enumerate}

\subsection{Derivation of the Gibbs measure}
\label{sec:Gibbs}

The Gibbs measure (\ref{eq:GibbsModel}) encodes several estimators of
interests. Here we briefly describe this connections.

\vspace{0.5cm}

\noindent{\bf Bayes-optimal estimators.} As mentioned in the main
text, this is obtained by setting $m=1$ and $\beta = \lambda/2$ (in
the real case) or $\beta = \lambda$ (in the complex case). To see
this, recall our observation model (for $i<j$)
\begin{align}\label{eq:ObsModel}
Y_{ij} = \frac{\lambda}{n}x_{0,i}x_{0,j}^* + W_{ij}\, ,
\end{align}
with $x_{0,i}\in S^0$ and $W_{ij}\sim \normal(0,1/n)$.
Hence, by an application of Bayes formula, the conditional density of
$\bxz = \bsigma\in\field^n$ given $\bY$ is becomes
\begin{align}
p(\de\bsigma|\bY)  &= \frac{1}{Z'_{\bY}} \,
  \exp\Big\{-\frac{n\sg}{2}\sum_{i<j}\Big|Y_{ij}-\frac{\lambda}{n}\sigma_i\sigma_j^*\Big|^2\Big\}
\,
                      \prod_{i=1}^np_0(\de \sigma_i)\, ,
\end{align}
where we recall that $\sg=1$ for $\Group = \Z_2$ (real case), and $\sg=2$ for $\Group =
U(1)$  (complex case). Further, $p_0(\de\sigma)$ denotes the uniform
measure over $S^0\in\field$ (in particular, this is the uniform
measure over $\{+1,-1\}$ for $\Group = \Z_2$). 
Expanding the square and re-absorbing terms independent of $\bsigma$ 
in the normalization constant, we get
\begin{align}
p(\de\bsigma|\bY)  &=  \frac{1}{Z_{\bY}} \,
  \exp\Big\{\lambda \sg\sum_{i<j}\Re(Y_{ij}\sigma_i^*\sigma_j)\Big\}\,
                      \prod_{i=1}^np_0(\de \sigma_i)\, . 
\end{align}
As claimed, this coincides with
Eq.~(\ref{eq:GibbsModel}) if we set $\beta = \lambda/2$ (in
the real case) or $\beta = \lambda$ (in the complex case).

\vspace{0.5cm}

\noindent{\bf Maximum-likelihood
  and SDP estimators.} By letting
$\beta\to\infty$ in Eq.~(\ref{eq:GibbsModel}), we obtain
that $p_{\beta,m}$ concentrates on
the maximizers of the problem
\begin{align}
\mbox{maximize} & \;\;\;\sum_{i<j}Y_{ij}\<\bsigma_i,\bsigma_j\>\, ,\label{eq:NonConvex}\\
\mbox{subject to} & \;\;\; \bsigma_i\in S^{m-1}\;\; \forall i\in [n]\, ,\nonumber
\end{align}
In the case $m\ge n$ we recover the
SDP relaxation.
In the case $m=1$, this is
equivalent to the maximum likelihood
problem
\begin{align}
\mbox{maximize} & \;\;\;\sum_{i<j}\big|Y_{ij}-\sigma_i\sigma_j^*\big|^2\, ,\\
\mbox{subject to} & \;\;\; \sigma_i\in S^{0}\;\; \forall i\in [n]\, .\nonumber
\end{align}

\subsection{Cavity derivation for $\Z_2$ and $U(1)$ synchronization}

In this section we use the cavity method to derive the asymptotic
properties of the measure (\ref{eq:GibbsModel}).

\subsubsection{General $m$ and $\beta$}

In the replica-symmetric cavity method, we consider adding a single
 variable $\bsigma_0$ to a problem with $n$ variables
 $\bsigma_1,\bsigma_2,\dots,\bsigma_n$. We compute the 
marginal distribution of $\bsigma_0$ in the system with $n+1$
variables, to be denoted by $\nu^{n+1}_0(\de\bsigma_0)$. 
This is expressed in terms of the marginals of
the other variables in the system with $n$ variables
$\nu^{n}_1(\de\bsigma_0)$, \dots $\nu^n_n(\de\bsigma_1)$.
We will finally impose the consistency condition that $\nu^{n+1}_0$
is distributed as any of $\nu^n_1$, \dots $\nu^n_n$ in the
$n\to\infty$ limit.

Assuming that $\bsigma_1$, \dots $\bsigma_n$ are, for this purpose,
approximately independent, we get
\begin{align}
\nu^{n+1}_{0}(\de\bsigma_0)\; \normeq \; p_0(\de\bsigma_0)\prod_{k =1}^n
 \int \exp\Big\{2\beta m\Re(Y_{0k}
  \<\bsigma_0,\bsigma_k\>)\Big\}\; \nu^n_{k}(\de\bsigma_k)\, .\label{eq:FirstCavity}
\end{align}
We will hereafter drop the superscripts $n$, $n+1$ assuming they will
be clear from the range of the subscripts.

Next we consider a fixed $k\in [n]$ and estimate the integral
by expanding the exponential term.
This expansion proceeds slightly different in the real and the complex
cases. We give details for the first one, leaving the second to the reader. Write
\begin{align*}
&\int \exp\Big\{2\beta m Y_{0k}
  \<\bsigma_0,\bsigma_k\>)\Big\}\; \nu_{k}(\de\bsigma_k) = \\
&=1 + 2\beta m Y_{0k} \<\bsigma_0,\rE_{k}(\bsigma_k)\>
+\frac{1}{2}\, 4\beta^2m^2 Y^2_{0k}
  \<\bsigma_0,\rE_{k}(\bsigma_k\bsigma_k^*)\bsigma_0\> +O(n^{-3/2})\\
&=\exp\Big\{2\beta m Y_{0k} \<\bsigma_0,\rE_{k}(\bsigma_k)\>
+\frac{1}{2}\, 4\beta^2m^2 Y^2_{0k}
  \<\bsigma_0,\big(\rE_{k}(\bsigma_k\bsigma_k^*)-
  \rE_{k}(\bsigma_k)\rE_{k}(\bsigma_k^*)\big)\bsigma_0\>
  +O(n^{-3/2})\Big\}\, ,
\end{align*}
where $\rE_{k}(\,\cdot\,) \equiv \int
(\,\cdot\,)\,\nu_{k}(\de\bsigma_k)$ denotes expectation with respect
to $\nu_{k}$. Here, we used the fact that $Y_{ij} = O(1/\sqrt{n})$, as per equation~\eqref{eq:ObsModel}.

Substituting in Eq.~(\ref{eq:FirstCavity}), and neglecting
$O(n^{-1/2})$ terms, we get (both in the real and complex case)
\begin{align}
\nu_{i}(\de\bsigma_i) =
\frac{1}{z_{i}}\, 
\exp\Big\{2\beta\, m\,\Re\<\bxi_{i},\bsigma_i\> + \frac{2\beta m}{ \sg}\,
\<\bsigma_i,\bC_{i}\bsigma_i\>\Big\}\, p_0(\de\bsigma_i)\, \, ,\label{eq:FirstCavityField}
\end{align}
where $\bxi_{i}\in\field^m$, $\bC_{i}\in\field^{m\times
  m}$, with $\bC_i^*=\bC_i$, $z_i$ is a normalization constant, and $\sg=1$ (real case) or $\sg=2$ (complex case). 
This expression holds for $i=0$ and, by the consistency condition,
for all $i\in\{1,\dots, n\}$.

We further have the following equations for $\bxi_{0}$,
$\bC_{0}$:
\begin{align}
\bxi_{0} & = \sum_{k=1}^n Y^*_{0k} \rE_{k}(\bsigma_k)\, ,\label{eq:Cavity_1}\\
\bC_{0} & = \beta m\sum_{k=1}^n |Y_{0k}|^2
\big(\rE_{k}(\bsigma_k\bsigma_k^*)-\rE_{k}(\bsigma_k)\rE_{k}(\bsigma_k^*)\big)\, .\label{eq:Cavity_2}
\end{align}
Notice that the expectations $\rE_k(\,\cdot\,)$ on the right-hand
side are in fact
functions of $\bxi_k$, $\bC_k$ through Eq.~(\ref{eq:FirstCavityField}).

We next pass to studying the distribution of $\{\bxi_{k}\}$
and $\{\bC_{k}\}$. For large $n$, the pairs  $\{(\bxi_{k},\bC_{k})\}$
appearing on the right-hand side of Eqs.~(\ref{eq:Cavity_1}),
(\ref{eq:Cavity_2}) can be treated as independent. 
By the law of large numbers and central limit theorem, we obtain that
\begin{align}
\bxi_i\sim \normal(\bmu,\bQ)\, ,\;\;\;\;\; \bC_i = \bC\, ,
\end{align}
for some deterministic quantities $\bmu$, $\bQ$, $\bC$.
Note that the law of $\bxi_i$ can be equivalently described by
\begin{align}
\bxi_i = \bmu + \bQ^{1/2}\bg\, ,
\end{align} 
where $\bg = (g_1,g_2,\dots,g_m)\sim \normal(0,\id_m)$.

Using these and the consistency condition in Eqs.~(\ref{eq:Cavity_1}),
(\ref{eq:Cavity_2}), we obtain the following equations for 
the unknowns $\bmu, \bQ, \bC$:
\begin{align}
\bmu & = \lambda\E\big\{\rE_{\bxi,\bC}(\bsigma)\big\}\, ,\label{eq:CavityGen_1}\\
\bQ& = \E\big\{\rE_{\bxi,\bC}(\bsigma) \rE_{\bxi,
     \bC}(\bsigma)^*\big\}\, ,\label{eq:CavityGen_2}\\
\bC & = \beta m\, \E\big\{\rE_{\bxi,\bC} (\bsigma\bsigma^*)-
      \rE_{\bxi,\bC}(\bsigma) \rE_{\bxi,\bC}(\bsigma)^*\big\}\, .
\label{eq:CavityGen_3}
\end{align}
Here $\E$ denotes expectation with respect to $\bxi\sim
\normal(\bmu,\bQ)$. Further $\rE_{\bxi,\bC}$ denotes expectation with
respect to the following probability measure on $S^{m-1}$:
\begin{align}
\nu_{\bxi,\bC} (\de\bsigma) =
\frac{1}{z(\bxi,\bC)}\,
\exp\Big\{2\beta\, m\Re\<\bxi,\bsigma\> + \frac{2\beta\,
  m}{\sg}
\<\bsigma,\bC\bsigma\>\Big\}\, p_0(\de\bsigma)\, .\label{eq:Generalnu}
\end{align}

The prediction of the replica-symmetric cavity methods have been summarized in the main text.  We generalize the discussion here.
Assume, for simplicity $\bxz = (+1,\dots,+1)$. 
For  $\nu$ a probability measure on $S^{m-1}$ and $R$ an orthogonal (or
unitary) matrix, let $\nu^R$ be the measure obtained
by `rotating'
$\nu$, i.e. $\nu^R(\bsigma\in A) \equiv \nu(R^{-1}\bsigma
  \in A)$ for any measurable set $A$. Finally, let $p^{(m,\beta)}_{i(1),\dots,i(k)}$ denote
the joint distribution of $\bsigma_{i(1)},\cdots,\bsigma_{i(k)}$ under $p_{m,\beta}$.
Then, for any fixed $k$, and any sequence of $k$-tuples $(i(1),\dots,i(k))_n\in [n]$, we
have
\begin{align}
p^{(m,\beta)}_{i(1),\dots,i(k)}\Rightarrow 
\int \nu^R_{\bxi_1,\bC} (\,\cdot\,)\times  \cdots\times \nu^R_{\bxi_k,\bC} (\,\cdot\,)\,
  \de R
\end{align}
Here $\de R$ denotes the uniform (Haar) measure on the orthogonal
group, `$\Rightarrow$' denotes convergence in distribution, and 
$\bxi_1,\dots,\bxi_k\sim_{iid}
\normal(\mu,\bQ)$ as above. Note that the original measure~\eqref{eq:GibbsModel} 
is unaltered under multiplication by a phase. Specifically, if we add an arbitrary phase $\phi_\ell$
to coordinate $\ell$ of the spins, the measure remains unchanged. 
In order to break this invariance, we consider marginals $\nu_{\bxi,\bC}$
that corresponds to $\bmu$ being real-valued. Henceforth, without loss of generality we stipulate that $\bmu$ is a real-valued
vector.    

While in general this prediction is only a good approximation (because
of replica symmetry breaking) we expect to be asymptotically exact for the
Bayes-optimal, ML and SDP estimator. In the next sections we will discuss special estimators.

\subsubsection{Bayes-optimal: $m=1$ and $\beta \in \{\lambda/2,
  \lambda\}$}

For $m=1$, $\bsigma= \sigma$ is a scalar satisfying $\sigma\sigma^*=
|\sigma|^2=1$. Hence, the term proportional to $\bC$ in
Eq.~(\ref{eq:Generalnu}) is a constant and can be  dropped. 
Also $\bxi=\xi$, $\bmu = \mu$ and $\bQ = q$ are scalar in this case.

The expression (\ref{eq:Generalnu}) thus reduces to
\begin{align}
\nu_{\xi}  (\de\sigma) =
\frac{1}{z(\xi)}\, 
e^{2\beta\, \Re(\xi^*\sigma) }\, p_0(\de\sigma)\, ,\label{eq:NuM=1}
\end{align}
with $p_0(\de\sigma)$ the uniform measure on $\{+1,-1\}$ (in the real
case) or on $\{z\in\complex :\; |z|=1\}$ (in the complex case).
Substituting in Eqs.~(\ref{eq:CavityGen_1}), (\ref{eq:CavityGen_2}),
we get
\begin{align}
\mu & = \lambda\E\big\{\rE_{\xi}(\sigma)\big\}\, ,\label{eq:General_Bayes_1}\\
q& = \E\big\{|\rE_{\xi}(\sigma)|^2\big\}\, ,\label{eq:General_Bayes_2}
\end{align}
with $\rE_{\xi}$ denoting expectation with respect to $\nu_{\xi}$, 
and
$\E$ expectation with respect to $\xi\sim\normal(\mu,q)$.

We will write these equations below in terms of classical functions both in
the real and in the complex cases. Before doing that,
we derive expressions for the estimation error in the
$n\to\infty$ limit.
The estimator $\hbx(\bY)$ is given in this case by
$\hbx(\bY)=\hbx^{\beta,m=1}(\bY)$, cf.
Eq.~(\ref{eq:BetaEstimator}). Therefore, the scaled MSE, cf. 
Eq.~(\ref{eq:ScaledMSEdef}), reads
\begin{align}
\MSE(\hbx;c) &=
\lim_{n\to\infty}\left\{1-\frac{2c}{n}\sum_{i=1}^n\E\big\{\Re(x_{0,i}\hx_{i}(\bY))\big\}
  +\frac{c^2}{n}\sum_{i=1}^n\E\big\{|\hx_{i}(\bY)|^2\big\}\right\}\\
& = 1-2c\E\big\{\Re \rE_{\xi}(\sigma)\big\} +
  c^2\E\big\{|\rE_{\xi}(\sigma)|^2\big\}\\
& = 1-\frac{2c\mu}{\lambda}+c^2 q\, . \label{eq:MSEgeneral}
\end{align}

Note that the optimal scaling is $c=\mu/(\lambda q)$, leading to
minimal error --for the ideally scaled estimator--
\begin{align}
\MSE(\hbx) & = 1-\frac{\mu^2}{\lambda^2 q}\, . 
\end{align}

For the overlap we have, from Eq.~(\ref{eq:OverlapDef}),
\begin{align}
\Ove(\hx)&=\lim_{n\to\infty}
  \frac{1}{n}\E\left\{\left|
\sum_{i=1}^n\frac{\hx_i(\bY)}{|\hx_i(\bY)|}\, x_{0,i}\right|\right\} \\
& = \E\Big\{\frac{\rE_{\xi}(\sigma)}{|\rE_{\xi}(\sigma)|}\Big\}\, . \label{eq:OverlapGeneral}
\end{align}

\vspace{0.5cm}

\noindent {\bf Real case.} In this case $\rE_{\xi}(\sigma) =
\tanh(2\beta\xi)$
and therefore Eqs.~(\ref{eq:General_Bayes_1}),
(\ref{eq:General_Bayes_2}) yield
\begin{align}
\mu & = \lambda\E\big\{\tanh\big(2\beta\mu +2\beta\sqrt{q} Z\big)\big\}\, ,\\
q& = \E\big\{\tanh\big(2\beta\mu +2\beta\sqrt{q} Z\big)^2\big\}\, ,
\end{align}
where expectation is with respect to $Z\sim\normal(0,1)$. 
As discussed in Section \ref{sec:Gibbs}, the Bayes optimal estimator
is recovered by setting $\beta = \lambda/2$ above. 
Using the identity (\ref{eq:TanhIdentity})
in Corollary \ref{coro:Identities}, we obtain the solution
\begin{align}
\mu =\frac{\kappa}{\lambda}\, ,\;\;\;\;\;
q =\frac{\kappa}{\lambda^2}\, .
\end{align}
where $\kappa$ satisfies the fixed point equation 
\begin{align}
\kappa= \lambda^2 \E\big\{\tanh\big(\kappa+\sqrt{\kappa} Z\big)\big\}
\end{align}
We denote by $\kappa_* = \kappa_*(\lambda)$ the largest non-negative solution of
this equation. Using Eqs.~(\ref{eq:MSEgeneral}) and
(\ref{eq:OverlapGeneral}) we obtain the following predictions for the
asymptotic estimation error
\begin{align}
\MSE(\xBayes) &= 1-\frac{\kappa_*(\lambda)}{\lambda^2}\, ,\\
\Ove(\xBayes) &= 1-2\Phi(-\sqrt{\kappa_*(\lambda)})\, .
\end{align}
(Note that in this case, the optimal choice of a scaling is $c=1$.) 

\vspace{0.5cm}

\noindent {\bf Complex case.} In this case 
\begin{align}
\rE_{\xi}(\sigma) =\frac{\xi}{|\xi|}\, \frac{\BI_1(2\beta|\xi|)}{\BI_0(2\beta|\xi|)}\, ,
\end{align}
where, as mentioned above, $\BI_k(z)$ denotes the modified Bessel function of the second
kind. 

The general fixed point equations~(\ref{eq:General_Bayes_1}) and
(\ref{eq:General_Bayes_2}) yield
\begin{align}
\mu & = \lambda\E\Big\{\frac{\mu+\sqrt{q} \Re(Z)}{|\mu+\sqrt{q}
      Z|}\, \frac{\BI_1(2\beta|\mu+\sqrt{q}Z|)}{\BI_0(2\beta|\mu+\sqrt{q}
      Z|)}\Big\}\, ,\\
q & = \lambda\E\Big\{ \frac{\BI_1(2\beta|\mu+\sqrt{q}
      Z|)^2}{\BI_0(2\beta|\mu+\sqrt{q}
      Z|)^2}\Big\}\, .
\end{align}
As discussed in Section \ref{sec:Gibbs}, the Bayes optimal estimator
is recovered by setting $\beta = \lambda$ in these equations.
In this case we can use the identity (\ref{eq:BesselIdentity})
in Corollary \ref{coro:Identities}, to obtain the solution
\begin{align}
\mu =\frac{\kappa}{\lambda}\, ,\;\;\;\;\;
q =\frac{\kappa}{\lambda^2}\, .
\end{align}
where $\kappa$ satisfies the fixed point equation 
\begin{align}
\kappa = \lambda^2 \E\Big\{\frac{\kappa+\sqrt{\kappa} \Re(Z)}{|\kappa+\sqrt{\kappa}
      Z|}\, \frac{\BI_1(2|\kappa+\sqrt{\kappa}Z|)}{\BI_0(2|\kappa+\sqrt{\kappa}
      Z|)}\Big\}\, ,
\end{align}
where the expectation is taken with respect to $Z\sim\cnormal(0,1)$.
We denote by $\kappa_*=\kappa_*(\lambda)$ the largest non-negative solution of
these equations.

Using again Eqs.~(\ref{eq:MSEgeneral}) and
(\ref{eq:OverlapGeneral}) , we obtain
\begin{align}
\MSE(\xBayes) &= 1-\frac{\kappa_*(\lambda)}{\lambda^2}\, ,\\
\Ove(\xBayes) &= \E\Big\{\frac{\kappa_*(\lambda)+\sqrt{\kappa_*(\lambda)}\, \Re(Z)}{|\kappa_*(\lambda)+\sqrt{\kappa_*(\lambda)}
      Z|}\Big\}\, .
\end{align}

\subsubsection{Maximum likelihood: $m=1$ and $\beta \to\infty$}
\label{eq:MLDense}

As discussed in Section \ref{sec:Gibbs}, the maximum likelihood
estimator is recovered by setting $m=1$ and $\beta\to\infty$.
Notice that \emph{in this case our results are only approximate
  because of replica symmetry breaking.} 

We can take the limit $\beta\to\infty$ in Eqs.~(\ref{eq:NuM=1}),
(\ref{eq:General_Bayes_1}), (\ref{eq:General_Bayes_2}). In this limit,
the measure $\nu_{\xi}(\,\cdot\,)$
concentrates on the single point $\sigma^*=\xi/|\xi|\in S^0$. We thus obtain
\begin{align}
\mu & = \lambda\E\Big\{\frac{\Re(\xi)}{|\xi|}\Big\}\, ,\label{eq:ML_RS}\\
q& = 1\,.
\end{align}
We next specialize our discussion to the real and complex cases.

\vspace{0.5cm}

\noindent{\bf Real case.} Specializing Eq.~(\ref{eq:ML_RS}) to the
real case, we get the equation
\begin{align}
\mu & = \lambda\big(1-2\Phi(-\mu)\big)\, . \label{eq:FP_ML}
\end{align}
Taylor expanding near $\mu =0$, this yields
$\mu = 2\phi(0)\lambda\,\mu+O(\mu^2)$ which yields the critical point
(within the replica symmetric approximation)
\begin{align}
\lambda_{c}^{\mbox{\tiny{ML,RS}}}= \sqrt{\frac{\pi}{2}}\approx 1.253314\, .
\end{align}
We denote by $\mu_*= \mu_*(\lambda)$ the largest non-negative solution
of Eq.~(\ref{eq:FP_ML}). The asymptotic estimation metrics (for
optimally scaled estimator) at level $\lambda$ are
given by
\begin{align}
\MSE(\xml) & = 1-\frac{\mu_*(\lambda)^2}{\lambda^2}\, ,\label{eq:MSE_ML}\\
\Ove(\xml) & =
             \frac{\mu_*(\lambda)}{\lambda}\, . \label{eq:Overlap_ML}
\end{align}
It follows immediately from Eq.~(\ref{eq:FP_ML}) that, as
$\lambda\to\infty$,  $\mu_*(\lambda)
=\lambda[1-2\Phi(-\lambda)+O(\Phi(-\lambda)^2)]$, whence
\begin{align}
\MSE(\xml) & = 4\Phi(-\lambda) +O(\Phi(-\lambda)^2)=
\sqrt{\frac{8}{\pi\lambda^2}}\, e^{-\lambda^2/2}\, \big(1+O(\lambda^{-1})\big)\, ,\\
\Ove(\xml) & = 1-2\Phi(-\lambda) +O(\Phi(-\lambda)^2)=
1-\sqrt{\frac{2}{\pi\lambda^2}}\, e^{-\lambda^2/2}\, \big(1+O(\lambda^{-1})\big)\, .
\end{align}

\vspace{0.5cm}

\noindent{\bf Complex case.} Specializing Eq.~(\ref{eq:ML_RS}), we get
\begin{align}
\mu & = \lambda\E\Big\{\frac{\mu+\Re(Z)}{|\mu+Z|}\Big\}\, ,\label{eq:ML_Complex}
\end{align}
where the expectation is with respect to $Z\sim\cnormal(0,1)$. 
Taylor-expanding around $\mu = 0$, we get $\mu =
(\lambda\mu/2)\E\{|Z|^{-1}\}+O(\mu^2)$.
Using $\E\{|Z|^{-1}\} = \sqrt{\pi}$, we obtain the replica-symmetric
estimate for the critical point
\begin{align}
\lambda_{c}^{\mbox{\tiny{ML,RS}}}= \frac{2}{\sqrt{\pi}} \approx 1.128379\, .
\end{align}
Denoting by $\mu_* = \mu_*(\lambda)$ the largest non-negative solution
of Eq.~(\ref{eq:ML_Complex}), the estimation metrics are obtained
again via Eqs.~(\ref{eq:MSE_ML}) and (\ref{eq:Overlap_ML}).

For large $\lambda$, it is easy to get $\mu_*(\lambda)/\lambda =
1-(4\lambda^2)^{-1}+O(\lambda^{-3})$ whence
\begin{align}
\MSE(\xml) & = \frac{1}{2\lambda^2}+O(\lambda^{-3})\, ,\\
\Ove(\xml) & = 1-\frac{1}{4\lambda^2}+O(\lambda^{-3}) \, .
\end{align}

\subsubsection{General $m$ and $\beta \to\infty$}

In the limit $\beta\to\infty$, the measure
$\nu_{\bxi,\bC}(\,\cdot\,)$ of Eq.~(\ref{eq:Generalnu}) concentrates
on the single point $\bsigma_*(\bxi,\bC)$ that maximizes the
exponent. A simple calculation yields
\begin{align}
\bsigma_*(\bxi,\bC) = \big(\rho\id-(2/\sg)\bC\big)^{-1}\bxi\, ,
\end{align}
where $\rho$ is a Lagrange multiplier determined by the normalization
condition $\|\bsigma_*\|_2 = 1$, or
\begin{align}
\<\bxi, \big(\rho\id-(2/\sg)\bC\big)^{-2}\bxi\> = 1\, .
\end{align}
Further $\nu_{\bxi,\bC}(\,\cdot\,)$ has variance of order $1/\beta$
around $\bsigma_*$. 

In order to solve Eqs.~(\ref{eq:CavityGen_1})
to (\ref{eq:CavityGen_3}) we next assume that the $\cO(m)$ symmetry is
--at most-- broken vectorially to $\cO(m-1)$.
Without loss of generality, we can assume that it is broken along the
direction $\bfe_1=(1,0,\dots,0)$. Further, since
$\nu_{\bxi,\bC}(\,\cdot\,)$ is a measure on the unit sphere
$\{\bsigma:\; \|\bsigma\|_2=1\}$,  the matrix $\bC$ is only defined up
to a shift $\bC-c_0\id$. This leads to the following ansatz
for the order parameters.
\begin{align}
\bmu = \left(\begin{matrix}
\mu\\
0\\
\cdot\\
\cdot\\
\cdot\\
0
\end{matrix}\right)\, ,
\;\;\;\;\;\;\;
\bC= \left(\begin{matrix}
-\sg r/2 & & & & &\\
&0& & & &\\
&&\cdot&&&\\
&&&\cdot&&\\
&&&&\cdot&\\
&&&&&0
\end{matrix}\right)\, ,
\;\;\;\;\;\;\;
\bQ= \left(\begin{matrix}
q_1 & & & & &\\
&q_0& & & &\\
&&\cdot&&&\\
&&&\cdot&&\\
&&&&\cdot&\\
&&&&&q_0
\end{matrix}\right)\, ,\label{eq:AnsatzOn}
\end{align}
where it is understood that out-of-diagonal entries vanish. 
Note that the above structure on $(\bmu,\bC,\bQ)$ is the only one that remains invariant under 
rotations in $\field^m$ that map $\be_1$ to itself.
We then can represent $\bxi$ as follows. For 
$ Z_1,\dots,Z_m\sim_{i.i.d.}\normal(0,1)$:
\begin{align}
\bxi = \big(\mu+\sqrt{q_1}\, Z_1,\sqrt{q_0}\, Z_2,\dots,\sqrt{q_0}\,
  Z_m\big)^{\sT}\, ,
\end{align}
and $\bsigma_* (\bxi,\bC)$ reads
\begin{align}
\bsigma_* =\Big(\frac{\mu+\sqrt{q_1}\, Z_1}{\rho+r},
  \frac{\sqrt{q_0}\, Z_2}{\rho},\dots,
\frac{\sqrt{q_0}\,
  Z_m}{\rho}\Big)^{\sT}\, .\label{bsmEq}
\end{align}

Taking the limit $\beta\to\infty$ of Eqs.~(\ref{eq:CavityGen_1})
to (\ref{eq:CavityGen_3}) we obtain the following four equations 
for the four parameters $\mu,r,q_0,q_1$:
\begin{align}
\mu &= \lambda\, \E\left[ \frac{\mu + \sqrt{q_1}\,\Re(Z_1)}{\rho+r}\right]\,,\label{muEq}\\
q_1 &= \E\left[\frac{|\mu + \sqrt{q_1}\,Z_1 |^2}{(\rho+r)^2}\right]\,,\label{q1Eq}\\
q_0 & = q_0 \E\left[\frac{|Z_2|^2}{\rho^2}\right]\,,\label{q0Eq}\\
r &=
\E\left[\frac{|Z_2|^2}{\rho} - \frac{\mu\,\Re(Z_1)}{\sqrt{q_1}(\rho+r)}
-\frac{|Z_1|^2}{\rho+r}\right]\,.\label{c1Eq}
\end{align}

Further, the normalization condition $\Tr(\bQ) =
\E(\|\bsigma_*(\bxi,\bC)\|^2_2) =1$ yields
\begin{align}
q_1+(m-1)q_0 = 1\, .\label{eq:Normaliz}
\end{align}
In the above expressions, expectation is with respect to the Gaussian vector
$\bZ = (Z_1,\dots,Z_m)\sim\normal(0,\id_m)$, and $\rho =
\rho(Z_1,\dots,Z_m)$
is defined as the solution of the equation
\begin{align}
1 &=  \frac{| \mu + \sqrt{q_1}\,Z_1|^2}{(\rho+r)^2} +
\frac{q_0}{\rho^2}\sum_{i=2}^m |Z_i|^2\,. \label{rhoEq}
\end{align}
The simplest derivation of these equations is obtained by
differentiating the ground state energy, for which we defer to Section \ref{sec:FreeEnergy}.

We can then compute the performance of the estimator
$\hbx^{(\beta,m)}$ defined at the beginning of this section. Note that 
$\hbQ\to \bQ$ as $n\to\infty$, and therefore its principal vector is
$\hbu\to \bfe_1$ (within the above ansatz), and therefore, for a test
function $f$, we have 
\begin{align}\label{eq:RULE}
\frac{1}{n}\sum_{i=1}^n f(x_{0,i},\hx_i^{(\infty,m)}) =
\E\Big\{f\Big(X_0,c\,\frac{\mu X_0+\sqrt{q_1}
  Z_1}{\rho+r}\Big)\Big\}\, ,
\end{align}
where $X_0\sim\Unif(S^0)$ independent of $Z_1$.

Applying~\eqref{eq:RULE} and after a simple calculation we obtain 
\begin{align}
\MSE(\hbx^{(\infty,m)}) =
1-\frac{\mu_*(\lambda)^2}{\lambda^2 q_{1,*}(\lambda)}\, ,
\end{align}
where $\mu_*(\lambda)$, $q_{1,*}(\lambda)$ denote the solutions of the above equations.
Also, invoking~\eqref{eq:RULE} the asymptotic overlap is given by
\begin{align}
\Ove(\hbx^{(\infty,m)}) &= \E\Big\{X_0 \frac{\mu_* X_0+\sqrt{q_{1,*}} Z_1}{|\mu_* X_0+\sqrt{q_{1,*}} Z_1|} \Big\}\nonumber\\
&= 1-2\Phi\Big(-\frac{\mu_*(\lambda)}{\sqrt{q_{1,*}(\lambda)}}\Big)\, .
\end{align}

\vspace{0.5cm}

\noindent{\bf Spin-glass phase.}
The spin-glass phase is described by the completely symmetric solution
with $\mu = 0$, $b=0$ and $q_0=q_1 = 1/m$. 
From Eq.~(\ref{rhoEq}) we get
\begin{align}
\rho^2=\frac{1}{m}\, \|\bZ\|_2^2\, .
\end{align}

\vspace{0.5cm}

\noindent{\bf Critical signal-to-noise ratio.}
We next compute the critical value of $\lambda$. We begin by
expanding Eq.~(\ref{rhoEq}). Define 
\begin{align}
F(s) \equiv  \frac{q_1}{(s+r)^2}\,
|Z_1|^2+\frac{q_0}{s^2}\sum_{i=2}^m |Z_i|^2\, ,
\end{align}
and let $\rho_0$ be the solution of the equation $1=F(\rho_0)$. 
Notice that $\rho_0$ is unaltered under sign change $Z_1\to -Z_1$. Further,
comparing with the equation for $\rho$, see Eq.~(\ref{rhoEq}), we
obtain the following perturbative estimate
\begin{align}
\rho = \rho_0 -\frac{1}{F'(\rho_0)}
\frac{2\mu\sqrt{q_1}\Re(Z_1)}{(\rho_0+r)^2} +O(\mu^2)\, .
\end{align}
By the results for the spin glass phase, we have $q_0,q_1 \to (1/m)$
and $\rho, (\rho+r)\to \|\bZ\|_2/\sqrt{m}$ as $\mu\to 0$,
whence
\begin{align}
\rho = \rho_0+\frac{\Re(Z_1)}{\|\bZ\|_2}\, \mu +o(\mu)\, .\label{eq:rhoExpansion}
\end{align}

Now consider  Eq.~(\ref{muEq}). Retaining only $O(\mu)$ terms we get
\begin{align}
\mu &= \lambda \mu \, \E\left[\frac{1}{\rho+r}\right]
+\lambda\sqrt{q_1}\, \E\left[\frac{\Re(Z_1)}{\rho+r}\right]\\
& = \lambda\mu\, \E\left[\frac{\sqrt{m}}{\|\bZ\|_2}\right]
+\lambda\sqrt{q_1}\, \E\left[\frac{\Re(Z_1)}{\rho_0+r}\right]-
\lambda\frac{1}{\sqrt{m}}\,
\E\left[\frac{\Re(Z_1)}{(\rho_0+r)^2}\frac{\Re(Z_1)}{\|\bZ\|_2}\right]\,
\mu+o(\mu)\, .
\end{align}
where in the last step we used Eq.~(\ref{eq:rhoExpansion}) and $q_1 =
1/m +o(1)$ as $\mu\to 0$.  Now recalling that $\rho_0$ is even in
$Z_1$,
the second term vanishes and we obtain 
\begin{align}
\mu = \lambda\sqrt{m}\E\left\{\frac{1}{\|\bZ\|_2}-\frac{\Re(Z_1)^2}{\|\bZ\|_2^3}\right\} \, \mu +o(\mu)\, .
\end{align}
We therefore get the critical point $\lambda_{\rm c}(m)$ by setting to
$1$ the coefficient of  $\mu$ above. In the real case, we get
\begin{align}
\lambda^{\RS}_{\rm c}(m)^{-1} &= \sqrt{m}\Big(1-\frac{1}{m}\Big)\,
\E\big\{1/\|\bZ\|_2\big\}\\
& = \sqrt{\frac{2}{m}} \frac{\Gamma((m+1)/2)}{\Gamma(m/2)}\, .
\end{align}
In the complex case 
\begin{align}
\lambda^{\RS}_{\rm c}(m)^{-1} &= \sqrt{m}\Big(1-\frac{1}{2m}\Big)\,
\E\big\{1/\|\bZ\|_2\big\}\\
& = \sqrt{\frac{1}{m}} \frac{\Gamma(m+(1/2))}{\Gamma(m)}\, .
\end{align}

Summarizing the (replica symmetric) critical point is 
\begin{align}
\lambda^{\RS}_{\rm c}(m) = \begin{cases}
\sqrt{m/2}\, \Gamma(m/2)\Gamma((m+1)/2)^{-1}\;\;\;\; &\mbox{for the real
  case,}\\
\sqrt{m}\, \Gamma(m)\Gamma(m+(1/2))^{-1}\;\;\; &\mbox{for the complex
  case.}\\
\end{cases}
\end{align}
In  particular, for $m=1$ we recover $\lambda^{\RS}_{\rm c}(1) =
\sqrt{\pi/2}$ for the real case, and $\lambda^{\RS}_{\rm c}(1) =
2/\sqrt{\pi}$ for the complex  case. These are the values derived in
Section \ref{eq:MLDense}. For large $m$, we get 
\begin{align}
\lambda^{\RS}_{\rm c}(m)=  1+\frac{1}{4\sg m} + O(m^{-2})\, ,
\end{align}
with $\sg=1$ (real case), or $\sg =2$ (complex case).

Let us emphasize once more: we do not expect the replica
  symmetric calculation above to be exact, but only an excellent approximation. In other words, \emph{for any
  bounded $m$, we expect $\lambda_{\rm c}(m)\approx\lambda^{\RS}_{\rm
    c}(m)$ but
 $\lambda_{\rm c}(m)\neq\lambda^{\RS}_{\rm c}(m)$.} However,
as $m\to\infty$ the problem becomes convex, and hence we expect
$\lim_{m\to \infty}|\lambda_{\rm c}(m)-\lambda^{\RS}_{\rm c}(m)| = 0$.
Hence
\begin{align}
\lsdp = \lim_{m\to\infty}\lambda_{\rm c}(m) = 1\, .
\end{align}

\subsubsection{SDP: $m\to\infty$ and $\beta \to\infty$}
\label{sec:SDP_Synchro}

In the limit $m\to\infty$, Eqs.~(\ref{muEq}) to (\ref{c1Eq}) simplify
somewhat.
We set $q_1=q$ and eliminate $q_0$ using Eq.~(\ref{eq:Normaliz}). Applying the
law of large numbers, the equation for $\rho$ reads
\begin{align}
1 &=  \frac{| \mu + \sqrt{q}\,Z_1|^2}{(\rho+r)^2} +
\frac{1-q}{\rho^2} \,. \label{eq:minfty_RhoEq}
\end{align}
As a consequence, $\rho$ becomes independent of $Z_2$. Hence, 
Eqs.~(\ref{muEq}) to (\ref{c1Eq}) reduce to
\begin{align}
\mu &= \lambda\, \E\left[ \frac{\mu + \sqrt{q}\,\Re(Z_1)}{\rho+r}\right]\,,\label{eq:minfty_1}\\
q &= \E\left[\frac{|\mu + \sqrt{q}\,Z_1 |^2}{(\rho+r)^2}\right]\,,\label{eq:minfty_2}\\
r &=
\E\left[ \frac{1}{\rho}-\frac{\mu\,\Re(Z_1)}{\sqrt{q}(\rho+r)}
- \frac{|Z_1|^2}{\rho+r}\right]\,, \label{eq:minfty_3}\\
1 &=\E\left\{\frac{1}{\rho^2}\right\}\, . \label{eq:RhoExp}
\end{align}
Denoting by $\mu_*(\lambda)$ and $q_*(\lambda)$ the solutions to the above equations, we have
\begin{align}
\MSE(\hbx^{(\infty,\infty)}) =
1-\frac{\mu_*(\lambda)^2}{\lambda^2 q_{1,*}(\lambda)}\, ,\label{eq:SDP-MSE}
\end{align}
Further, 
\begin{align}
\Ove(\hbx^{(\infty,\infty)}) 
= 1-2\Phi\Big(-\frac{\mu_*(\lambda)}{\sqrt{q_{1,*}(\lambda)}}\Big)\,.\label{eq:SDP-OL}
\end{align}

We solution of the above equations displays a phase transition at the
critical point $\lsdp =1$, which we next characterize.

\vspace{0.5cm}

\noindent{\bf Spin glass phase and critical point.}
The spin-glass phase corresponds to a symmetric solution $\mu=q = r =
0$. 

In order to investigate the critical behavior, we expand the
equations (\ref{eq:minfty_1}) to (\ref{eq:minfty_3}) for $\lambda = 1+\eps$,
$\eps\ll 1$. To leading order in $\eps$, we get the following solution
\begin{align}
\mu & = 2\, \eps^{3/2} + O(\eps^{2})\, ,\\
q & = \sg\, \eps^2 +O(\eps^{5/2})\, ,\\
r & = \eps +O(\eps^{3/2})\, ,\\
\rho & = 1+\eps^2\,(|Z_1|^2-1) + o(\eps^{3/2})\, .
\end{align}
Hence
\begin{align}
\MSE(\xsdp) = 1-\frac{\mu^2}{q} = 1-\frac{4}{\sg}\,\eps
  +O(\eps^{3/2})\, .
\end{align}

To check the above perturbative solution, note that   expanding the
denominator of Eq.~(\ref{eq:minfty_2}) and using $\rho = 1+O(\eps^2)$, we get
\begin{align}
q &= \E |\lambda \mu +\sqrt{q}Z_1|^2(1-2r+O(\eps^2))\, ,\\
\Leftrightarrow \;\;\;\; q &= \lambda^2\mu^2+q(1-2r) +O(\eps^4)\\
\Leftrightarrow \;\; \mu^2 &= 2rq +O(\eps^4)\, .
\end{align}
Multiplying Eq.~(\ref{eq:minfty_RhoEq}) by $\rho^2$ and expanding the
right-hand side, we get
\begin{align}
\rho^2 &= 1-q+q|Z_1|^2\left(\frac{\rho}{r+\rho}\right)^2
         +O(\mu\sqrt{q})\\
\Leftrightarrow \;\; \rho & = 1+\frac{q}{2}(|Z_1|^2-1) +
                            O(\eps^{5/2})\, .
\end{align}
Finally, expanding Eq.~(\ref{eq:minfty_3}), we get
\begin{align}
r
=1-\frac{\mu}{\sqrt{q}}\E\Big\{\frac{Z_1}{1+b+(q/2)(|Z_1|^2-1)}\Big\}-
\E\Big\{\frac{|Z_1|^2}{1+b+(q/2)(|Z_1|^2-1)}
\Big\}+O(\eps^{5/2})\, ,
\label{eq4b}
\end{align}
whence
\begin{align}
(\E|Z_1|^4-1)\, q= 2\,r^2+O(\eps^{5/2})\, .
\end{align}
Finally, expanding Eq.~(\ref{eq:minfty_1}) we get $r = \eps+O(\eps^{3/2})$.

\subsection{Free energy and energy}
\label{sec:FreeEnergy}

It is easier to derive the free energy using the replica method.
This also give an independent verification of the cavity calculations
in the previous section.

\subsubsection{Replica calculation}

In this section, apply the replica method to compute the free energy of model
(\ref{eq:GibbsModel}). Our aim is to compute asymptotics for the
partition function
\begin{align}
Z = \int\, \exp\Big\{2m\beta \sum_{i<j} \Re(Y_{ij}
  \<\bsigma_i,\bsigma_j\>)\Big\}\, \prod_{i=1}^np_0(\de \bsigma_i)\, ,
\end{align}
where  we recall that $p_0(\de\bsigma_i)$ is the uniform measure over $\bsigma_i\in
S^{m-1}$. The $k$-th moment is given by
\begin{align}
\E\{Z^k\} = \int \prod_{i<j}\E\exp\Big\{2\beta m\sum_{a=1}^k \Re(Y_{ij}
  \<\bsigma^a_i,\bsigma_j^a\>)\Big\} \, \bap_0(\de \bsigma)\, ,
\end{align}
where we introduced replicas $\bsigma_i^1,\dots ,\bsigma_i^k\in
S^{m-1}$, along with the notation $\bap_0(\de \bsigma) \equiv
\prod_{i=1}^n\prod_{a=1}^kp_0(\de \bsigma_i^a)$.
Taking the expectation over $Y_{ij} = (\lambda/n)+W_{ij}$, we get
\begin{align}
\E\{Z^k\} &= \int \prod_{i<j}\exp\Big\{\frac{2\beta m\lambda}{n} \sum_{a=1}^k\Re
  \<\bsigma^a_i,\bsigma_j^a\>+\frac{2\beta^2m^2}{\sg
  n}\Big|\sum_{a=1}^k \<\bsigma^a_i,\bsigma_j^a\>\Big|^2\Big\} \, 
\bap_0(\de \bsigma)\\
&\doteq \int \exp\Big\{\frac{\beta m\lambda}{n} \sum_{a=1}^k
\Big\|\sum_{i=1}^n\bsigma^a_i\Big\|_2^2+\frac{\beta^2m^2}{\sg
  n}\sum_{a,b=1}^k \Big\|\sum_{i=1}^n\bsigma^a_i(\bsigma_i^b)^*\Big\|_F^2\Big\} \, 
\bap_0(\de \bsigma)\, .
\end{align}
We next use the identity
\begin{align}
\exp\Big\{\frac{1}{2}\zeta^2 \|\bv\|^2\Big\} \doteq \int \exp\Big\{-\frac{\|\bw\|^2}{2\zeta^2} + \Re\<\bw,\bv\>\Big\} \de \bw\,, 
\end{align}
where the vectors $\bv$ and $\bw$ take their entries in $\field$.
We apply this identity and introduce Gaussian integrals over the variables $\bmu_a\in \field^m$,
$\bQ_{ab}\in\field^{m\times m}$ (with $\bQ_{ba}= \bQ_{ab}^*$)
\begin{align}
\E\{Z^k\} &\doteq \int \exp\Big\{
-\frac{\beta m n}{\lambda}\sum_{a=1}^k \|\bmu_a\|_2^2 +2\beta
            m\sum_{a=1}^k \sum_{i=1}^n\Re\<\bmu_a,\bsigma_i^a\>\\
&\phantom{AAAAA}-\frac{\beta^2m^2n}{\sg}
            \sum_{a,b=1}^k\Tr(\bQ_{ab}\bQ_{ab}^*)+\frac{2\beta^2
            m^2}{\sg}\sum_{a,b=1}^k\sum_{i=1}^n
\Re\<\bsigma^a_i,\bQ_{ab}\bsigma^b_i\>
\Big\}\;\bap_0(\de \bsigma)\, \de\bQ\de\bmu\, .\nonumber
\end{align}
The final formula for the free energy density is obtained by
integrating
with respect to $\bsigma$ (now the integrand is in product form) and
taking the saddle point in $\bQ$, $\bmu$, and is reported in the next
section, see Eq.~(\ref{eq:ReplicatedFreeEnergy}) below.

\subsubsection{Non-zero temperature ($\beta<\infty$)}

The final result of the calculations in the previous section is obtaining the moments
\begin{align}
\lim_{n\to\infty}\frac{1}{n}\log \E\{Z^k\} 
& = {\rm ext}_{\bQ,\bmu} S_k(\bQ,\bmu)\, ,\\
S_k(\bQ,\bmu)  &= -\frac{\beta m}{\lambda}\sum_{a=1}^k\|\bmu_a\|_2^2-
\frac{\beta^2 m^2}{\sg}\sum_{a,b=1}^k\Tr(\bQ_{ab}\bQ_{ab}^*) + \log
                 W_k(\bQ,\bmu)\, .\label{eq:ReplicatedFreeEnergy}
\end{align}

Here, for each $a, b\in\{1,2,\dots,k\}$, $\bmu_a\in \reals^m$,
$\bQ_{a,b}\in\field^{m\times m}$ with $\bQ_{b,a} =\bQ_{a,b}^*$. In
particular $\bQ_{a,a}$ are Hermitian (or symmetric) matrices. 
The notation ${\rm ext}_{\bQ,\bmu}$ indicates that we need to take a
stationary point over $\bQ_{ab}, \bmu_a$. As usual in the replica
method, this will be a local minimum over some of the parameters, and
local maximum over the others.
Finally, the one-site
replicated partition function is 
\begin{align}
W_k(\bQ,\bmu)\equiv \int \!\exp\Big\{2\beta\,
m\sum_{a=1}^k\Re \<\bmu_a,\bsigma_a\>
+\frac{2\beta^2 m^2}{\sg}\sum_{a,b=1}^k\Re\<\bsigma_a,\bQ_{a,b} \bsigma_b\>
\Big\}\; \prod_{a=1}^kp_0(\de\bsigma_a)\, ,
\end{align}
where we used the following identity in its derivation
\begin{align}
\Big(\int f(\bsigma) p_0(\de\bsigma)\Big)^n \equiv \int f(\bsigma_1) f(\bsigma_2) \dotsc f(\bsigma_n)\,
 p_0(\de\bsigma_1)  \dotsc p_0(\de\bsigma_n)\,.
\end{align}
Recall that $\sg=1$ for $\Group = \Z_2$ (real case), and $\sg=2$ for $\Group =
U(1)$  (complex case).

\vspace{0.5cm}

\noindent{\bf Replica-symmetric free energy.} The  replica-symmetric
(RS) ansatz is
\begin{align}
\bmu_a & = \, \bmu\, , \\
\bQ_{a,b} & = \begin{cases}
\bQ + \frac{1}{\beta m}\, \bC& \mbox{ if $a=b$,}\\
\bQ & \mbox{ if $a\neq b$.}
\end{cases}
\end{align}
It follows from the above that $\bC$, $\bQ\in\field^{m\times m}$
must be Hermitian (symmetric) matrices.
We next compute the RS free energy functional
\begin{align}
\phi(\bQ,\bC,\bmu) = \lim_{k\to 0} \frac{1}{k} S_k(\bQ,\bmu)\,.
\end{align}
By the RS ansatz we have
\begin{align}
\lim_{k\to 0} \frac{\beta m}{k\lambda} \sum_{a=1}^k \|\bmu_a\|_2^2 &= \frac{\beta m}{\lambda} \|\bmu\|_2^2\,\label{eq:term1}\\
\lim_{k\to 0} \frac{\beta^2 m^2}{k \sg}\sum_{a,b=1}^k\Tr(\bQ_{ab}\bQ_{ab}^*)  &=
\lim_{k\to 0} \frac{\beta^2 m^2}{\sg}\Big\{\Tr\Big(\Big(\bQ+\frac{1}{\beta m}\bC\Big)^2\Big)  + (k-1) \Tr(\bQ^2) \Big\}\nonumber\\
& = \frac{\beta^2 m^2}{\sg}\Big\{  \Tr\Big(\Big(\bQ+\frac{1}{\beta m}\bC\Big)^2\Big) -\Tr(\bQ^2) \Big\}\label{eq:term2}
\end{align}
For  computing the third term, we use the following identity. For a fixed arbitrary vector $\bv$,
\begin{align}
\E\Big\{ \exp(\Re\<\bv,\bxi\>)\Big\} = \exp\Big\{\Re\<\bv,\bmu\> + \frac{1}{2\sg} \Re\<\bv,\bQ\bv\> \Big\}\,,
\end{align}
where $\bxi\sim \normal(\bmu,\bQ)$ in the real case, and $\bxi\sim \cnormal(\bmu,\bQ)$ in the complex case. Further, the expectation $\E$ is 
with respect to $\bxi$.

Applying this identity, we write
\begin{align}
&W_k(\bQ,\bmu) \nonumber\\
&= \int \!\exp\Big\{2\beta
m\, \Re \<\bmu,\sum_{a=1}^k \bsigma_a\>
+\frac{2\beta m}{\sg} \sum_{a=1}^k \Re\<\bsigma_a, \bC\bsigma_a\>
+\frac{2\beta^2 m^2}{\sg}\Re\Big\<\sum_{a=1}^k \bsigma_a,\bQ \sum_{a=1}^k \bsigma_a\Big\>
\Big\}\; \prod_{a=1}^kp_0(\de\bsigma_a)\nonumber\\
&= \int \! \E \exp\Big\{\frac{2\beta m}{\sg} \sum_{a=1}^k \Re\<\bsigma_a, \bC\bsigma_a\>
+2\beta m \sum_{a=1}^k \Re\<\bsigma_a,\bxi\>\Big\}\prod_{a=1}^kp_0(\de\bsigma_a)\nonumber\\
&= \E \Big(\int\! \exp\Big\{\frac{2\beta m}{\sg} \Re\<\bsigma, \bC\bsigma\>
+2\beta m \Re\<\bsigma,\bxi\>\Big\}\, p_0(\de\bsigma)\Big)^k\,.
\end{align}
Hence,
\begin{align}
\lim_{k\to 0} \frac{1}{k} \log W_k(\bQ,\bmu) = 
\E\log \Big(\int \exp\Big\{2\beta m\Re\<\bxi,\bsigma\>
+\frac{2\beta m}{\sg}\<\bsigma,\bC\bsigma\>\Big\} p_0(\de \bsigma)\Big)\,.\label{eq:term3}
\end{align}
Combining Eqs.~\eqref{eq:term1},~\eqref{eq:term2} and~\eqref{eq:term3} we arrive at
\begin{align}
\phi(\bQ,\bC,\bmu)
&= -\frac{\beta m}{\lambda} \|\bmu\|_2^2
-\frac{1}{\sg}\beta^2 m^2 \Big\{\Tr\Big(\Big(\bQ+\frac{1}{\beta m}\bC\Big)^2\Big)-\Tr(\bQ^2)\Big\}\nonumber\\
&+\E \log \Big(\int \exp\Big\{2\beta m\Re\<\bxi,\bsigma\>
  +\frac{2\beta m}{\sg}\<\bsigma,\bC\bsigma\>\Big\}
p_0(\de \bsigma)\Big)\, ,\\
&\bxi\sim\normal(\bmu,\bQ)\, .\nonumber
\end{align}
In the complex case, the last line should be interpreted as $\bxi~\sim\cnormal(\bmu,\bQ)$.
Differentiating this expression against $\bmu,\bC,\bQ$ we recover
Eqs.~(\ref{eq:CavityGen_1}) to (\ref{eq:CavityGen_3}) as saddle point
conditions.

\subsubsection{Zero temperature ($\beta\to\infty$)}

As $\beta\to\infty$, the free energy behaves as
\begin{align}
\phi(\bQ,\bC,\bmu) = 2 \beta m\, u(\bQ,\bC,\bmu) + o(\beta)\, ,
\end{align}
where $u(\bQ,\bC,\bmu)$ is the replica-symmetric ground state energy
\begin{align}
u(\bQ,\bC,\bmu) & = -\frac{1}{2\lambda}\|\bmu\|_2^2 -\frac{1}{\sg}\Tr(\bC\bQ)+
\E \max_{\bsigma\in S^{m-1}}\big[\Re\<\bxi,\bsigma\>
                    +\frac{1}{\sg}\<\bsigma,\bC\bsigma\>\big]\, ,\\
&\bxi\sim\normal(\bmu,\bQ)\, .\nonumber
\end{align}
Let us stress that expectation is with respect to $\bxi$. Denote by $\bsm=\bsm(\bxi,\bC)$ the solution of the above maximization
problem. It is immediate to see that this is given by
\begin{align}
\bsm  &= \big(\rho\id-(2/\sg)\bC\big)^{-1} \bxi\, ,
\end{align}
where $\rho=\rho(\bxi,\bC)\in\reals$ is a Lagrange multiplier
determined by solving the equation
\begin{align}
\<\bxi,\big(\rho\id-(2/\sg)\bC\big)^{-2} \bxi\> = 1\, .
\end{align}

The equations for $\bmu$ and $\bQ$ are immediate by taking the
$\beta\to\infty$ limit on Eqs.~(\ref{eq:CavityGen_1}), (\ref{eq:CavityGen_2}).  In zero temperature,  
measure $\nu_{\bxi,\bC}$ concentrates around $\bsm(\bxi,\bC)$.
\begin{align}
\bmu & = \E\{\Re\bsm(\bxi,\bC)\}\,,\label{bmuEq}\\
\bQ & = \E\{\bsm(\bxi,\bC)\bsm(\bxi,\bC)^*\}\,. \label{bQEq}
\end{align}
Equivalently, we obtain the above equations by differentiating $u(\bQ,\bC,\bmu)$
with respect to $\bmu$ and $\bC$, as follows. We write $\bsm = \bsm(\bxi,\bC)$ to lighten the notation.  Since $\rho$ is a Lagrange multiplier, we have
$$\nabla_{\bsm} \Big( \Re\<\bxi,\bsm\> + (1/\sg) \<\bsm,\bC,\bsm\>\Big) = \rho \bsm\,.$$
Hence,
\begin{align}
\nabla_\bC u &= -\frac{1}{\sg}\bQ + \E\Big\{\frac{1}{\sg} \bsm \bsm^* + \rho \sum_{\ell=1}^m \bsigma_{{\rm M},\ell} \nabla_\bC \bsigma_{{\rm M},\ell}\Big\}\nonumber\\
&= -\frac{1}{\sg}\bQ + \E\Big\{\frac{1}{\sg} \bsm \bsm^*\Big\}\,,
\end{align}
where the second equation follows from the constraint $\|\bsm\|_2 = 1$.

We next substitute $\bxi = \bmu + \bQ^{1/2}\bZ$. By a similar calculation, we have
\begin{align}
\nabla_\bmu u &= -\frac{1}{\lambda}\bmu + \E\{\Re \bsm\}\,.\\
\nabla_{\bQ^{1/2}} u &= -[\bC\bQ^{1/2}]_s +\frac{\sg}{2}\, \E \big\{[\bZ\,\bsm(\bxi,\bC)^*]_s\big\}\,.\label{bCEq}
\end{align}
Recall that expectation $\E$ is with respect to $\bxi\sim \normal(\bmu,\bQ)$ or,
equivalently, $\bZ\sim\normal(0,\id_{m})$. Further $[\bA]_s$ denotes
the symmetric (Hermitian) part of the matrix $\bA$, i.e. $[\bA]_s =
(\bA+\bA^*)/2$. 

Using the ansatz (\ref{eq:AnsatzOn}), we recover
Eqs.~(\ref{muEq}) to (\ref{c1Eq}).  Specifically,
Eq~\eqref{muEq} follows readily from Eq.~\eqref{bmuEq}, restricting to the $(1,1)$ entry and plugging in for $\bsm$ from Eq.~\eqref{bsmEq}.
Also, Eqs.~\eqref{q1Eq} and~\eqref{q0Eq} follow from Eq.~\eqref{bQEq}, restricting to $(1,1)$ and $(2,2)$ entries, respectively. 
Derivation of Eq.~\eqref{c1Eq} requires more care. Note that since $\nu_{\xi,\bC}$, given by~\eqref{eq:Generalnu}, is a measure on the unit
sphere, the matrix $\bC$ is only defined up to a diagonal shift. Let $\sg \eta/2$ denote the slack shift parameter. The ansatz~\eqref{eq:AnsatzOn}
for $\bC$ then becomes 
\begin{align*}
\bC= \left(\begin{matrix}
-\sg (r+\eta)/2 & & & & &\\
&-\sg\eta/2& & & &\\
&&\cdot&&&\\
&&&\cdot&&\\
&&&&\cdot&\\
&&&&&-\sg\eta/2
\end{matrix}\right)\,.
\end{align*}
We set $\nabla_{\bQ^{1/2}} u =0$. Applying Eq.~\eqref{bCEq}, this results in the following two equations for $\bC_{11}$ and $\bC_{22}$:
\begin{align}
\frac{-\sg}{2}(r+\eta)\sqrt{q_1} &= \frac{\sg}{2} \E\Big\{\frac{\mu \Re(Z_1)}{\rho+r}+ \frac{\sqrt{q_1} |Z_1|^2}{\rho+r} \Big\}\,,\label{eq:eta1}\\
-\frac{\sg}{2}\eta \sqrt{q_0} &= \frac{\sg}{2} \E\Big\{ \frac{\sqrt{q_0}|Z_2|^2}{\rho}\Big\}\,.\label{eq:eta2} 
\end{align}
Solving for $\eta$ from Eq.~\eqref{eq:eta2} and substituting for that in Eq.~\eqref{eq:eta1}, we obtain Eq.~\eqref{c1Eq}.

\subsection{On the maximum likelihood phase transition}

In Section \ref{eq:MLDense} we computed the replica symmetric
approximation for the phase tranition of the maximum likelihood
estimator. We obtained
$\lambda_{c}^{\mbox{\tiny{ML,RS}}}(\integers_2)= \sqrt{\pi/2}$ (real case),
$\lambda_{c}^{\mbox{\tiny{ML,RS}}}(U(1))= 2/\sqrt{\pi}$ (complex
case). This is somewhat surprising because it suggests that the
maximum likelihood estimator has a worse threshold than the Bayes
optimal estimator. 

It turns out that this is an artifact of the replica symmetric approximation and instead
\begin{align}
\lambda_{c}^{\mbox{\tiny{ML}}}(\integers_2) =
  \lambda_{c}^{\mbox{\tiny{ML}}}(U(1)) =1\, .
\end{align}
We next outline the heuristic argument that support this claim. (For
the sake
of simplicity, we will consider the $\integers_2$ case.)

For a given noise realization $\bW$, the maximum likelihood estimator is
\begin{align}
\xml(\bY) =
  \arg\max_{\bx\{+1,-1\}^n}\left\{\frac{\lambda}{2n}\<\bx,\bxz\>^2+ \frac{1}{2}\<\bx,\bW\bx\>\right\}\,
  .
\end{align}
We then define the correlation
\begin{align}
M(\bW) \equiv \frac{1}{n}\big|\<\bxz,\xml(\bY)\>\big|\, ,
\end{align}
and recall that $\Ove_n(\xml) = \E\{M(\bW)\}$. We want to show that
$M(\bW)$ is (with high probability) bounded away from $0$ for $\lambda>1$.
Setting, without loss of generality, $\bxz = \bone$, we have
\begin{align}
M(\bW) &= \arg\max_{m\in [0,1]}\Big\{ F_{\bW,n}(m) +\frac{\lambda}{2}\, m^2\Big\}\,
  ,\\
F_{\bW,n}(m)&\equiv \max_{\bx\in\{+1,-1\}^n}\Big\{\frac{1}{2n}\<\bx,\bW\bx\>:\;
               \frac1n \sum_{i=1}^n x_i \in \big[m,m+1/n\big)\Big\}\, .
\end{align}
We expect $\lim_{n\to\infty}F_{\bW,n}(m) \equiv F(m)$ to exist and to
be non-random.
This implies that the asymptotic overlap is given by
\begin{align}
\Ove(\xml)&= \arg\max_{m\in [0,1]} \Big\{ F(m) +\frac{\lambda}{2}\, m^2\Big\}\,
\end{align}
By symmetry we have $F(m) = F(-m)$. Assuming $m\mapsto F(m)$ to be
differentiable, this implies  $F'(0) = 0$.
Hence $m=0$ is a local maximum for $\lambda < -F''(0)$ and a local
minimum for $\lambda> -F''(0)$. Since at $\lambda=0$ we obviously have
$\Ove(\xml)=0$, $F''(0)<0$.  Further, if $m=0$ is a local minimum, we
necessarily have $\Ove(\xml)>0$.  Hence $\lml \le -F''(0)$.

On the other hand, we know that we cannot estimate $\bxz$ with
non-vanishing overlap  for $\lambda<1$. This is a consequence --for instance--
of \cite[Theorem 4.3]{deshpande2015asymptotic} or can, in alternative,
be proved directly using the technique of \cite{montanari2014limitation}.
This implies that $\lml\ge 1$. Summarizing, we have
\begin{align}
1\le \lml\le -F''(0)\, .
\end{align}

We next claim that earlier work on the Sherrington-Kirkpatrick model
implies $F''(0) = -1$, thus yielding $\lml=1$.
Indeed, alternative expressions can be obtained by studying the modified
problem
\begin{align}
\hF_{\bW,n}(h)&\equiv
                \frac{1}{n}\max_{\bx\in\{+1,-1\}^n}\left\{\frac{1}{2}\<\bx,\bW\bx\>+h\<\bone,\bx\>\right\}\,
                ,
\end{align}
where $h$ is an added magnetic field.
Then, we have $\lim_{n\to\infty}\hF_{\bW,n}(h)=\hF(h)$, the Legendre
transform of $F$, and we get the alternative upper bound
\begin{align}
\lml &\le -F''(0) =\frac{1}{\hF''(0)}\, .
\end{align}
Note that $\hF(h)$ is the zero-temperature free energy density of the
Sherrington-Kirkpatrick model in a magnetic field $h$
\cite{SpinGlass}, whose $n\to\infty$ limit exists by
\cite{guerra2002thermodynamic}. 
Using well-known thermodynamic identities, we get 
\begin{align}
\lml \le -F''(0) = &=   \lim_{\beta\to\infty}\lim_{n\to\infty}\frac{1}{\chi(\beta,h=0+)} \\
& =
  \lim_{\beta\to\infty}\lim_{n\to\infty}\frac{1}{\beta\E_{\beta,h= 0+}\{1-Q\}}
  \, ,
\end{align}
where $\chi(\beta,h)$ is the magnetic susceptibility of the Sherrington-Kirkpatrick
model at inverse temperature $\beta$, and magnetic field $h$, and $Q$ is the  
random overlap.

To the best of our knowledge, the above connection between response to
a magneric field, and couplings with non-zero mean was first described
by G\'erard Toulouse\footnote{In \cite{toulouse1980mean}, this
  argument was put forward within the context of the so-called
  Parisi-Toulouse (PaT) scaling hypothesis. Let us emphasize that here we
  are not assuming PaT to hold (and indeed, it has been convincingly
  shown that PaT is not correct, albeit an excellent approximation,
  see e.g. \cite{crisanti2003parisi}).} in \cite{toulouse1980mean}. 

The relation $\beta\E_{\beta,h=0+}\{1-Q\}=1$ was derived  in
\cite{sommers1983properties} from Sompolisky's formulation of mean
field theory. This result is also confirmed by recent high-precision
numerical approximations of the  $\infty$-RSB solution of the
Sherrington-Kirkpatrick model \cite{oppermann2008universality}.
%
%*************
%
\section{Analysis of PCA estimator for synchronization problem}

Here, we study the PCA estimator for the synchronization problem. Recall the observation model
\begin{align}
\bY = \frac{\lambda}{n}\bx_0\bx_0^*+\bW\,,
\end{align}
with $\bx_0 \in \field^n$ and $\|\bx_0\|=n$, where $\|\cdot\|$ refers to the norm on $\field$.

Let $\bv_1(Y)$ denote the leading eigenvector of $\bY$. The PCA estimator $\xpca$ is defined as
\begin{align}
\xpca(\bY) \equiv \sqrt{n}\, \cpca(\lambda) \bv_1(\bY)\,,
\end{align}
with $\cpca$ a certain scaling factor discussed below. 

In order to characterize the error of $\xpca$, we use a simplified version of the main theorem in~\cite{capitaine2009largest}.
\begin{lemma}[\cite{capitaine2009largest}]
Let $\bY = \lambda \bv_0 \bv_0^* + \bW$ be a rank-one deformation of the Gaussian symmetric matrix $\bW$, with
$W_{ij}\sim \normal(0,1/n)$ independent for $i<j$, and $\|\bv_0\| =1$. Then, we have, almost surely
\begin{align}
\begin{split}
\lim_{n\to \infty} |\<\bv_1(\bY), \bv_0\>| = \begin{cases}
0 & \text{ if } \lambda\le 1\,,\\
\sqrt{1-\lambda^{-2}} & \text{ if }\lambda >1\,.
\end{cases}
\end{split}
\end{align}
Further, letting $\lambda_1(\bY)$ be the top eigenvalue of $\bY$, the following holds true almost surely
\begin{align}
\begin{split}
\lim_{n\to \infty} \lambda_1(\bY) = \begin{cases}
2 & \text{ if } \lambda\le 1\,,\\
\lambda+1/\lambda & \text{ if }\lambda >1\,.
\end{cases}
\end{split}
\end{align}
\end{lemma}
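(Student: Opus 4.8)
The plan is to run the standard resolvent argument for a rank-one (``spiked'') deformation, of the type underlying the BBP phase transition: express the scalar resolvent $\langle\bv_0,(zI-\bY)^{-1}\bv_0\rangle$ as a rank-one update of the resolvent of $\bW$, let $n\to\infty$ using concentration of the GOE resolvent, and then read off both the top eigenvalue and the eigenvector overlap from the limiting rational function. As a preliminary reduction, since $\bW$ is GOE and $\bv_0$ is independent of $\bW$, I would condition on $\bv_0$ and use the orthogonal invariance of the GOE to take $\bv_0$ to be a fixed unit vector. I would then record two classical inputs about the GOE matrix $\bW$: $\lambda_1(\bW)\to 2$ and $\lambda_2(\bW)\to 2$ almost surely (edge of the semicircle law), and
\[
s_n(z)\;\equiv\;\big\langle\bv_0,(zI-\bW)^{-1}\bv_0\big\rangle\;\longrightarrow\; g(z)\;:=\;\tfrac12\big(z-\sqrt{z^2-4}\,\big)
\]
almost surely, locally uniformly on $\complex\setminus[-2,2]$, where $g$ is the Stieltjes transform of the semicircle law (the branch vanishing at $\infty$). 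The latter fact is itself a concentration-of-quadratic-forms statement combined with the convergence of the mean resolvent to $g(z)\,{\rm I}$.

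Next I would apply the Sherman--Morrison identity to the rank-one update $\bY-zI=(\bW-zI)+\lambda\bv_0\bv_0^*$, obtaining, for $z$ off the spectra of $\bY$ and of $\bW$,
\[
\big\langle\bv_0,(zI-\bY)^{-1}\bv_0\big\rangle\;=\;\frac{s_n(z)}{1-\lambda\, s_n(z)}\, ,
\]
which converges a.s.\ to $G(z):=g(z)/(1-\lambda g(z))$, locally uniformly on $\complex\setminus[-2,2]$ away from the zeros of $1-\lambda g$. Since $g$ maps $(2,\infty)$ decreasingly and bijectively onto $(0,1)$, the equation $1-\lambda g(z)=0$ has a solution in $(2,\infty)$ if and only if $\lambda>1$, in which case it is unique and equals $z^\star=g^{-1}(1/\lambda)=\lambda+1/\lambda$. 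Now any eigenvalue $z$ of $\bY$ staying a fixed distance above $2$ must, eventually, solve $1=\lambda s_n(z)$ --- the only alternative being that $z$ is an eigenvalue of $\bW$, which cannot happen for $z$ bounded away from $2$ once $\lambda_1(\bW)$ is close to $2$. Combining this with $\lambda_1(\bW)\le\lambda_1(\bY)$ and with Weyl's inequality $\lambda_2(\bY)\le\lambda_1(\bW)\to 2$, I would conclude $\lambda_1(\bY)\to z^\star$ a.s.\ when $\lambda>1$; when $\lambda\le 1$ the function $G$ is analytic on all of $\complex\setminus[-2,2]$, so there is no outlier and $\lambda_1(\bW)\le\lambda_1(\bY)$ forces $\lambda_1(\bY)\to 2$.

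For the eigenvector overlap I would use the spectral measure $\mu_n=\sum_{i=1}^n|\langle\bv_0,\bv_i(\bY)\rangle|^2\,\delta_{\lambda_i(\bY)}$, whose Stieltjes transform is precisely $\langle\bv_0,(zI-\bY)^{-1}\bv_0\rangle=\sum_i |\langle\bv_0,\bv_i(\bY)\rangle|^2/(z-\lambda_i(\bY))$. When $\lambda>1$, since $\lambda_1(\bY)\to z^\star$ while $\lambda_2(\bY)\to 2<z^\star$, for large $n$ a fixed small circle $\Gamma$ around $z^\star$ encloses $\lambda_1(\bY)$ and no other eigenvalue of $\bY$, so
\[
|\langle\bv_0,\bv_1(\bY)\rangle|^2=\frac{1}{2\pi i}\oint_\Gamma\big\langle\bv_0,(zI-\bY)^{-1}\bv_0\big\rangle\,\de z\;\longrightarrow\;\frac{1}{2\pi i}\oint_\Gamma G(z)\,\de z\;=\;\mathrm{Res}_{z=z^\star}G(z)\;=\;\frac{g(z^\star)}{-\lambda\, g'(z^\star)}\, .
\]
A short computation using $(z^\star)^2-4=(\lambda-\lambda^{-1})^2$ gives $g'(z^\star)=-1/(\lambda^2-1)$, so the residue equals $(1/\lambda)\cdot(\lambda^2-1)/\lambda=1-\lambda^{-2}$, i.e.\ $|\langle\bv_0,\bv_1(\bY)\rangle|\to\sqrt{1-\lambda^{-2}}$. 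When $\lambda\le 1$ I would instead integrate over a contour $\Gamma$ enclosing a fixed neighbourhood of $[2-\delta,\lambda_1(\bW)+1]$, which contains $\lambda_1(\bY)$ for large $n$; since $G$ is analytic on and inside $\Gamma$ for $\delta$ small, the same identity forces $|\langle\bv_0,\bv_1(\bY)\rangle|^2\to 0$.

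The hard part will be this last step: passing from local-uniform convergence of the scalar resolvent on compacta of $\complex\setminus[-2,2]$ to convergence of the \emph{isolated atom} $|\langle\bv_0,\bv_1(\bY)\rangle|^2$. This needs, first, an a priori spectral gap $\lambda_1(\bY)-\lambda_2(\bY)$ bounded away from $0$ so that the contour $\Gamma$ captures exactly one eigenvalue --- which I would get from $\lambda_1(\bY)\to z^\star$ together with $\lambda_2(\bY)\to 2$ --- and, second, uniform control of the resolvent on $\Gamma$ itself, which holds because $\Gamma$ stays a fixed positive distance from $[-2,2]$ and (in the supercritical case) from $z^\star$. In the subcritical case one additionally needs that no spectral mass of $\mu_n$ escapes just above the edge, which is again supplied by a uniform resolvent bound on a contour pinned slightly above $2$. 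All of this is exactly what is established in \cite{capitaine2009largest}; the statement above is its rank-one, real-symmetric specialization, so in a write-up one may simply quote that theorem and specialize.
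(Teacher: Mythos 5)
The paper does not actually prove this lemma; it states it as a direct citation to \cite{capitaine2009largest}, describing it as ``a simplified version of the main theorem'' of that reference, and uses it as a black box to evaluate $\MSE(\xpca)$. So there is no internal proof to compare against. Your sketch is, in substance, the standard BBP-type resolvent argument that underlies the cited theorem (and the somewhat simpler treatment in Benaych-Georges and Nadakuditi): Sherman--Morrison to reduce the spiked resolvent to the GOE scalar resolvent $s_n(z)$, almost-sure local-uniform convergence $s_n\to g$ on $\complex\setminus[-2,2]$ together with $\lambda_1(\bW)\to 2$, identification of the outlier with the unique real root of $1-\lambda g(z)=0$ on $(2,\infty)$ when $\lambda>1$, Weyl interlacing to pin $\lambda_2(\bY)\to 2$, and a contour-integral/residue computation for the squared overlap. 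I checked the key identities: $\big\langle\bv_0,(zI-\bY)^{-1}\bv_0\big\rangle=s_n(z)/(1-\lambda s_n(z))$; $g^{-1}(1/\lambda)=\lambda+1/\lambda$ for $\lambda>1$; $g'(z^\star)=-1/(\lambda^2-1)$ using $(z^\star)^2-4=(\lambda-\lambda^{-1})^2$; and the residue $g(z^\star)/(-\lambda g'(z^\star))=1-\lambda^{-2}$. All correct. You also correctly flag the two delicate points (a uniform spectral gap so the contour $\Gamma$ captures exactly one eigenvalue, and uniform resolvent bounds on $\Gamma$), and you correctly note these are precisely what \cite{capitaine2009largest} supplies. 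In short: the paper's choice is to cite; your sketch reconstructs the proof behind the citation, and does so accurately. The only thing I would add is a sentence making explicit that in the subcritical case $\lambda\le 1$ one can still have $\lambda_1(\bY)>\lambda_1(\bW)$ for finite $n$, so the argument truly rests on absence of a pole of $G$ above $2$, not merely on Weyl's inequality; you effectively say this, but it deserves emphasis since it is where the $\lambda\le1$ and $\lambda>1$ cases genuinely diverge.
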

Applying this lemma, we compute $\MSE(\xpca;c)$ as follows
\begin{align}
\begin{split}
\MSE(\xpca;c) &= \lim_{n\to \infty} \Big\{1-\frac{2c}{n} \E\{\Re \<\bx_0,\xpca\>\} + \frac{c^2}{n}\|\xpca\|^2 \Big\}\\
&= 1- 2c\sqrt{1-\lambda^{-2}} + c^2\,,
\end{split}
\end{align}
which is optimized for $c =\cpca(\lambda) \equiv \sqrt{\max(1-\lambda^{-2},0)}$. Note that this choice can be written in terms of $\lambda_1(\bY)$ as well and so knowledge of $\lambda$ is not required. We then obtain
\begin{align}
\MSE(\xpca(\bY);\cpca(\lambda)) = \min(1,\lambda^{-2})\,.
\end{align}
%
%
%*************
%
\section{Analytical results for community detection}

In this Section we use the cavity method to analyze the semidefinite
programming approach to community detection. We refer, for instance, to \cite{MezardMontanari}
for general background on the cavity method for sparse graphs. Also,
see \cite{banavar1987graph,sherrington1987graph} for early statistical
mechanics work on the
related graph bisection problem.

Recall  (from the main text) that we are interested in the hidden partition model. Namely, 
consider a random graph $G_n = (V_n,E_n)$ over vertex set $V_n=[n]$, generated according to the following distribution.
We let $\bxz\in\{+1,-1\}^n$ be uniformly random:
this vector contains the vertex labels (equivalently, it encodes a
partition of the vertex set $V = V_+\cup V_-$, in the obvious way). Conditional on $\bxz$, edges
are independent with distribution
\begin{align}
\prob\big\{(i,j)\in E_n\big|\bxz\big\} = 
\begin{cases}
a/n & \mbox{ if $x_{0,i}x_{0,j} =+1$,}\\
b/n & \mbox{ if $x_{0,i}x_{0,j} =-1$.}
\end{cases}
\end{align}
As explained in the main text, we tackle this problem via the
semidefinite relaxation
\begin{align}
\begin{split}\label{eq:SDP_Graph}
\mbox{maximize} &\;\;\;\;\; \sum_{(i,j)\in E_n} X_{ij}\, ,\\
\mbox{subject to} &\;\;\;\;\; \bX\succeq 0\, ,\\
&\;\;\;\;\; \bX\bone=\bzero\, ,\;\;\;\;\; X_{ii} = 1\;\; \forall i\in [n]\, .
\end{split}
\end{align}
For our analysis, we use the non-convex formulation  
\begin{align}
\begin{split}
\mbox{maximize}  &\;\;\;\;\; \sum_{(i,j)\in E_n}\<\bsigma_i,\bsigma_j\>\, ,\\
\mbox{subject to} &\;\;\;\;\; \sum_{i=1}^n\bsigma_i  =\bzero\, ,\\
                    &\;\;\;\;\; \bsigma_i\in S^{m-1}\subseteq \reals^m\;\; \forall i\in [n]\, .
\end{split}
\end{align}
This is equivalent to the above SDP provided $m\ge n$. 
Note that, throughout this section, the spin variables $\bsigma_i$ are
\emph{real} vectors. 

We  introduce the following Boltzmann-Gibbs distribution 
\begin{align}
p_{\beta,m}(\de \bsigma)= \frac{1}{Z_{G}(\beta,m)}\exp\Big\{2m\beta
  \sum_{(i,j)\in E} 
  \<\bsigma_i,\bsigma_j\>\Big\}\, \bap_0(\de \bsigma)\, . \label{eq:GibbsSparse}
\end{align}
Here $\bap_0(\de\bsigma_i)$ is the uniform measure over 
$\bsigma = (\bsigma_1,\bsigma_2,\dots,\bsigma_n)$ with 
$\bsigma_i\in S^{m-1}$ and $\sum_{i=1}^n\bsigma_i = \bzero$.
In order to extract information about the SDP (\ref{eq:SDP_Graph}),
we take the limits $m\to\infty$, $\beta\to\infty$ \emph{after}
$n\to\infty$.

As $n\to\infty$, the graph $G_n$ converges locally to a rooted multi-type 
Galton-Watson tree with vertices of type $+$ (corresponding to
$x_{0,i}=+1$) or $-$ (corresponding to $x_{0,i}=-1$). Each vertex has
$\Poisson(a/2)$ offsprings of the same type, and $\Poisson(b/2)$
offsprings of the other type (see, e.g. \cite{dembo2010gibbs} for background on
local weak convergence in statistical mechanics).

We write the sum-product fixed point equations to compute the marginals at different nodes.
\begin{align}
\nu_{i\to j}(\de\bsigma_i) \normeq \prod_{\ell\in {\partial i}\setminus\{j\}}\int \, e^{ 2\beta
  m \<\bsigma_i,\bsigma_\ell\>} \, \nu_{\ell\to i}(\de\bsigma_\ell)\,,
\end{align}
where $\nu_{i\to j}$ are the messages associated to the directed edges of the graph.
The marginal $\nu_0(\de\bsigma_0)$, for an arbitrary node $0$, is given by
\begin{align}
\nu_0(\de\bsigma_0) \normeq \prod_{i\in\partial 0}\int \, e^{ 2\beta
  m \<\bsigma_0,\bsigma_i\>} \, \nu_{i\to 0}(\de\bsigma_i)\,.
\end{align}
We rewrite the above equations from another perspective. We designate node $0$ as the root of the tree and denote
its neighbors by $\{1,2,\dotsc,k\}$.
Let $T_i$ be the subtree rooted at node $i$ and induced by its descendants. We call $\nu_i(\de\bsigma_i)$
the marginal for $\bsigma_i$ w.r.t the graphical model in the subtree $T_i$. Replica symmetric cavity
equations relate the marginal $\nu_0(\de\bsigma_0)$ to the marginals at the descendant subtrees, i.e., $\nu_i(\de\bsigma_i)$.
Note that, in the above notation, $\nu_i(\de\bsigma_i) \equiv \nu_{i\to 0}(\de\bsigma_i)$ and therefore we obtain
\begin{align}
\nu_0(\de\bsigma_0) &\normeq \prod_{i=1}^k\hnu_i(\bsigma_0)\,. \label{eq:CavitySparse}\\
\hnu_i(\bsigma_0) &\normeq \int \, e^{2\beta
  m \<\bsigma_0,\bsigma_i\>} \, \nu_i(\de\bsigma_i)\, .
\end{align}
(The measures $\nu_i(\de\bsigma_i)$ are probability measures over
$S^{m-1}$ and the right-hand side should be interpreted as a density
with respect to the uniform measure on $S^{m-1}$.)

We will use the notation of Eq.~(\ref{eq:CavitySparse}) but both
interpretations are useful.

Notice that the (non-local) constraint $\sum_{i=1}^n\bsigma_i=\bzero$
does not enter these equations. However, it is enforced by selecting a
solution of the cavity equations such that $\E\{\int \bsigma_0\nu_0(\de\bsigma_0)\} = \bzero$, where $\E$ is
expectation with respect to the underlying graph which is a Galton-Watson tree with Poisson offspring distribution.

\begin{remark}\label{rem:vectorial}
We will carry out our calculations within a simple `vectorial' ansatz,
whereby $\nu_i(\de \bsigma_i)$ depends in a log-linear way on a
one-dimensional projection of $\bsigma_i$. While this ansatz is not
exact, it turns out to yield very accurate results. Also, it can be
systematically improved upon, a direction that we leave for future work.
\end{remark}

\subsection{Symmetric phase}
\label{sec:Symmetric}

For small $(a-b)$, we expect the solution to the cavity equation to be symmetric
(in distribution) under rotations in $\cO(m)$. By this we mean that, for
any rotation $R\in \cO(m)$, $\nu_i^R(\,\cdot\,)$ is distributed as $\nu_i(\,\cdot\,)$.
($\nu_i^R(\,\cdot\,)$ is defined as the measure induced by action
$\bsigma\mapsto R\bsigma$ on $S^{m-1}$, cf. Section \ref{sec:DenseGeneral}).

In the symmetric phase, assuming the `vectorial' ansatz, cf. Remark~\ref{rem:vectorial},  we look for an approximate solution of the form
\begin{align}
\nu_i(\de\bsigma_i) & 
\normeq
\exp\Big\{ 2\beta \sqrt{m\cond_i} \,
  \<\bz_i,\bsigma_i\> +O_m(1)\Big\}\, 
p_0(\de\bsigma_i) \label{eq:SymmetricSparseAnsatz}\\
&\normeq
\exp\Big\{ 2\beta \sqrt{m\cond_i} \,
  \<\bz_i,\bsigma_i\> +O_m(1)\Big\}\, 
\delta\Big(\|\bsigma_i\|_2^2-1\Big)\; \de\bsigma_i\, ,
\end{align}
where $\bz_i\sim\normal(0,\id_{m})$, and $O_m(1)$ represents a 
term of order one as $m\to\infty$.

Using the Fourier representation of the $\delta$
function (with associated parameter $\rho$), and performing the
Gaussian integral over $\bsigma$, we get 
\begin{align}
\hnu_i(\bsigma_0) &\normeq  \int 
\exp\Big\{2\beta
  m\<\bsigma_0,\bsigma\>+ \beta m\rho
+ 2\beta \sqrt{m\cond_i} \,
  \<\bz_i,\bsigma\>-\beta m\rho\|\bsigma\|_2^2+O_m(1)\Big\}\; 
\de\rho\, \de\bsigma\\
& \normeq  \int \rho^{-m/2}\, \exp\Big\{\frac{\beta }{\rho m}\Big\| \sqrt{m\cond_i}\bz_i
  +m\bsigma_0\Big\|^2_2 +\beta m\rho +O_m(1)\Big\}\, \de\rho\\
&\normeq \int \exp\Big\{\beta m \, S_i(\rho)+\, \frac{2\beta
  }{\rho}\sqrt{m\cond_i}\, \<\bz_i,\bsigma_0\>+\frac{\beta 
  \cond_i}{\rho}(\|\bz_i\|_2^2-m) +O_m(1)\Big\}\de\rho\, ,
\end{align}
where
\begin{align}
S_i(\rho) \equiv \rho +\frac{1}{\rho}\big( \cond_i
  +1\big)-\frac{1}{2\beta}\log\rho\, .
\end{align}
Here the indegral over  $\rho$ runs along the imaginary axis in the
complex plane, from $-i\infty$ and $+i\infty$.

Note, for $\bsigma_0$ uniformly random on the unit sphere, the term $(2\beta m/\rho)\<\bz_i,\bsigma_0\>$ is of order
$\sqrt{m}$,  i.e. of lower order with respect to the term including $S(\,\cdot\,
)$.
Also, the term $(\beta m \cond_i/\rho) (\|\bz_i\|_2^2-1)$ is of 
order $\sqrt{m}$ and does not depend on $\bsigma_0$. Hence, up to
an additional $O_m(1)$ term, we can reabsorb this in the normalization
constant.
We therefore get
\begin{align}
\hnu_i(\bsigma_0)& \normeq \int \exp\Big\{\beta m \, S_i(\rho)+\, \frac{2\beta
  }{\rho}\sqrt{m\cond_i}\, \<\bz_i,\bsigma_0\>+O_m(1)\Big\}\de\rho\,.
\end{align}
We next perform integration over $\rho$ by the saddle point
method. Since $\<\bz_i,\bsigma_0\> = O(m^{-1/2})$ and $S_i(\rho) = O(1)$, the saddle point is given by the stationary equation
$S_i'(\rho) = 0$. The saddle point $\rho_{i,*}$ lies on the real axis and is a minimum along the real axis but a maximum with respect to
the imaginary direction, i.e.,  
\begin{align}
\rho_{i,*}(\beta)= \arg\min_{\rho\in \reals_+} S_i(\rho)\, .
\end{align}
By Cauchy's theorem, we can deform the contour of integral to pass the saddle point  along the imaginary direction.
This in fact corresponds to the path that descents most steeply from the saddle point. 
The integral is dominated by $\rho = \rho_{i,*}+O(m^{-1/2})$ and hence,
\begin{align}
\hnu_i(\bsigma_0)&\normeq \exp\Big\{\frac{2\beta
  }{\rho_{i,*}}\sqrt{m\cond_i}\<\bz_i,\bsigma_0\>+O_m(1)\Big\}\, .\label{eq:HnuFormula}
\end{align}
While this expression for $\hnu_i(\bsigma_0)$ is accurate when
$\<\bz_i,\bsigma_0\>$ is small, it breaks down for large
$\<\bz_i,\bsigma_0\>$. In Section \ref{sec:Limitations} we will
discuss the regimes of validity of this approximation. Namely, we
expect it to be accurate for $d$ large and for $d$ close to one.

Substituting in Eq.~(\ref{eq:CavitySparse}), we get the recursion
\begin{align}
\cond_0 &= \sum_{i=1}^k \frac{\cond_i}{\rho^2_{i,*}}\, ,\\
\rho_{i,*}^2 &=\big( \cond_i
  +1\big)+\frac{1}{2\beta}\, \rho_{i,*} \, .
\end{align}
In particular, as $\beta\to\infty$, we get the simple equation
\begin{align}
\cond_0 &= \sum_{i=1}^k \frac{\cond_i}{1+ \cond_i}\, .
\end{align}

Note that $\cond_0,\cond_1,\dots,\cond_k$ are random variables, because of the
randomness in the underlying limiting tree, which is Galton-Watson
tree with Poisson offspring distribution. We get
\begin{align}
\cond\ed \sum_{i=1}^L\frac{\cond_i}{1+\cond_i}\, ,\label{eq:RDE_R}
\end{align}
where $L\sim \Poisson(d)$, $d=(a+b)/2$, and $\cond_1,\dots, \cond_L$ are
i.i.d. copies of $\cond$. We are interested in solutions supported on
$\reals_{\ge 0}$, i.e. such that $\prob(\cond\ge 0) = 1$.

This recursion is connected to potential theory on Galton-Watson
trees, see e.g.  \cite{lyons1997unsolved,lyons2013probability}.
The next proposition establishes a few basic facts about its
solutions. For the readers' convenience, we provide a proof of the
simplest statements, referring to \cite{lyons1997unsolved} for the
other facts. Given two random
variables $X,Y$, we write $X\preceq Y$ if $Y$ dominates stochastically
$X$,
i.e. if there exists a coupling of $X,Y$ such that $\prob(X\le Y)=1$.
\begin{proposition}[\cite{lyons1997unsolved}]
There exists random variables $\cond_{0}$, $\cond_*$ supported on
$\reals_{\ge 0}$, that solve Eq.~(\ref{eq:RDE_R}) and such that:
\begin{enumerate}
\item  $\cond_0=0$ almost surely.
\item $\cond_{*}\preceq L$, $L\sim\Poisson(d)$.
\item For $d\le 1$, $\cond_{*}=\cond_{0}=0$ identically. Hence the
  distributional equation (\ref{eq:RDE_R}) has a unique solution.
\item For $d>1$, $\cond_*>0$ with positive probability and further
equation (\ref{eq:RDE_R})  admits no other solution than $\cond_0$, $\cond_*$.
\end{enumerate}
\end{proposition}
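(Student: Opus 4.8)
The plan is to recast \eqref{eq:RDE_R} as a fixed-point equation for the operator $T$ that sends the law of a nonnegative random variable $\cond$ to the law of $\sum_{i=1}^L g(\cond_i)$, where $g(x)=x/(1+x)$ (with $g(\infty):=1$), $L\sim\Poisson(d)$, and $\cond_1,\cond_2,\dots$ are i.i.d.\ copies of $\cond$ independent of $L$; a random variable solves \eqref{eq:RDE_R} iff its law is a fixed point of $T$. Two elementary facts drive everything. First, $\delta_0$ is a fixed point, so $\cond_0\equiv 0$ solves the equation, which is item~1. Second, $T$ is monotone for the stochastic order: $g$ is nondecreasing, and coupling the two i.i.d.\ families so that $\cond_i\le\cond_i'$ for all $i$ while sharing the same $L$ gives $\sum_{i\le L}g(\cond_i)\le\sum_{i\le L}g(\cond_i')$, so $\cond\preceq\cond'$ implies $T(\mathrm{law}\,\cond)\preceq T(\mathrm{law}\,\cond')$.

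To construct $\cond_*$ I would iterate from $\cond^{0}\equiv+\infty$. Then $\cond^{1}=\sum_{i=1}^L 1=L\sim\Poisson(d)$, and since $\cond^{1}\preceq\cond^{0}$ trivially, monotonicity of $T$ gives $\cond^{\ell+1}\preceq\cond^{\ell}$ for every $\ell\ge 0$. Thus the laws of $\cond^{\ell}$ with $\ell\ge 1$ are decreasing in the stochastic order and all dominated by $\Poisson(d)$, hence tight, hence convergent to a limit law; let $\cond_*$ be a random variable with that law. One then passes to the limit in $\cond^{\ell+1}=T\cond^{\ell}$: since every $\cond^{\ell}$ ($\ell\ge 1$) is dominated by $\Poisson(d)$ and $g$ is bounded and continuous, a Skorokhod coupling together with the a.s.\ finiteness of $L$ shows $T$ is weakly continuous along the sequence, so $T\cond_*=\cond_*$. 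Because $\cond_*\preceq\cond^{1}=L$, this also gives item~2. Finally, if $\cond$ is \emph{any} solution, then $\cond\preceq\cond^{0}$ trivially and $\cond=T^{\ell}\cond\preceq T^{\ell}\cond^{0}=\cond^{\ell}$ for all $\ell$; letting $\ell\to\infty$ yields $\cond_0\preceq\cond\preceq\cond_*$, so every solution is sandwiched between the two constructed ones.

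For item~3, assume $d\le 1$ and let $\cond$ be any solution. The right-hand side of \eqref{eq:RDE_R} is at most $L$ because $g\le 1$, so $\cond\preceq L$ and $\E\cond\le d<\infty$. Taking expectations and using Wald's identity (independence of $L$ from the $\cond_i$) gives $\E\cond=d\,\E[\cond/(1+\cond)]\le\E[\cond/(1+\cond)]\le\E\cond$; equality throughout forces $\E[\cond-\cond/(1+\cond)]=\E[\cond^{2}/(1+\cond)]=0$, hence $\cond=0$ almost surely. So $\cond_0=\cond_*=0$ and \eqref{eq:RDE_R} has a unique solution, which is item~3.

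The case $d>1$ (item~4) is the delicate one, and here I would not attempt a self-contained proof: by the potential theory of \cite{lyons1997unsolved} (see also \cite{lyons2013probability}), $\cond_*$ coincides with the effective conductance from the root to infinity of a Galton--Watson tree of offspring mean $d$ with unit edge conductances, which is strictly positive with positive probability exactly when $d>1$ (the tree being a.s.\ transient on the survival event), and every solution of \eqref{eq:RDE_R}---being sandwiched between $\cond_0$ and $\cond_*$ as shown above---must equal one of the two. The main obstacle is precisely this last dichotomy: the monotone iteration and the Wald argument settle existence, the domination $\cond_*\preceq\Poisson(d)$, the sandwiching, and the entire $d\le 1$ picture, but the uniqueness of the nontrivial fixed point above criticality genuinely requires the tree potential theory of \cite{lyons1997unsolved}.
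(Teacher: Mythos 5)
Your proposal follows essentially the same route as the paper: define the monotone operator $T=\T_d$ on laws, observe $\delta_0$ is a fixed point (item~1), construct $\cond_*$ as the decreasing monotone limit started from $\delta_{+\infty}$, deduce the stochastic domination $\cond_*\preceq L$ (item~2) and the sandwiching $\cond_0\preceq\cond\preceq\cond_*$ for any solution, give a one-line expectation argument for item~3, and defer item~4 to \cite{lyons1997unsolved}. The only substantive divergence is the inequality used for item~3: the paper invokes Jensen (concavity of $x\mapsto x/(1+x)$) to get $\E\cond\le d\,\E\cond/(1+\E\cond)$, while you use Wald's identity together with the pointwise bound $x/(1+x)\le x$; both are elementary and reach $\E\cond=0$ in one step (yours just needs the two sub-cases $d<1$ and $d=1$ to be read off separately from where equality fails). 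You also spell out the tightness/weak-continuity argument that the paper compresses into ``which exists by monotonicity,'' and you are correctly explicit that the sandwiching alone does not rule out intermediate solutions for $d>1$---that dichotomy really does require the potential-theoretic Theorem~4.1 of \cite{lyons1997unsolved}, exactly as the paper cites.
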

\begin{proof}
Let $\Psp$ denote the space of probability measures over $[0,\infty]$,
and $\T_{d}:\Psp\to \Psp$ the map defined by the 
right-hand
side of Eq.~(\ref{eq:RDE_R}). Namely $\T_d(\mu)$ is the probability
distribution of the right-hand side of Eq.~(\ref{eq:RDE_R}) when
$\cond_i\sim_{i.i.d.} \mu$. Notice that this is well defined on the
extended real line because the summands are non-negative.
It is immediate to see that this map is monotone, i.e.
\begin{align}
\mu_1\preceq \mu_2\; \Rightarrow\; \T_d(\mu_1)\preceq \T_d(\mu_2)\,. \label{eq:monotone}
\end{align}
We define $\cond_{0}=0$ identically and $\cond_*\sim \mu^+$,
where $\mu^+ = \lim_{\ell\to \infty}\T^{\ell}_d(\delta_{+\infty})$
where ``$\lim$'' refers to the weak limit, which exists by monotonicity.
Points 1, 2 follows from an immediate monotonicity argument (see
e.g. \cite{aldous2005survey}).

For point 3, note that by Jensen inequality
\begin{align}
\E \cond\le \frac{d\E \cond}{1+\E \cond}\, ,
\end{align}
which implies $\E \cond = 0$, and hence $\cond=0$ identically.

Point 4 is just Theorem 4.1 in \cite{lyons1997unsolved}.
\end{proof}
The next proposition establishes an appealing interpretation of the
random variable $\cond_*$. 
Again, this  puts together results of \cite{lyons1990random} and
\cite{lyons1997unsolved}. We give here a proof of this connection for
the readers' convenience.
We refer to \cite{lyons2013probability} for
further background on discrete potential theory (electrical networks) and trees.
\begin{proposition}[\cite{lyons1990random,lyons1997unsolved}]
Let $\cT$ be a rooted Galton-Watson tree with offspring
distribution $\Poisson(d)$, and consider the associated (infinite)
electric network, whereby each edge of $\cT$ is replaced by a
conductor with unit resistance.
Let $\cC(0\leftrightarrow \ell)$ be the conductance between the root
and vertices at level $\ell$ (when nodes at level $\ell$ are connected
together). Let $\cC(0\leftrightarrow \infty)\equiv
\lim_{\ell\to\infty}\cC(0\leftrightarrow \ell)$. Then:
\begin{align}
\cond_* \ed \cC(0\leftrightarrow \infty)\, .
\end{align}
In particular, if $d>1$, then $\cond_*>0$ with positive probability.
\end{proposition}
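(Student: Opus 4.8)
The plan is to identify the iteration that defines $\cond_*$ with the sequence of finite‑level effective conductances of the Galton--Watson tree, and then to pass to the limit; the last assertion will follow from the transience/recurrence dichotomy for random walk on such trees.

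\emph{Step 1: a series/parallel recursion for $\cC(0\leftrightarrow\ell)$.} First I would fix a realisation of $\cT$, write $0$ for the root with children $1,\dots,L$, and let $T_i$ be the subtree hanging from $i$; denote by $\cC_i(\ell-1)$ the conductance in $T_i$ between $i$ and the (shorted) vertices of $T_i$ at depth $\ell-1$. The unit‑resistance edge $\{0,i\}$ in series with the network $T_i$ has conductance $\cC_i(\ell-1)/(1+\cC_i(\ell-1))$, and the $L$ such branches are in parallel, so
\[
\cC(0\leftrightarrow\ell)=\sum_{i=1}^{L}\frac{\cC_i(\ell-1)}{1+\cC_i(\ell-1)}\,,\qquad \cC(0\leftrightarrow 0)=+\infty\,,
\]
the base case being the conductance of the root to itself. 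By the branching property, conditionally on $L\sim\Poisson(d)$ the $\cC_i(\ell-1)$ are i.i.d.\ copies of $\cC(0\leftrightarrow\ell-1)$. Comparing with the recursion $\cond^{\ell+1}\ed\sum_{i\le L}\cond^{\ell}_i/(1+\cond^{\ell}_i)$, $\cond^{0}=\infty$, an induction on $\ell$ then yields $\cC(0\leftrightarrow\ell)\ed\cond^{\ell}$ for every $\ell\ge 0$. (On the extinction event $\cC(0\leftrightarrow\ell)=0$ once level $\ell$ is empty, consistently with the empty sum and with $\cond^{\ell}$.)

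\emph{Step 2: monotonicity and passage to the limit.} Next I would show that, for a.e.\ tree, $\ell\mapsto\cC(0\leftrightarrow\ell)$ is non‑increasing. Since a.e.\ vertex has finite degree, this follows from the recursion by induction on $\ell$: for $\ell=1$ one has $\cC(v\leftrightarrow 0)=+\infty\ge\#\{\text{children of }v\}=\cC(v\leftrightarrow 1)$ for every $v$; and because $x\mapsto x/(1+x)$ is increasing on $[0,\infty]$, the hypothesis $\cC(v\leftrightarrow\ell-1)\ge\cC(v\leftrightarrow\ell)$ for all $v$ propagates, via the recursion applied at each $v$, to $\cC(v\leftrightarrow\ell)\ge\cC(v\leftrightarrow\ell+1)$ for all $v$. (Equivalently this is Rayleigh monotonicity: going from ``short level $\ell$'' to ``short level $\ell+1$'' un‑identifies the level‑$\ell$ vertices, which cannot increase the effective conductance.) Hence $\cC(0\leftrightarrow\ell)\downarrow\cC(0\leftrightarrow\infty)\in[0,\infty)$ almost surely, so in particular $\cC(0\leftrightarrow\ell)\stackrel{{\rm d}}{\Rightarrow}\cC(0\leftrightarrow\infty)$. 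On the other hand $\cond^{\ell}\stackrel{{\rm d}}{\Rightarrow}\cond_*$ by the result of \cite{lyons1997unsolved} quoted earlier. Two sequences that agree in law term by term and both converge in distribution have the same limit law, so $\cC(0\leftrightarrow\infty)\ed\cond_*$, which is the desired identity.

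\emph{Step 3: positivity for $d>1$.} Finally, for $d>1$ the Galton--Watson tree is a.s.\ transient for simple random walk on the event of non‑extinction (its branching number equals $d>1$ a.s.\ there), and transience is equivalent to $\cC(0\leftrightarrow\infty)>0$; since non‑extinction has positive probability, $\cC(0\leftrightarrow\infty)>0$ — equivalently $\cond_*>0$ — with positive probability \cite{lyons1990random,lyons1997unsolved}. The only genuinely delicate points in this program are the a.s.\ monotonicity of Step 2 (needed to guarantee a pathwise limit, not merely convergence in distribution) and the bookkeeping of the base value $\cond^{0}=\infty$ together with the behaviour on the extinction event; both are routine once the series/parallel recursion of Step 1 is in place.
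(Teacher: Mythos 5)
Your proposal is correct and follows essentially the same route as the paper: you derive the series/parallel recursion $\cC(0\leftrightarrow\ell+1)=\sum_i \cC_i(\ell)/(1+\cC_i(\ell))$ with $\cC(0\leftrightarrow 0)=+\infty$, identify it distributionally with the iteration defining $\cond^\ell$, and invoke Lyons' results for the positivity of $\cond_*$ on non-extinction when $d>1$. The one place you go beyond the paper's terse ``this coincides with the recursion, thus proving our claim'' is Step 2, where you make explicit the Rayleigh-monotonicity argument that $\ell\mapsto\cC(0\leftrightarrow\ell)$ is a.s.\ non-increasing and therefore has a pathwise limit, which --- together with the already-established weak convergence $\cond^\ell\stackrel{{\rm d}}{\Rightarrow}\cond_*$ --- yields the equality in law by uniqueness of distributional limits; this is a genuine, if routine, gap-filling and is the right way to make the passage to the limit rigorous.
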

\begin{proof}
Let $0$ denote the root of $\cT$,  and $\{1,2,\dots,k\}$ be its
children. Let the  conductance between the root and level $\ell$ on $\cT$ be
$\gamma_0(\ell)$,. Also, let $\cT_1$, \dots $\cT_k$ denote the subtrees rooted
at $1,\dots,k$. Let  the conductance between the root and level $\ell$
on $\cT_i$ (i.e. between vertex $i$ and vertices at distance $\ell$
from $i$ on $\cT_i$) be denoted by  $\gamma_i(\ell)$.
Let $\hgamma_i(\ell)$, $i\in \{1,\dots,k\}$ be the conductance of the
tree obtained from $\cT$ by removing edges $\{(0,j): j\in [k]\setminus
\{k\}\}$. Equivalently $\hgamma_i(\ell)$ is the conductance of the tree obtained
from $\cT_i$ by connecting the root of $\cT_i$ (vertex $i$) to $0$ and
moving the root to $0$. Since the resistance of a series is the sum of
resistances of each component, we have
\begin{align}
\frac{1}{\hgamma_i(\ell+1)} = 1+\frac{1}{\gamma_i(\ell)}\, .
\end{align}
This is of course equivalent to $\hgamma_i(\ell+1) = \gamma_i(\ell)/(1+\gamma_i(\ell))$. 

Now since the conductance of several resistances in parallel is equal
to the sum of the conductances of the components,  we get 
\begin{align}
\gamma_{0}(\ell+1) = \sum_{i=1}^k\hgamma_i(\ell+1) = \sum_{i=1}^k 
\frac{\gamma_i(\ell)}{(1+\gamma_i(\ell))}\, ,
\end{align}
with boundary condition $\gamma_0(0) =\infty$.
Notice that this coincides with the recursion for $\cond^{\ell}$,
included the boundary condition, thus proving our claim.

It follows from Theorem \cite[Theorem 4.3]{lyons1990random} and \cite[Proposition 6.4]{lyons1990random}
that  $\cond_*>0$ with positive probability whenever $d>1$.
\end{proof}
We will hereafter consider the case $d>1$ and focus on the
maximal solution $\cond_*$.

We first study the behavior of $c_*$ for large $d$ limit.
When $d\to\infty$, by the  law of large numbers the
right-hand of Eq.~(\ref{eq:RDE_R})  concentrates around a
deterministic value, and hence so does $\cond$.
In order to characterize this value for large $d$, we write $\cond = \ro + \dc$
where $\E(\dc) = 0$ and $\ro = \E(\cond)$ is deterministic. Further, we
expect $\dc = \Theta(\sqrt{d})$ and $\ro=\Theta(d)$. Expanding Eq.~(\ref{eq:RDE_R}), we get
\begin{align}
\ro + \dc &\ed \sum_{i=1}^L\Big\{\frac{\ro}{1+\ro} +\frac{\dc_i}{(1+\ro)^2}
-\frac{\dc_i^2}{(1+\ro)^3}+\frac{\dc_i^3}{(1+\ro)^4} +O(d^{-3})\Big\}\\
 &= \frac{L\, \ro}{1+\ro} + \frac{d\,(\E(\dc^2))^{1/2}}{(1+\ro)^2} \, Z_1
  -\frac{d\,\E(\dc^2)}{(1+\ro)^3} + O(d^{-3/2})\, ,\label{eq:dum1}
\end{align}
where  $Z_1\sim\normal(0,1)$ is independent of $L\sim\Poisson(d)$. Note that we used central limit theorem and law of large numbers
in obtaining~\eqref{eq:dum1}.

Taking expectation we get
\begin{align}
\ro = \frac{d\,\ro}{1+\ro}-\frac{d\,\E(\dc^2)}{(1+\ro)^3} + O(d^{-3/2})\, ,
\end{align}
whence
\begin{align}
\ro = d-1-\frac{\E(\dc^2)}{d^2} + O(d^{-3/2})\, .\label{eq:RoExpansion}
\end{align}

On the other hand, taking the variance, we obtain
\begin{align}
\E(\dc^2) = d \Big(\frac{\ro}{1+\ro}\Big)^2 + \frac{d^2\,\E(\dc^2)}{(1+\ro)^4} +O(d^{-3})\,.
\end{align}
Therefore,
\begin{align}\label{eq:dc2}
\E(\dc^2) = d-2 +O(d^{-1})\,.
\end{align}
Using equation~\eqref{eq:dc2} in equation~\eqref{eq:RoExpansion}, we get
\begin{align}
\ro = d-1-\frac{1}{d} + O(d^{-3/2})\, .
\end{align}
Summarizing, we found that
\begin{align}
\cond_* = d-1-\frac{1}{d} +\sqrt{d-2}\; Z  +O(d^{-3/2})\, ,
\end{align}
for $Z\sim\normal(0,1)$.

\subsection{Linear stability of the symmetric phase and critical
  point}
\label{sec:LinearStability}

We next study the stability of the symmetric solution
(\ref{eq:SymmetricSparseAnsatz}).  We break the $\cO(m)$ 
symmetry by letting 
\begin{align}
\nu_i(\de\bsigma_i) \normeq
\exp\Big\{2\beta \sqrt{m\cond_i} \,
  \<\bz_i,\btau_i\>+2\beta m   h_i \, s_i + O_m(1)\Big\}\, 
\delta\Big(\|\btau_i\|_2^2+s_i^2-1\Big)\; \de s_i\de^{m-1}\btau_i\, ,\label{eq:InstabilityAnsatz}
\end{align}
where $\bsigma_i = (s_i,\btau_i)$ and $\bz_i
\sim\normal(0,\id_{m-1})$ is $(m-1)$-dimensional. 
Note that each coordinate  of $\bz_i$ is of order $1$, and is
multiplied  by a factor $\sqrt{m}$ in the above expression. We will
consider $h_i\lll 1$, and expand all expressions to linear order in
$h_i$. 

Proceeding as in the symmetric phase, we get 
\begin{eqnarray}
\begin{split}
\hnu_i(\bsigma_0) &\normeq -j \int 
\exp\Big\{2\beta   m s_0s+ \beta m\rho-\beta
                    m\rho s^2+2\beta m h_i s\\
                  &\phantom{AAAA}+2\beta   m\<\btau_0,\btau\>+ 2\beta \sqrt{m\cond_i}\<\bz_i,\btau\>-\beta m\rho\|\btau\|_2^2+O_m(1)\Big\}\; 
\de\rho\, \de^{m-1}\btau\, \de s\\
& \normeq -j \int \rho^{-m/2}\, \exp\Big\{\frac{\beta }{\rho m}\Big\| \sqrt{m\cond_i}\bz_i
  +m\btau_0\Big\|^2_2
%\nonumber \\
%&\phantom{AAAAAAA}
+\frac{\beta m}{\rho}\Big(h_i
  +s_0\Big)^2+ \beta m\rho+O_m(1)\Big\}\,\, \de\rho\\
&\normeq -j \int \exp\Big\{\beta m \, S_i(\rho)+\, \frac{2\beta
  }{\rho}\sqrt{m\cond_i}\,
  \<\bz_i,\btau_0\>+\frac{2\beta m}{\rho} \, h_i s_0+
\frac{\beta
   \cond_i}{\rho}(\|\bz_i\|_2^2-m)+\frac{\beta m}{\rho} h_i^2+O_m(1)\Big\}\,\, \de\rho\, ,
\end{split}
\end{eqnarray}
where $S_i(\rho)$ is defined as in the symmetric phase, namely
\begin{align}
S_i(\rho) = \rho +\frac{1}{\rho}\big( \cond_i
  +1\big)-\frac{1}{2\beta}\log\rho\, .
\end{align}
We perform the integral by saddle point, around
$\rho_{i,*}(\beta)= \arg\min_{\rho\in \reals_+} S_i(\rho)$. Proceeding
as in the symmetric case, and neglecting terms quadratic in $h_i$, we have
\begin{align}
\hnu_i(\bsigma_0) &\normeq \exp\Big\{\frac{2\beta m}{\rho_{i,*}} \, h_i s_0+\frac{2\beta
  }{\rho_{i,*}}\sqrt{m\cond_i}\,  \<\bz_i,\btau_0\>+O_m(1) +O(h^2)\Big\}\, .
\end{align}

Substituting in Eq.~(\ref{eq:CavitySparse}), we obtain the equations
\begin{align}
\cond_0 = \sum_{i=1}^k \frac{\cond_i}{\rho^2_{i,*}}\, ,\;\;\;\;\;\;\;
h_0 = \sum_{i=1}^k \frac{h_i}{\rho_{i,*}}\, ,
\end{align}
which simplify at zero temperature to
\begin{align}
\cond_0 = \sum_{i=1}^k \frac{\cond_i}{1+ \cond_i}\, ,\;\;\;\;\;\;\;\;
h_0 = \sum_{i=1}^k \frac{h_i}{\sqrt{1+ \cond_i}}\, .
\end{align}

Recall that the graph $G_n$ converges locally to a two-types
Galton-Watson tree, whereby each vertex has $\Poisson(a/2)$ vertices
of the same type, and $\Poisson(b/2)$ vertices of the opposite type.
We look for solutions that break the symmetry $+1\leftrightarrow -1$.
If $(\cond_i,h_i)$ is the pair of random variables introduced above, for
vertex $i$, we therefore assume $(\cond_{i(+)},h_{i(+)}) \ed
(\cond_{i(-)},-h_{i(-)})$ for $i(+) \in V_+$, $i(-) \in V_-$. 
This leads to the following distributional recursion for the sequence
of random vectors $\{(\cond^t,h^t)\}_{t\ge 0}$:
\begin{align}
\big(\cond^{t+1};h^{t+1}\big)\ed
  \Big(\sum_{i=1}^{L_++L_-}\frac{\cond^t_i}{1+\cond^t_i}
  ;\sum_{i=1}^{L_++L_-}\frac{s_ih^t_i}{\sqrt{1+\cond^t_i}}
\Big)\, ,\label{eq:Stability}
\end{align}
where  $L_+\sim \Poisson(a/2)$, $L_-\sim\Poisson(b/2)$, $s_1,\dots, s_{L_+} = +1$,
$s_{L_++1},\dots, s_{L_++L_-} = -1$, and $\{(\cond_i^t,h^t_i)\}$ are
  i.i.d. copies of $(\cond^t,h^t)$

Let us pause for two important remarks:
\begin{enumerate}
\item The recursion (\ref{eq:Stability}) is invariant under the
  rescaling $h^t\to a\, h^t$ for $a\in\reals$. Hence, only properties
  that are invariant under thus rescaling are meaningful.
\item It admits the fixed point  $(\cond^t,h^t) \ed (\cond_*,\sqrt{\cond_*}\, Z)$
  where $\cond_*$ is the distributional fixed point of the symmetric
  phase, constructed in the previous section, and $Z\sim\normal(0,1)$
  is independent of $\cond_*$. This is a fixed point\footnote{Indeed, by
    the scaling invariance,  $(\cond^t,h^t) \ed (\cond_*,a\sqrt{\cond_*}\, Z)$ is
  a fixed point for any fixed scale factor  $a\in\reals$.}
that corresponds to
  the symmetric phase, and does not  break the $+1\leftrightarrow -1$ 
symmetry.
\end{enumerate}
Therefore, in order to investigate stability, we initialize the above
recursion in a way that breaks the symmetry, $(\cond^0,h^0) = (\infty,1)$.
Note that by monotonicity property~\eqref{eq:monotone}, starting with $\cond^0 = \infty$,
we have $\cond^t\stackrel{{\rm d}}{\Rightarrow}\cond_*$. 
We ask whether this perturbation grows, by computing the exponential
growth rate
\begin{align}
G_{\alpha}(d,\lambda) \equiv \lim\inf_{t\to\infty} \frac{1}{t\alpha}
  \log \E(|h^t|^{\alpha})\,, \label{eq:Gdef}
\end{align}
where $d=(a+b)/2$, and $\lambda = (a-b)/\sqrt{2(a+b)}$ parametrize the
model.
We define the critical point as the smallest $\lambda$ such that the
growth rate is strictly positive:
\begin{align}
\tlsdp(d) \equiv \inf\big\{ \lambda\in [0,\sqrt{d}] :\,\,
G_2(d,\lambda)>0\big\}\, .\label{eq:LSDP}
\end{align}
Notice that in the definition we used the second moment, i.e. set
$\alpha=2$. However, the result appear to be insensitive to the choice
of $\alpha$.
In the next section we will discuss the numerical solution of the
above distributional equations and our analytical
prediction for $\lsdp(d)$. 

In the rest of this section
we analyze the behavior of $\tlsdp(d)$ for large $d$. Along the way,
we analyze the behavior of perturbation $h^t$, which in turn clarifies why the critical point $\tlsdp$ is defined based
on the \emph{exponential growth rate} of the perturbation.

For the sake of simplicity we assume the initialization
$(\cond^0,h^0)\ed(\cond_*,1)$. Since with initialization
$\cond^0=\infty$ we have $\cond^t\stackrel{{\rm
    d}}{\Rightarrow}\cond_*$, this should be  equivalent to our
initialization $(\cond^0,h^0) = (\infty,1)$. We let $\cond^t = \ro +\dc^t$, $\ro = \E(\cond_*) = \E(\cond^t)$ as in the
previous section. Note that although $\cond^t$ is distributed as per $\cond_*$, joint distribution of $(\cond^t,h^t)$ varies
over time and so we make the iteration number explicit in $\dc^t$.

We start by taking expectation of Eq.~(\ref{eq:Stability}).
\begin{align}
\E(h^{t+1}) &= \frac{a-b}{2}\,  \E\Big\{\frac{h^t}{\sqrt{1+\cond^t}}\Big\}\\
& =  \frac{a-b}{2}\,  \E\Big\{h^t\Big[\frac{1}{(1+\ro)^{1/2}}
-\frac{\dc^t}{2(1+\ro)^{3/2}} +\frac{3(\dc^t)^2}{8(1+\ro)^{5/2}} +O(d^{-2})\Big]\Big\}\, . \label{eq:ExpansionMeanH}
\end{align}
By taking the covariance of $h^t$ and $\cond^t$, we obtain 
\begin{align}
\E(\dc^th^t) = \cov(\cond^t;h^t) &= \frac{a-b}{2}\, \E\Big\{\frac{h^{t-1}\cond^{t-1}}{(1+\cond^{t-1})^{3/2}}\Big\}\\
& = \frac{a-b}{2} \frac{\ro}{(1+\ro)^{3/2}} \E(h^{t-1})
\big(1+O(d^{-1})\big)\, .
\end{align}
Further we have
\begin{align}
\E\{h^t(\dc^t)^2\} &= \E\{h^t\}\E\{(\dc^t)^2\}
                     \big(1+O(d^{-1/2})\big)\\
& = d\,\E\{h^t\}
                     \big(1+O(d^{-1/2})\big)\, .
\end{align}
Substituting in Eq.~(\ref{eq:ExpansionMeanH}), using  $a-b =
2\lambda\sqrt{d}$, and setting $\oh^t = \E\{h^t\}$, we get
\begin{align}
\oh^{t+1} &= \lambda\sqrt{d}
\Big\{\frac{1}{(1+\ro)^{1/2}}\,\oh^t
  -\frac{\ro\lambda\sqrt{d}}{2(1+\ro)^3}\, \oh^{t-1}
+\frac{3d}{8(1+\ro)^{5/2}}\,\oh^t +O(d^{-2}\oh^t)\Big\}\\
&= \lambda\Big\{\Big(1+\frac{3}{8d}\Big)\oh^t -\frac{1}{2d}\,\oh^{t-1}
            +O(d^{-3/2}\oh^t)\Big\}\,, \label{eq:ht}
\end{align}
Hence, to this order
\begin{align}
\tlsdp(d) &= \rho_{{\rm sp}}(M_d)^{-1} + O(d^{-3/2})\, ,\\
M_d & \equiv \begin{bmatrix}
1+3/(8d) & -1/2d\\
1 & 0
\end{bmatrix}\, .
\end{align}
where $\rho_{{\rm sp}}(\,\cdot\,)$ denotes the spectral radius of a
matrix.
A simple calculation yields
\begin{align}
\tlsdp(d) = 1+\frac{1}{8d} + O(d^{-3/2})\, .
\end{align}

\subsection{Numerical solution of the distributional recursions}
\label{sec:numLinear}

We solved numerically the distributional recursions (\ref{eq:RDE_R}),
(\ref{eq:Stability}) through a sampling algorithm that is known as
`population dynamics' within spin glass theory \cite{mezard2001bethe}.
The algorithm updates a sample that, at iteration $t$, is meant to
be an approximately iid samples with the same law as the one defined by
the distributional equation, at iteration $t$.
For concreteness, we define the algorithm here in the case of the
iteration corresponding to Eq.~(\ref{eq:RDE_R}):
\begin{align}
\cond^{t+1} \ed\sum_{i=1}^L\frac{\cond^t_i}{1+\cond^t_i}\, .
\end{align}
The distribution of $\cond^t$ will be approximated by a sample 
$\bcond^{t} \equiv (\cond_1^{t},\cond_2^{t},\dots,\cond_N^t)$ (we
represent this by a vector but ordering is irrelevant).

\begin{algorithm}[h]
\caption{Population dynamics algorithm}
\begin{algorithmic}[1]
\REQUIRE Sample size $N$; Number of iterations $T$ 
\ENSURE Samples $\bcond^1, \bcond^2,\dotsc, \bcond^T$
\STATE $\bcond^1\leftarrow (L_1, L_2,\dots,L_N)$, with $L_i\sim_{i.i.d.} \Poisson(d)$
\FOR { $t=1,2, \dots, T=1$}
\STATE $\bcond^{t+1} \leftarrow (\;)$
\FOR{$i=1,2,\dots, N$}
\STATE Generate $L\sim\Poisson(d)$
\STATE Generate $i(1),\dots, i(L)\sim_{i.i.d.}\Unif(\{1,2,\dots,N\})$
\STATE Compute 
\begin{align}
\cond^{{\rm new}} =
  \sum_{a=1}^{L}\frac{\cond^t_{i(a)}}{1+\cond^t_{i(a)}}\, .
\end{align}
\STATE Set $\bcond^{t+1} \leftarrow [\bcond^{t+1} |\cond^{{\rm new}}]$
(append entry $\cond^{{\rm new}}$ to vector $\bcond^{t+1}$)
\ENDFOR
 \ENDFOR
\RETURN $\bcond^1, \dotsc, \bcond^T$
\end{algorithmic}
\end{algorithm}

The notation $[a|b]$ in the step 8 of the algorithm denotes appending element $b$ to vector $a$.
Note that with initial point $c_0 = \infty$, we have $c_1 = L\sim \Poisson(d)$. In the population dynamic algorithm
we start from $c^1$.

As an illustration, Figure \ref{fig:RDE_COND} presents the results of
some small-scale calculations using this algorithm.

\begin{figure}[t!]
\includegraphics[width=.5\textwidth]{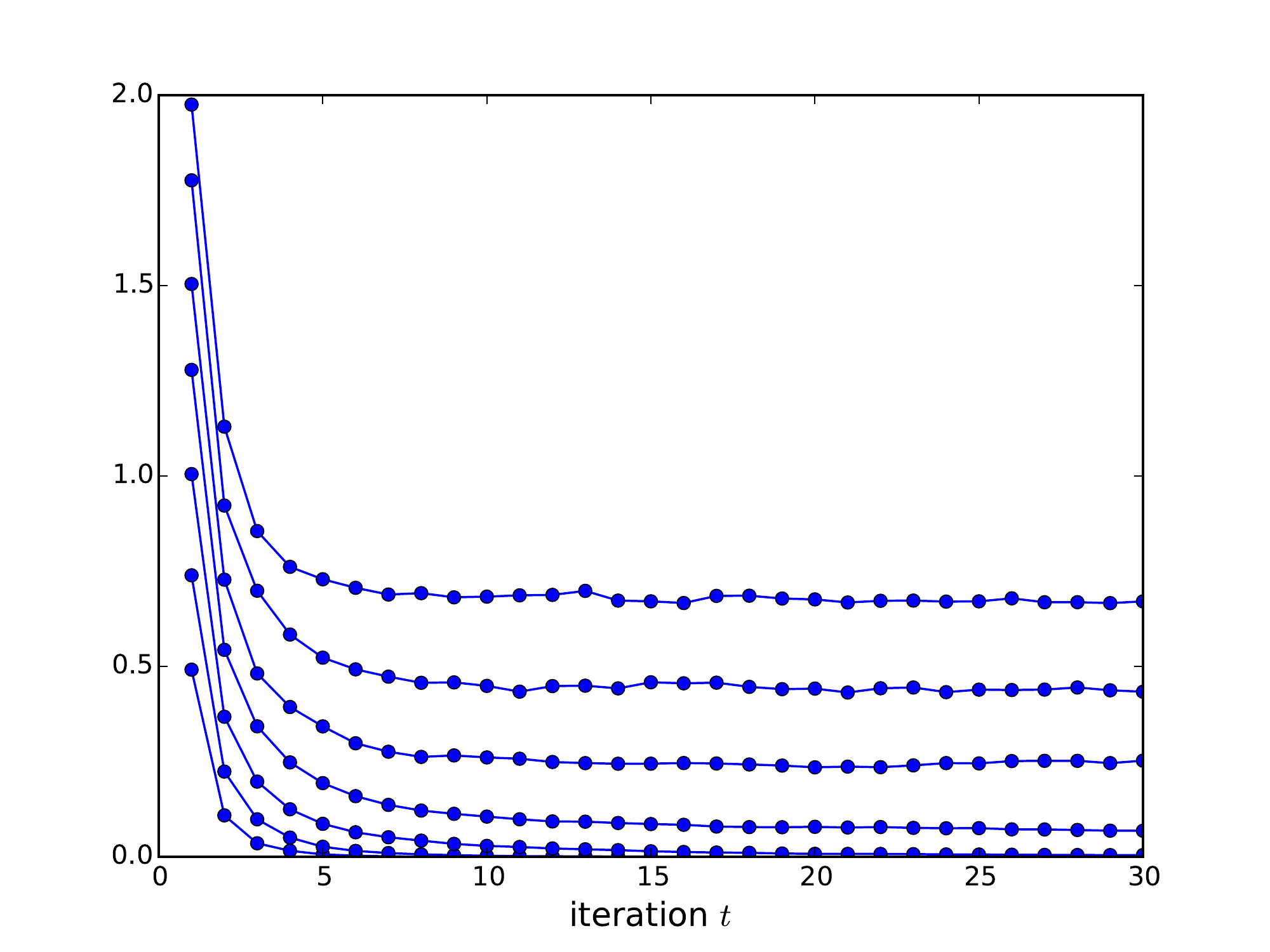}
\includegraphics[width=.5\textwidth]{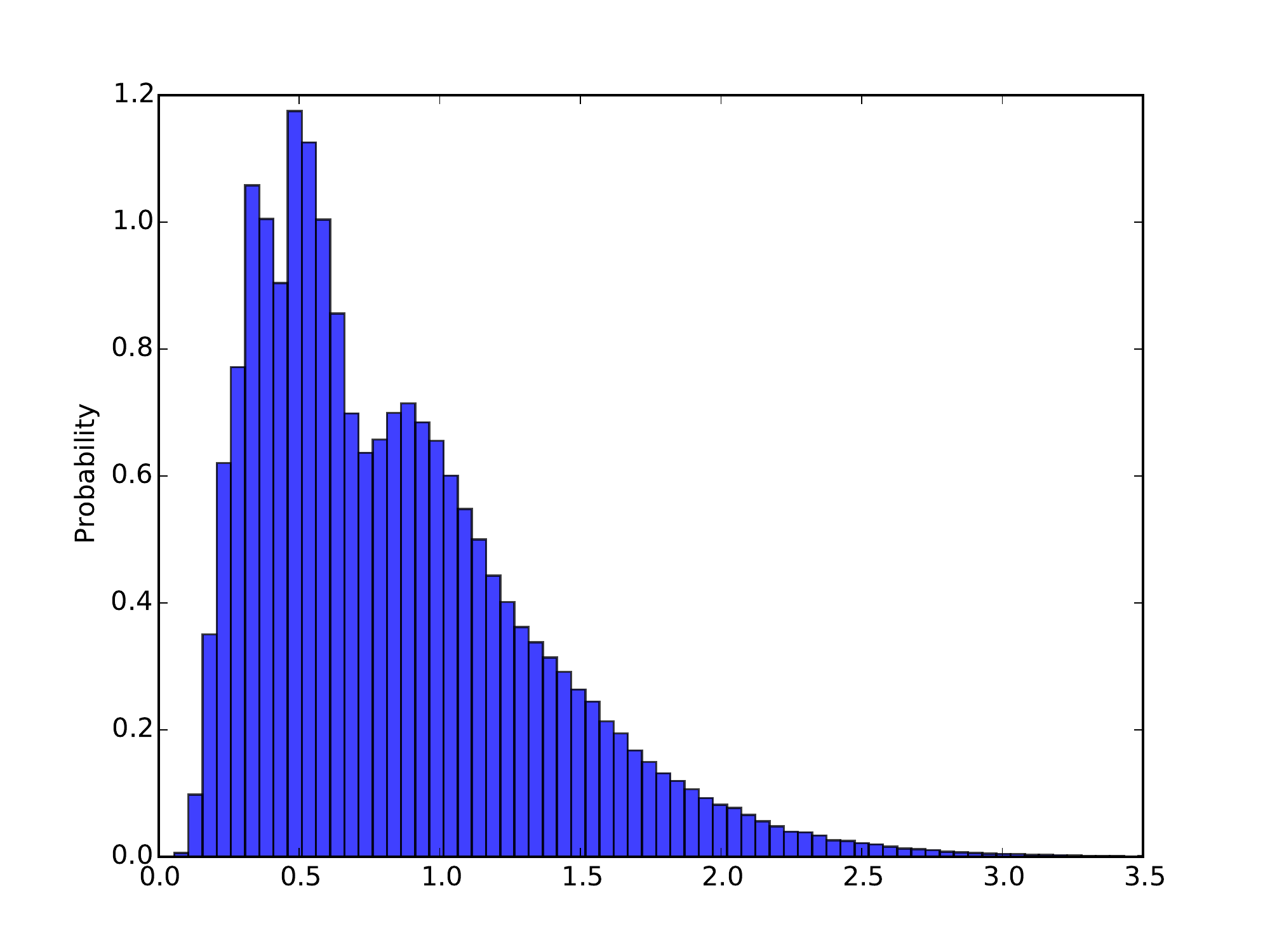}
\put(-475,87){{\scriptsize $\E\{\cond^t\}$}}
\put(-115,5){{\scriptsize $\cond$}}
\caption{Solution of the recursive distributional equation
  (\ref{eq:RDE_R}) using the population dynamics algorithm.
Left frame: evolution of the mean $\E\{\cond^t\}$ versus the number of
iterations $t$, as estimated by the algorithm. 
Various curves refer (from bottom to top) to $d = 0.5$, $0.75$, $1$,
$1.25$, $1.5$, $1.75$, $2$.
Here sample size is $N=5\cdot 10^3$.
 Right frame: histogram of the samples at convergence, for $d=2$. 
Here $N=5\cdot 10^3$.
\label{fig:RDE_COND}}
\end{figure}

\begin{figure}[t!]
\begin{center}
\includegraphics[width=.6\textwidth]{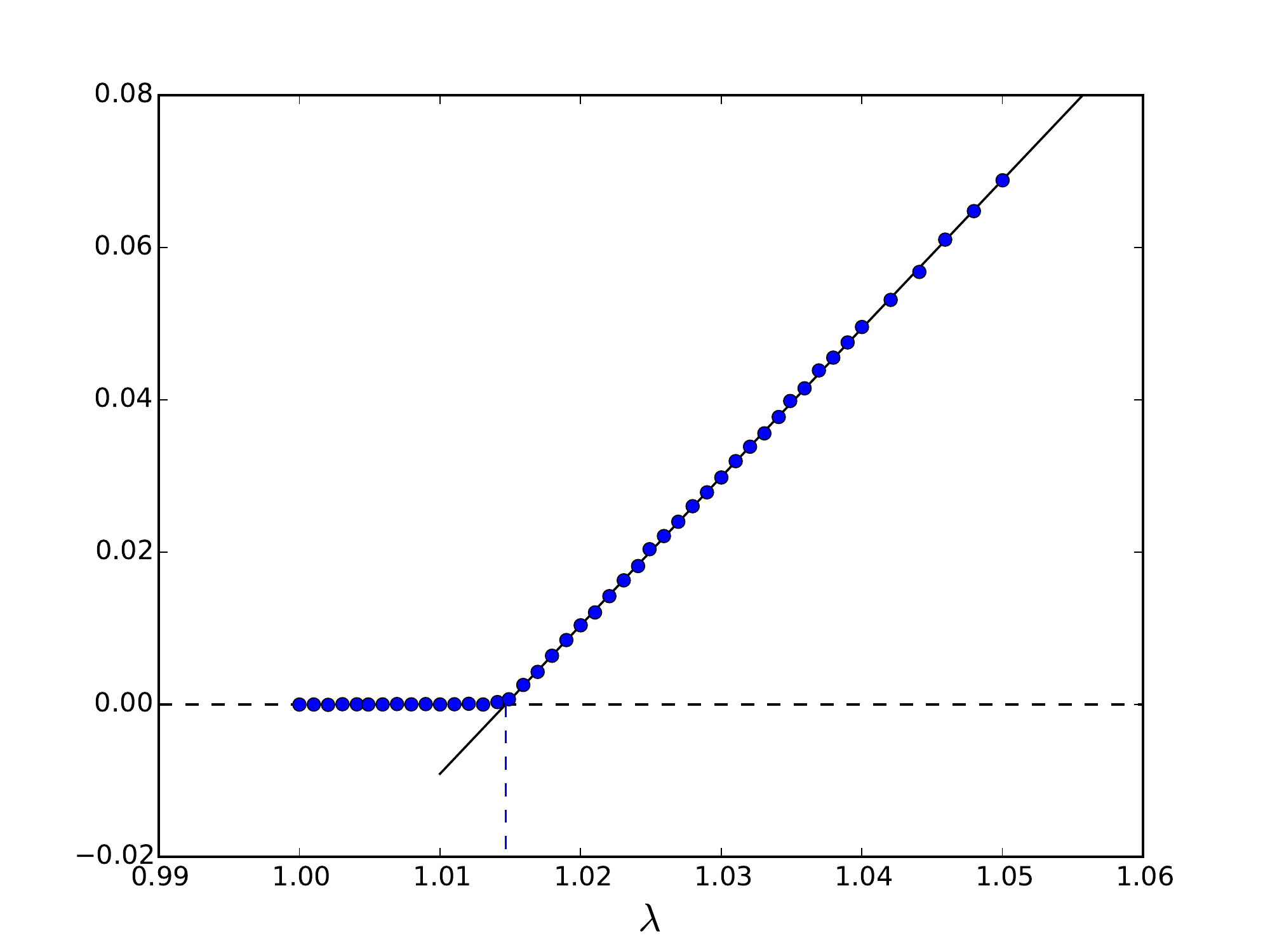}
\put(-153,145){$\hG_2(d=6,\lambda)$}
\put(-160,27){{\small $\tlsdp(d=6)$}}
\end{center}
\caption{Local stability of the $\cO(m)$ symetric phase. We evaluate
  $G_2(d,\lambda)$ defined by Eq.~(\ref{eq:Gdef}) using the
  population dynamics algorithm. Data here reder to average degree
  $d=6$.  The phase transition point
  $\tlsdp(d)$ (within the vectorial
  ansatz) is determined by a local
  linear fit to the estimated $G_2(d,\lambda)$, when it is
  significantly larger than $0$. \label{fig:RDE_LSDP}}
\end{figure}

\begin{figure}[t!]
\begin{center}
\includegraphics[width=.7\textwidth]{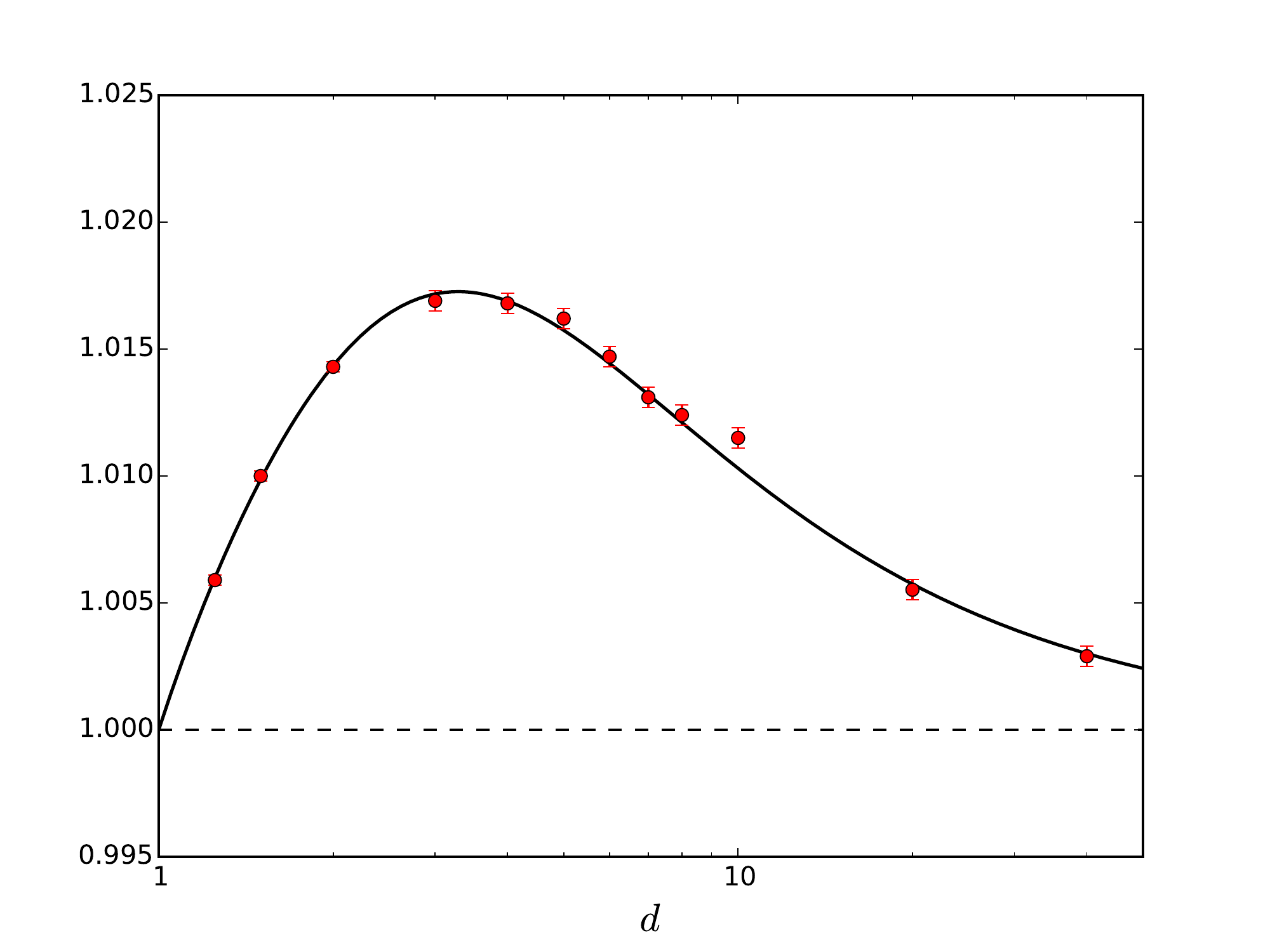}
\put(-350,130){$\tlsdp(d)$}
\end{center}
\caption{Theoretical prediction of the phase transition location for
  detecting hidden partition in sparse graphs using semidefinite
  programming. here the phase transition location is reported in the
rescaled variable $\lambda = (a-b)/\sqrt{2(a+b)}$ as a function of the
average degree $d=(a+b)/2$. Points with error bars correspond to
the definition (\ref{eq:LSDP}), evaluated numerically by
approximating the recursion (\ref{eq:Stability}) with the population
dynamics algorithm. The continuous curve is a rational fit to the
data, constrained at $d=1$, and for large $d$.
\label{fig:LSDP_vs_d}}
\end{figure}

We used the obvious modification of this algorithm to implement the
recursion (\ref{eq:Stability}), whereby a population is now formed of
pairs $(\cond^t_1,h^t_1)$, \dots $(\cond^t_N,h^t_N)$. An important
difference is that the overall scaling of the $h^t_i$ is
immaterial. We hence normalize them at each iteration as follows
\begin{align}
(h^t_i)^{{\rm new}} = \frac{h^t_i}{\sqrt{M_t}}\, ,\;\;\;\;\;
M_t \equiv \frac{1}{N}\sum_{i=1}^N(h^t_i)^2\, .
\end{align}
The normalization constant $M_t$ also allow us to estimate
$G_2(d,\lambda)$, namely 
\begin{align}
\hG_2(d,\lambda)\equiv \frac{1}{2(t_{\max}-t_{\min}+1)}
  \sum_{t=t_{\min}}^{t_{\max}} \log{M_t}\, .
\end{align}
Figure \ref{fig:RDE_LSDP} presents the typical results of this
calculation, using $N=10^7$, $t_{\min} = 100$, $t_{\max} = 400$, at average degree $d=6$.

The behavior in Figure \ref{fig:RDE_LSDP} is generic. For small
$\lambda$, the estimate $\hG_2(d,\lambda)$ is statistically
indistinguishable from $0$. Above a critical point, that we identify
with $\tlsdp(d)$,  $\hG_2(d,\lambda)$ is strictly positive, and
essentially linear in $\lambda$, close to $\tlsdp(d)$.
In order to estimate $\tlsdp(d)$ we use a local linear fit, with parameters
$g_0$, $g_1$
\begin{align}
\hG_2(d,\lambda) = g_0 + g_1\lambda\, ,
\end{align}
and set $\tlsdp(d) = -\hat{g}_0/\hat{g}_1$. In the fit we include only the values
of $\lambda$ such that $\hG_2(d,\lambda)$ is significantly different
from $0$. Note that the linear fit here is in a local neighborhood of $\lambda =1$.
By equation~\eqref{eq:ht}, it is easy to see that for large fixed $d$, $G_2(d,\lambda)$ is logarithmic
in $\lambda$.

The resulting values of $\tlsdp(d)$ are plotted in Figure
\ref{fig:LSDP_vs_d}, together with statistical errors. We report the
same values in Table \ref{table:LSDP} (note that the value for $d=10$ appears
to be an outlier).

\begin{table}
\begin{center}
\begin{tabular}{|c|c|}
\hline
$d$ & $\tlsdp(d)$\\
\hline
$1.25$ & $1.0059\pm 0.0002$\\
$1.5$   & $1.0100\pm 0.0002$\\
$2$  & $1.0143\pm 0.0002$\\
$3$  & $1.0169\pm 0.0004$\\
$4$  & $1.0168\pm 0.0004$\\
$5$  & $1.0162\pm 0.0004$\\
$6$  & $1.0147\pm 0.0004$\\
$7$  & $1.0131\pm 0.0004$\\
$8$  & $1.0124\pm 0.0004$\\
$10$  & $1.0115\pm 0.0004$\\
$20$  & $1.0055\pm 0.0004$\\
$40$  & $1.0029\pm 0.0004$\\
\hline
\end{tabular}
\end{center}
\caption{Numerical determination of
  the critical point $\tlsdp(d)$
  (within the vectorial ansatz)  for
  a few values of $d$.\label{table:LSDP}}
\end{table}

In order to interpolate these results, we fitted a rational function,
with the correct asymptotic behavior at $d=1$ (namely, $\tlsdp(d=1) =
1$) and at $d\to\infty$  (namely, $\tlsdp(d) =1+1/(8d)+o(d^{-1})$).
This fit is reported as a continuous line in Figure
\ref{fig:RDE_LSDP},
and is given by
\begin{align}
\lambda^{\mbox{\tiny FIT}}_c(d)=
1+ \frac{a_1(d-1)+a_2(d-1)^2}{1+b_1(d-1)+(8 a_1+19.5 a_2) (d-1)^2+8 a_2
  (d-1)^3}\, ,
\end{align}
with parameters
\begin{align}
a_1 &= 0.0307569\, ,\\
a_2  & = 0.030035\, ,\\
b_1  &= 2.16454\, .
\end{align}
This is also the curve reported in the main text.

\subsection{The recovery phase (broken $\cO(m)$ symmetry)}
\label{sec:RecoveryPhase}

In this section we describe an approximate solution of the cavity
equations within the  region $\lambda>\lsdp(d)$. In this regime the SDP
estimator has positive correlation with the ground truth (in the
$n\to\infty$ limit). We will work within a generalization of the vectorial ansatz
introduced in Section \ref{sec:LinearStability}. 
Namely, we look for a
solution which breaks the $\cO(m)$
symmetry to $\cO(m-1)$ along the first direction, as follows.
Letting $\bsigma_i =
(s_i,\btau_i)$, $s_i\in\reals$, $\btau_i\in\reals^{m-1}$, we adopt the ansatz
\begin{align}
\nu_i(\de\bsigma_i) \normeq
\exp\Big\{2\beta \sqrt{m\cond_i} \,
  \<\bz_i,\btau_i\>+2\beta m h_i\, s_i-\beta m r_i s_i^2+O_m(1)\Big\}\, 
\delta\Big(s_i^2+\|\btau_i\|_2^2-1\Big)\; \de^{m}\bsigma_i\, ,\label{eq:GeneralVectorAnsatz}
\end{align}
where we will assume $\bz_i\sim\normal(0,\id_{m-1})$.
In the following, we let
$\P_1=\bfe_1\bfe_1^{\sT}$ be the projector along the first direction
and $\Pp_1=\id-\P_1$ denote the orthogonal projector. 

Recalling the cavity equations (\ref{eq:CavitySparse}), and using the
Fourier representation of the delta function, we get 
\begin{align}
\hnu_i(\bsigma_0) &\normeq\int \exp\Big\{\beta m\rho-\beta m\big(
(\rho+r_i)s_i^2+\rho\|\btau_i\|_2^2\big) \\
&\phantom{AAAAAAAAAAA}+2\beta
                  m\big(s_0s_i+\<\btau_0,\btau_i\>\big)+2\beta m h_i s_i+2\beta
\sqrt{m\cond_i}\<\bz_i,\btau_i\>\Big\}\, \de\rho\,\de^{m-1}\btau_i \de s_i\nonumber\\
& \normeq \int \rho^{-m/2}\, \exp\Big\{\beta m\rho
+\frac{(2\beta m  s_0+2\beta m h_i)^2}{4\beta  m(\rho+r_i)}\
+\frac{\|2\beta m \btau_0+2\beta\sqrt{m\cond_i}\bz_i\|_2^2}{4\beta  m\rho}
\Big\}\,\de\rho\\
& \normeq \int \exp\Big\{\beta m S_i(\rho;s_0)
+\frac{2\beta\sqrt{m\cond_i}}{\rho}\<\bz_i,\btau_0\>\Big\}\, \de\rho\, ,
\end{align}
where 
\begin{align}
S_i(\rho;s_0) & = S_{i,0}(\rho)+S_{i,1}(\rho)\, s_0 +
            \frac{1}{2}S_{i,2}(\rho)\, s_0^2\, ,\label{eq:ExpansionSi}\\
S_{i,0}(\rho) & \equiv \rho-\frac{1}{2\beta}\log\rho +
                \frac{h_i^2}{\rho+r_i}
                +\frac{1+\cond_i}{\rho}\, ,\\
S_{i,1}(\rho) & = \frac{2h_i}{\rho+r_i}\, ,\\
S_{i,2}(\rho) & = \frac{2}{\rho+r_i}-\frac{2}{\rho}\, .
\end{align}
where we approximated $\|\bz_i\|_2^2/m \approx 1$, and used the
identity $\|\btau_0\|_2^2=1-s_0^2$.
In the $m\to\infty$ limit, we approximate the integral over $\rho$ by its
saddle point. In order to obtain a set of equations for the parameters
of the ansatz (\ref{eq:GeneralVectorAnsatz}), we will expand the
exponent to second order in $s_0$. The saddle point location is given
by
\begin{align}
\rho_{i,*}(s_0) & \equiv \arg\min_{\rho\in\reals_{>0}}S_i(\rho;s_0)= \rho_{i} + \rho_{i,1} s_0 +O(s_0^2)\, .
\end{align}
Here $\rho_{i}$ solves the equation $S'_{i,0}(\rho_i) =0$. Henceforth
we shall focus on the $\beta\to\infty$ limit, in which the equation $S'_{i,0}(\rho_i) =0$
reduces to
\begin{align}
1 = \frac{h_i^2}{(\rho_{i}+r_i)^2}+\frac{1+\cond_i}{\rho_i^2}\, .\label{eq:RhoI}
\end{align}
The first order correction is given by $\rho_{i,1} =-S'_{i,1}(\rho_i)/S_{i,0}''(\rho_i)$. Substituting
in Eq.~(\ref{eq:ExpansionSi}), we get the saddle point value of $S_i(\rho_i;s_0)$:
\begin{align}
\min_{\rho\in\reals_{>0}}S_i(\rho;s_0) &= S_{i,0}(\rho_i)+S_{i,1}(\rho_i)\, s_0+\Delta_2  s_0^2+O(s_0^3)\, ,\\
\Delta_2 &=\frac{1}{2}\left[S_{i,2}(\rho_i)-\frac{S'_{i,1}(\rho_i)^2}{S''_{i,0}(\rho_i)}\right]\\
&= -\frac{1}{\rho_i}+\frac{1}{\rho_i+r_i}-
\left(\frac{h_i^2}{(\rho_i+r_i)^3}+\frac{1+\cond_i
  }{\rho_i^3}\right)^{-1}\frac{h_i^2}{(\rho_i+r_i)^4}\\
&= -\frac{1}{\rho_i}+\frac{1}{\rho_i+r_i}-\left(1+\frac{(1+\cond_i) r_i}{\rho_i^3}\right)^{-1}
\frac{h_i^2}{(\rho_i+r_i)^3}
\,,
\end{align}
where in the last expression we used Eq.~(\ref{eq:RhoI}).
Substituting in Eq.~(\ref{eq:CavitySparse}), we get
a recursion for  the triple $h_i$, $\cond_i$, $r_i$:
\begin{align}
\cond_0 & = \sum_{i=1}^k\frac{\cond_i}{\rho_{i}^2}\,,\\
h_0 & = \sum_{i=1}^k\frac{h_i}{\rho_{i}+r_i}\, ,\\
r_0 &=\sum_{i=1}^k\left\{\frac{1}{\rho_i}-\frac{1}{\rho_i+r_i}+
\left(1+\frac{(1+\cond_i) r_i}{\rho_i^3}\right)^{-1}\frac{h_i^2}{(\rho_i+r_i)^3}
\right\}
\end{align}
If $G_n$ is distributed according to the two-groups stochastic block
model, the distributions of this triples on different type vertices
are related by symmetry $(\cond_{i(+)},h_{i(+)},r_{i(+)}) \ed
(\cond_{i(-)},-h_{i(-)},r_{i(-)})$ for $i(+) \in V_+$, $i(-) \in V_-$. 
This leads to the following distributional recursion for the sequence
of random vectors $\{(\cond^t,h^t,r^t)\}_{t\ge 0}$:
\begin{align}
\cond^{t+1} & = \sum_{i=1}^{L_++L_-}\frac{\cond^t_i}{\rho_{i}^2}\, ,\label{eq:FullVectorial_1}\\
h^{t+1}& = \sum_{i=1}^{L_++L_-}\frac{s_ih^t_i}{\rho_{i}+r^t_i}\, ,\label{eq:FullVectorial_2}\\
r^{t+1} &=\sum_{i=1}^{L_++L_-}\left\{\frac{1}{\rho_i}-\frac{1}{\rho_i+r^t_i}+
\left(1+\frac{(1+\cond^t_i) r^t_i}{\rho_i^3}\right)^{-1}\frac{(h^t_i)^2}{(\rho_i+r^t_i)^3}
\right\}\, ,\label{eq:FullVectorial_3}
\end{align}
where  $L_+\sim \Poisson(a/2)$, $L_-\sim\Poisson(b/2)$, $s_1,\dots, s_{L_+} = +1$,
$s_{L_++1},\dots, s_{L_++L_-} = -1$, and $\{(\cond_i^t,h^t_i,r^t_i)\}$ are
  i.i.d. copies of $(\cond^t,h^t,r^t)$. Finally, $\rho_i$ is a
  function of $(\cond_i^t,h^t_i,r^t_i)$
  implicitly defined as the solution of
\begin{align}
1 = \frac{(h^t_i)^2}{(\rho_{i}+r^t_i)^2}+\frac{1+\cond^t_i}{\rho_i^2}\, .\label{eq:RhoI_T}
\end{align}

As a check of the above derivation, we next verify that we recover the
correct $d\to\infty$ limit, captured by the $\integers_2$
synchronization model, cf. Section \ref{sec:SDP_Synchro}.
We will focus on the fixed point of recursions
(\ref{eq:FullVectorial_1}) to (\ref{eq:FullVectorial_3}), and hence
omit iteration index $t$.
For large degree, we expect central limit theorem to imply the
following asymptotic behaviors:
\begin{align}
\cond_i &= d\, \mu_{\cond}+ O(\sqrt{d})\, ,\\
h_i & \ed  \sqrt{d}\big(\mu_h+\sigma_h\, Z\big) +o(\sqrt{d})\, ,\\
r_i & = \sqrt{d} \, \mu_r  +O(1)\, , 
\end{align}
where $\mu_{\cond}$, $\mu_h$, $\sigma_h$, $\mu_r$ are deterministic parameters to be
determined. Equation (\ref{eq:RhoI_T}) thus  implies $\rho_i
=\sqrt{d}\rho$, with $\rho$ solution of 
\begin{align}
1 =
  \frac{(\mu_h+\sigma_hZ)^2}{(\rho+\mu_r)^2}+\frac{\mu_{\cond}}{\rho^2}\, .\label{eq:RhoNew}
\end{align}
Equations (\ref{eq:FullVectorial_1}) to  (\ref{eq:FullVectorial_3})
then yield the following. From Eq.~ (\ref{eq:FullVectorial_1}) we get
\begin{align}
\mu_{\cond} &= \mu_{\cond}\E\left\{\frac{1}{\rho^2}\right\}\, ,\label{eq:Muc}
\end{align}
i.e., assuming $\mu_{\cond}\neq 0$, $\E\{\rho^{-2}\}=1$.
From Eq.~ (\ref{eq:FullVectorial_2}) we get
\begin{align}
\mu_h & = \lambda\,\E\left\{\left(\frac{\mu_h+\sigma_h
        Z}{\rho+\mu_r}\right)\right\}\, ,\label{eq:mu_h}\\
\sigma_h^2 & = \E\left\{\left(\frac{\mu_h+\sigma_h
        Z}{\rho+\mu_r}\right)^2\right\}\, .\label{eq:sigma_h}
\end{align}
Finally, from Eq.~(\ref{eq:FullVectorial_3}) we get
\begin{align}
\mu_r & =
        \E\left\{\frac{1}{\rho}-\frac{1}{\rho+\mu_r}+\left(1+\frac{\mu_{\cond}\mu_r}{\rho^3}\right)^{-1}
\frac{(\mu_h+\sigma_h Z)^2}{(\rho+\mu_r)^3}
\right\}\, .\label{eq:mu_r}
\end{align}
We claim that these equations are equivalent to the ones for the
$\integers_2$ synchronization problem, derived using the replica
method, namely Eqs.~(\ref{eq:minfty_1}) to (\ref{eq:RhoExp}). Indeed
Eq.~(\ref{eq:RhoExp}) coincides with Eq.~(\ref{eq:Muc}), for
$\mu_{\cond}\neq 0$. Setting $\mu_h=\mu $, $\sigma_h^2=q$, and
$\mu_r = r$, we obtain that Eqs.~(\ref{eq:minfty_1}),
(\ref{eq:minfty_2}) coincide with Eqs.~(\ref{eq:mu_h}),
(\ref{eq:sigma_h}). Further, taking expectation of Eq.~(\ref{eq:RhoNew}),
and using $\E\{\rho^{-2}\} =1$, we get $\mu_\cond = 1-q$. Hence, we
obtain that Eq.~(\ref{eq:RhoNew}) coincides with
Eq.~(\ref{eq:minfty_RhoEq}).

Substituting the values of various parameters in Eq.~(\ref{eq:mu_r}),
we obtain
\begin{align}
r & =
        \E\left\{\frac{1}{\rho}-\frac{1}{\rho+r}+\left(1+\frac{r(1-q)}{\rho^3}\right)^{-1}
\frac{(\mu+\sqrt{q} Z)^2}{(\rho+r)^3}
\right\}\, . \label{eq:mu_r_new}
\end{align}
We claim that this is equivalent to Eq.~(\ref{eq:minfty_3}). To see
this, notice that differentiating Eq.~(\ref{eq:RhoNew}) with respect
to $Z$ we get
\begin{align}
\frac{\partial \rho}{\partial Z} &=
  \left[\frac{(\mu+\sqrt{q}Z)^2}{(\rho+r)^3}+\frac{(1-q)}{\rho^3}\right]^{-1}
\cdot \frac{(\mu+\sqrt{q}Z)\sqrt{q}}{(\rho+r)^2}\\
& = \left(1+\frac{r(1-q)}{\rho^3}\right)^{-1} \frac{(\mu+\sqrt{q}Z)\sqrt{q}}{\rho+r}\, ,
\end{align}
where the second equality follows again from
Eq.~(\ref{eq:RhoNew}). Using this identity, we can rewrite
Eq.~(\ref{eq:mu_r_new}) as
\begin{align}
r & =
        \E\left\{\frac{1}{\rho}-\frac{1}{\rho+r}+\frac{1}{\sqrt{q}}
\frac{(\mu+\sqrt{q} Z)}{(\rho+r)^2}\, \frac{\partial \rho}{\partial Z}  
\right\}\, , 
\end{align}
or 
\begin{align}
r & =
        \E\left\{\frac{1}{\rho}-\frac{1}{\sqrt{q}}
\frac{\partial \phantom{Z}}{\partial Z}  \frac{\mu+\sqrt{q} Z}{\rho+r}
\right\}\, , 
\end{align}
Using Gaussian integration by parts in the second term we finally
obtain Eq.~(\ref{eq:minfty_3}).
This concludes our verification for the case of $d\to\infty$.

\subsection{Limitations of the
  vectorial ansatz}
\label{sec:Limitations}

Our analytical estimate of the SDP phase transition location, $\tlsdp(d)$ was carried out within
the vectorial ansatz in Eqs.~(\ref{eq:SymmetricSparseAnsatz}), (\ref{eq:InstabilityAnsatz}). Let us
stress once again that this ansatz is only approximate. 

The origin of this approximation can be gleaned
from the calculation in Section \ref{sec:Symmetric}. As we have seen Eq.~(\ref{eq:HnuFormula}) is only accurate
when $\<\bsigma_0,\bz_i\>$ is small. However, according to the same ansatz, $\bsigma_0$ will be aligned
to $\bz_0$, which --in turn-- can be aligned with $\bz_i$.

We expect this approximation to be accurate in the following regimes:
\begin{itemize}
\item For large average degree $d$. Indeed, in this case, $\bz_0$ is weakly correlated with $\bz_i$.
\item For $d$ close to $1$. In this case $\cond_0$ is small and hence, under $\nu_0$, $\bsigma_0$ is
approximately uniformly distributed, and hence has a small scalar product $\<\bz_i,\bsigma_0\>$.
\end{itemize}
Let us also notice that the vectorial ansatz can be systematically improved upon by 
considering quadratic terms tepending in two-dimensional projections, and so on. We 
leave this direction for future work.
%
%*************
%
\section{Numerical experiments for community detection}
\label{sec:commDetec}

In this section we provide details about our numerical simulations with 
the SDP estimator for the community detection problem. For the
reader's convenience we begin by recalling some definitions. 

We denote by $G_n = (V_n,E_n)$ the random graph over vertex set
$V_n=[n]$, generated according the hidden partition model,
and by  $\bxz\in\{+1,-1\}^n$ the vertex labels.
Conditional on $\bxz$,  edges are independent with distribution
\begin{align}
\prob\big\{(i,j)\in E_n\big|\bxz\big\} = 
\begin{cases}
a/n & \mbox{ if $x_{0,i}x_{0,j} =+1$,}\\
b/n & \mbox{ if $x_{0,i}x_{0,j} =-1$.}
\end{cases}\label{eq:SBMDefinition}
\end{align}
We  denote by $d = (a+b)/2$ the average degree, and by
$\lambda=(a-b)/\sqrt{2(a+b)}$ the `signal strength.'

Throughout this section, $\di$ indicates the set of neighbors of
vertex $i$, i.e. $\di \equiv \{j\in [n] : (i,j)\in E\}$.

We next recall the SDP relaxation for estimating community memberships:
\begin{align}
\begin{split}
\mbox{maximize} & \;\;\;\sum_{(i,j)\in E}X_{ij}\, ,\label{eq:SDP}\\
\mbox{subject to} & \;\;\; \bX\succeq 0\, ,\\
& \;\;\; \bX\bone = \bzero\, ,\;\;\;\; X_{ii}=1 \;\; \forall i\in [n]\, .
\end{split}
\end{align}
Denote by $\bX_\sopt= \bX_\sopt(G)$ an optimizer of the above problem. The estimated
membership vector is then obtained by `rounding' the principal
eigenvector of $\bX_\sopt$
as follows. Letting $\bv_1 = \bv_1(\bX_\sopt)$ be the principle eigenvector 
of $\bX_\sopt$, the SDP estimate is given by 
\begin{align}
\xsdp(G) = \sign\big(\bv_1(\bX_\sopt(G))\big)\, .
\end{align}

We measure the performance of such an estimator via the overlap:
\begin{align}
\Ove_n(\xsdp)= \frac{1}{n}\E\big\{\big|\<\xsdp(G),\bxz\>\big|\big\}\, ,\label{eq:myoverlap}
\end{align}
where $\bxz \in \{-1,+1\}^n$ encodes the ground truth memberships with $x_{0,i} = +1$ if $i\in V_+$
and $x_{0,i} = -1$ if $i\in V_-$. Note that $\Ove_n(\xsdp) \in [0,1]$
and a random guessing estimator yields overlap of order
 $O(1/\sqrt{n})$. 

The majority of our calculations were run on a cluster with $160$ cores
(Intel Xeon), taking roughly a month (hence total CPU time was roughly
10 years).

%======
\subsection{Optimization algorithms}

We write SDP optimization~\eqref{eq:SDP} in terms of the vector spin model. Let $\bsigma_i\in \reals^m$
be the vector spin assigned to node $i$, for $i\in [n]$, and define $\bsigma\equiv (\bsigma_1,\bsigma_2,\dotsc,\bsigma_n)$. 
We then rewrite the SDP~\eqref{eq:SDP} as 
\begin{align}
\begin{split}\label{eq:NonConvex_NEW}
\underset{\usigma}{\mbox{maximize}} & \;\;\;F(\bsigma) \equiv \sum_{(i,j)\in E}\<\bsigma_i,\bsigma_j\>\, ,\\
\mbox{subject to} & \;\;\;\bsigma \in \cM(n,m)\, ,
\end{split}
\end{align}
where the manifold $\cM(n,m)$ is defined as below:
\begin{align}
\cM(n,m) = \Big\{\bsigma = (\bsigma_1,\dots,\bsigma_n)\in (\reals^m)^n:\; \|\bsigma_i\|_2
= 1\, ,\;\; \sum_{i=1}^n \bsigma_i = 0\Big\}\, .
\end{align}
We will omit the dimensions when they are clear from the context. 

As discussed in the main text, the two optimization problems~\eqref{eq:SDP} and~\eqref{eq:NonConvex_NEW} 
have a value that differ by a relative error of $O(1/m)$, uniformly in
the size $n$. In particular, the asymptotic value of the SDP is the
same, if we let  $m\to \infty$ \emph{after} $n\to \infty$.

In fact the following empirical findings (further discussed below) point at a much stronger
connection:
\begin{enumerate}
\item  With high probability, the optimizer appears to be essentially independent of $m$
  already for moderate values of $m$ (in practice, already for $m =
  40\sim 100$, when $n \lesssim 10^4$).
\item Again, for moderate values of $m$, optimization methods do not
  appear to be stuck in local minima. Roughly speaking, while the
  problem is non-convex from a worst case perspective, typical
  instances are nearly convex.
\end{enumerate}
Motivated by these findings, we solve optimization problem~\eqref{eq:NonConvex_NEW} 
in lieu of SDP problem~\eqref{eq:SDP}, using the two algorithms described
below: $(i)$  Projected gradient ascent; $(ii)$ Block-coordinate ascent.

The rank-constrained formulation also allows to accelerate
the rounding step to compute $\xsdp(G)$, which can be obtained in time
$O(nm^2+m^3)$, instead of the naive $O(n^3)$.
Namely, given an optimizer $\bsigma^\sopt$, we compute the $m \times
m$ empirical covariance matrix 
\begin{equation}
\bhSigma\equiv\frac{1}{n} \sum_{i=1}^n \bsigma_i^\sopt
(\bsigma_i^\sopt)^\sT\,.
\end{equation}
Denoting by $\bphi$ the principal eigenvector of $\bhSigma$, we obtain the
estimator $\xsdp(G)\in\{1,-1\}^n$ via
\begin{align}\label{eq:fromMto2}
\hat{x}^{\mbox{\tiny{SDP}}}_i = \sign\big(\<\bphi,\bsigma^\sopt_i\>\big)\, .
\end{align}

This approach allows us to carry out high-precision simulations for
large instances, namely up to $n=64,000$. By comparison, standard SDP
solvers are based on interior-point methods  and cannot scale beyond
$n$ of the order of a few hundreds.

\subsubsection{Projected gradient ascent}\label{sec:ProjM}

The tangent space at $\bsigma \in \cM$ is given by
\begin{align}\label{eq:tangent}
\sT_{\bsigma}\cM \simeq \Big\{\bz = (\bz_1,\dots,\bz_n):\,
\<\bsigma_i,\bz_i\> = 0\, ,\; \sum_{i=1}^n\bz_i = 0\Big\}\, .
\end{align}
Define the orthogonal projectors in $\reals^m$:
\begin{align}
\P_i \equiv  \bsigma_i\bsigma_i^{\sT}\, ,\;\;\;
\Pp_i \equiv  \id-\bsigma_i\bsigma_i^{\sT}\, .
\end{align}
By identification~\eqref{eq:tangent}, the manifold gradient of $F$ reads
\begin{align}
\begin{split}
\nabla F(\bsigma) & = \big(\nabla F(\bsigma)_1,\dots,\nabla
F(\bsigma)_n\big)\, ,\\
\nabla F(\bsigma)_i & = \Pp_i(\bv_i-\obv_i)\, ,
\end{split}
\end{align}
where
\begin{align}
\begin{split}
\bv_i(\bsigma) & \equiv \sum_{j\in \di} \bsigma_j\, ,\label{VelDef}\\
\obv & \equiv
\Big(\sum_{i=1}^n\Pp_i\Big)^{-1}\Big(\sum_{i=1}^n\Pp_i\bv_i\Big)\, .
\end{split}
\end{align}

We next define the convex envelope of $\cM$:
\begin{align}
\conv(\cM) = \Big\{\bsigma = (\bsigma_1,\dots,\bsigma_n)\in (\reals^m)^n:\; \|\bsigma_i\|
\le 1\, ,\;\; \sum_{i=1}^n \bsigma_i = 0\Big\}\, ,
\end{align}
and the corresponding orthogonal projector
\begin{align}\label{eq:PcM}
\P_{\cM}(\by) \equiv\arg\min_{\bz\in\conv(\cM)}\|\by-\bz\|_2\, .
\end{align}
The projected gradient method alternates between a step in the
direction of $\nabla F(\bsigma)$ and a projection onto $\conv(\cM)$.
Pseudocode is given as Algorithm \ref{alg:NonConvex}.

%**********************************************
\begin{algorithm}[]
\caption{Projected gradient ascent for relaxed min bisection}\label{alg:NonConvex}
\begin{algorithmic}[1]
\REQUIRE $n$,$m$, edge set $E \subseteq [n]\times[n]$, $\tol_2$
\ENSURE $\bsigma^\sopt \equiv (\bsigma_1^\sopt ,\dotsc, \bsigma_n^\sopt)$
\STATE Initialize $\bsigma^0\in\cM$ at random\\
 (e.g. by letting $\bsigma_i\in \{\pm\be_1,\pm\be_2,\dotsc,
 \pm\be_m\}$ uniformly random conditional on
  $\sum_{i=1}^n\bsigma_i^0=0$).

\FOR { $t=0,1,2, \dots$}
\STATE Gradient step
\begin{align}
\boldsymbol{\tilde{\bsigma}}^{t+1} = \bsigma^t+ \eps_t\nabla F(\bsigma^t)\, ,
\end{align}
with $\eps_t = 1/\sqrt{t}$ the step size.
\STATE Projection step:
\begin{align}
\bsigma^{t+1} = \P_{\cM}(\boldsymbol{\tilde{\bsigma}}^{t+1})\,.
\end{align}
\IF {$\|\nabla F(\bsigma^{t+1})\|_2/\sqrt{n} \le \tol_2$}
\STATE set $T=t$
\STATE{\rm break}
\ENDIF
\ENDFOR
\RETURN 
$\bsigma^\sopt = \bsigma^{t}$
\end{algorithmic}
\end{algorithm}

The projected gradient method requires a subroutine for computing the
projection $\P_{\cM}$ onto $\conv(\cM)$. 
In order to compute this projection, we write the Lagrangian
corresponding to problem \eqref{eq:PcM}:
\begin{align}
\mathcal{L} \equiv \frac{1}{2} \sum_{i=1}^n \|\by_i-\bz_i\|_2^2 + \bw^\sT \sum_{i=1}^n \bz_i + \sum_{i=1}^n \frac{\mu_i}{2} (\|\bz_i\|_2^2-1)\,,
\end{align}
with $\mu \ge 0$ for $1\le i\le n$.
Setting $\nabla_{\bz_i} \mathcal{L} = 0$, we obtain
\begin{align}
\bz_i = \frac{\by_i - \bw}{\mu_i+1}\,.
\end{align}
Further, the constraint $\sum_{i=1}^n \bz_i = 0$ implies
\begin{align}
\bw = \Big(\sum_{i=1}^n\frac{1}{\mu_i+1}\Big)^{-1}\Big(\sum_{i=1}^n\frac{\by_i}{\mu_i+1}\Big)\,.\label{eq:PcM2}
\end{align}
Due to constraint $\|\bz_i\|\le 1$, we have $\mu_i \ge \|\by_i-\bw\|-1$. Also, by the KKT conditions, if the inequality is strict we have $\mu_i = 0$. Therefore,
\begin{align}
\mu_i = \max(\|\by_i-\bw\|_2-1,0)\,.\label{eq:PcM3}
\end{align}
Substituting for $\mu_i$ from Eq.~\eqref{eq:PcM3} into~\eqref{eq:PcM2} we arrive at
\begin{align}
 \bw =\Big(\sum_{i=1}^n\frac{1}{\max(\|\by_i-\bw\|_2,1)}\Big)^{-1}\Big(\sum_{i=1}^n\frac{\by_i}{\max(|\by_i-\bw\|,1)}\Big)\,.
\end{align}
We compute the  Lagrange multiplier $\bw$ in an iterative manner as described in Algorithm~\ref{alg:PcM}.
\vspace{0.2cm}

%**********************************************
\begin{algorithm}[]
\caption{Algorithm for computing the projection $\P_{\cM}(\by)$ }\label{alg:PcM}
\begin{algorithmic}[1]
\REQUIRE $\by$, $\tol_1$
\ENSURE $\P_{\cM}(\by)$
\STATE $\bw^0 \leftarrow \mathbf{0}$
\FOR { $t=0,1,2, \dots$}
\STATE update 
 \begin{align}
 \bw^{t+1} =\Big(\sum_{i=1}^n\frac{1}{\max(\|\by_i-\bw^t\|_2, 1)}\Big)^{-1}\Big(\sum_{i=1}^n\frac{\by_i}{\max(\|\by_i-\bw^t\|_2,1)}\Big)\,.
 \end{align}
 \IF {$\|\bw^{t+1} -\bw^t\|_2 \le \tol_1$}
 \STATE set $T=t$
 \STATE{\rm break}
 \ENDIF
 \ENDFOR
\RETURN 
\begin{align}
\P_{\cM}(\by)_i = \frac{\by_i-\bw^T}{\max(\|\by_i-\bw^T\|,1)}\,,\quad
\quad  {\rm for } \quad i=\{1,2,\dotsc,n\}\,.
\end{align}
\end{algorithmic}
\end{algorithm}
%******************************************

%******************************************

\subsubsection{Block coordinate ascent}\label{sec:greedyDyn}

We present here a second algorithm to solve
problem~\eqref{eq:NonConvex_NEW}, that uses block-coordinate descent.
This provides independent check of numerical results. Further, this
second method appears to be faster than projected gradient ascent.
%(particularly because it does not require to optimize the stepsize).

We start by considering an unconstrained version of the optimization
problem (with $\bA_G$ the adjacency matrix of the graph $G$):
\begin{align}\label{eq:maxWithConstr}
\mbox{maximize}\;\;&  \<\bsigma,\big(\bA_G-\eta\bone\bone^{\sT}\big)\bsigma\>\,.
\end{align}
Equivalently, this objective function can be written as $F(\usigma) -
\eta\|\bM(\bsigma)\|^2_2/2$, where $\bM(\bsigma) \equiv \sum_{i=1}^n\bsigma_i$.
As $\eta\to\infty$, this is of course equivalent to problem
\eqref{eq:NonConvex_NEW}. 

For $G$ distributed according to  the hidden partition model
(\ref{eq:SBMDefinition}), with random vertex labels $\bxz$, we have
$\E\{(\bA_G)_{ij}\} = d/n$. This suggests that $\eta\ge d/n$ should be
sufficient to obtain a balanced partition. In practice we  will take
$\eta \approx 1$ and check that this yields well balanced partitions
(i.e. $\bM \approx 0$), cf. Section \ref{sec:BlockParam} below.

We maximize the objective by iteratively maximizing over each of the
vectors $\bsigma_i$. The latter optimization has a close form
expression. More precisely, at each step of this dynamics, we sort
the variables in a random order and we update them sequentially, by
maximizing the objective function. This is easily done by aligning
$\bsigma_i$ along the `local field'
\begin{align}
\bh_i \equiv -\eta\, \bM(\usigma) + \sum_{j:(i,j)\in E} \bsigma_j \,.
\end{align}
We check the convergence  by measuring the largest variation in a spin
variable during the last iteration. Namely we define
\begin{align}
\Delta_\text{max}(t) =& \max_{i\in [n]} \|\bsigma_i^{t+1} - \bsigma_i^t\|_2\, ,
\end{align}
and we use as convergence criterion $\Delta_\text{max} < \tol_3$.
The corresponding pseudocode is presented as
Algorithm~\ref{alg:greedy}.

The resulting algorithm is very simple and depends on two parameters ($\eta$
and $\tol_3$) that will be discussed in the
Section~\ref{sec:numGreedy}, 
together with dependence on the number $m$ of spin components.

%**********************************************
\begin{algorithm}[]
\caption{Block coordinate  ascent for relaxed min bisection}\label{alg:greedy}
\begin{algorithmic}[1]
\REQUIRE $n$, $m$, edge set $E \subseteq [n]\times[n]$, $\eta$, $\tol_3$
\ENSURE $\usigma^\sopt \equiv (\bsigma_1^\sopt ,\dotsc, \bsigma_n^\sopt)$
\STATE Initialize $\usigma^0\in\cM$ at random\\
 (e.g. by letting $\bsigma_i\in \{\pm\be_1,\pm\be_2,\dotsc, \pm\be_m\}$ at random so that $\sum_i\bsigma_i^0=0$).
\FOR { $t=1,2, \dots$}

\STATE Choose a random permutation $\pi:[n]\to[n]$

\FOR { $i=1,2,\dots,n$}

\STATE Update spins, aligning to the local field

\begin{align}
\bsigma^{t+1}_{\pi(i)} = \frac{-\eta\, \bM(\bsigma^t) + \sum_{j:(\pi(i),j)\in E} \bsigma^t_j}
{\|-\eta\, \bM(\bsigma^t) + \sum_{j:(\pi(i),j)\in E} \bsigma^t_j\|_2}\,,
\end{align}

\ENDFOR

\IF {$\Delta_{\text{max}}(t) \le \tol_3$}
\STATE{set $T=t$}
\STATE{\rm break}
\ENDIF
\ENDFOR
\RETURN 
$\bsigma^\sopt = \bsigma^T$
\end{algorithmic}
\end{algorithm}
%******************************************

\subsection{Numerical experiments with the projected gradient ascent}
\label{sec:numAdel}

In this section we report our results with the projected gradient
algorithm. As mentioned above, we found that the block coordinate
ascent method was somewhat faster, and therefore we used the latter
for large-statistics simulations, and high-precision determinations of
the critical point $\lsdp(d)$. We defer to the next section for
further discussion.

We use the projected gradient ascent discussed in Section~\ref{sec:ProjM}, with $\tol_1 = \tol_2 = 10^{-6}$. 
For each value of $d \in \{5, 10, 15 , 20 , 25, 30\}$ and
$n\in\{2000$, $4000$, $8000$, $16000\}$, we generate $500$
realizations of graph $G$ from the stochastic block model defined in
Eq.~(\ref{eq:SBMDefinition}). 
In these experiments, we observed that the estimated membership vector
does not change for $m\ge 40$, cf. Section~\ref{sec:m}. 
The results reported here correspond to $m = 40$.

Figure~\ref{fig:SDP_OL} reports the estimated overlap $\Ove_n(\xsdp)$
(across realizations) achieved by the SDP estimator, for different
values of $d$ and $n$. 
The solid curve corresponds
to the cavity prediction, cf. equation~\eqref{eq:SDP-OL}, for large $d$. As we see the empirical results are
in good agreement with the analytical curve even for small average
degrees $d = 5, 10$. 

In particular, \emph{the phase transition location seems indistinguishable,
on this scale, from $(a-b)/\sqrt{2(a+b)}=1$.}

\begin{figure}[]
\centering
\captionsetup[subfigure]{aboveskip=5pt}
\begin{subfigure}[t]{0.47\textwidth}
\includegraphics*[width =2.8in]{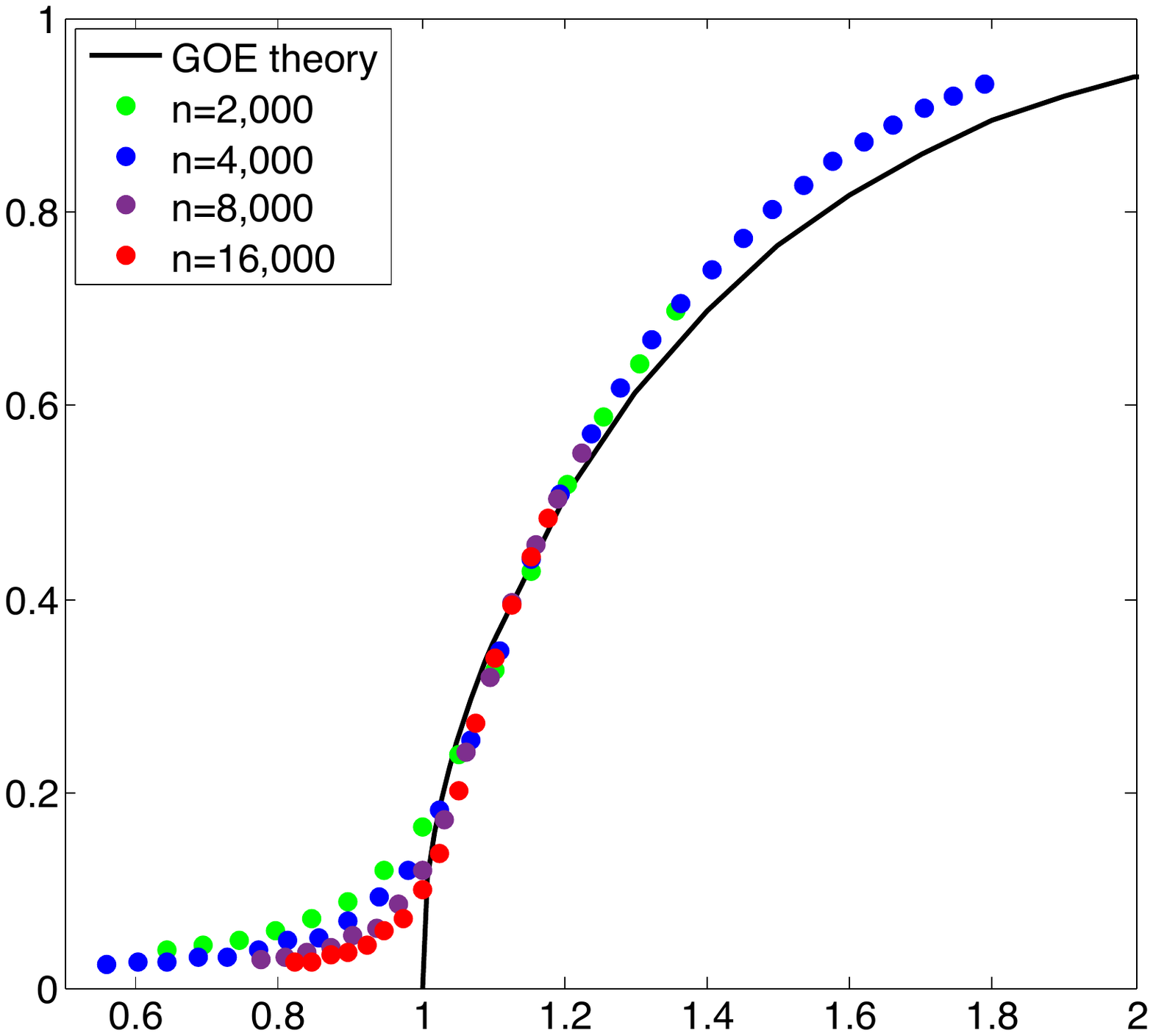}
\put(-215,77){\rotatebox{90}{\small {\sf Overlap}}}
\put(-120,-10){{\scriptsize $a-b/\sqrt{2(a+b)}$}}
\caption{$d = 5$}
\end{subfigure}
\hspace{0.7cm}
\begin{subfigure}[t]{0.47\textwidth}
\includegraphics[width =2.8in]{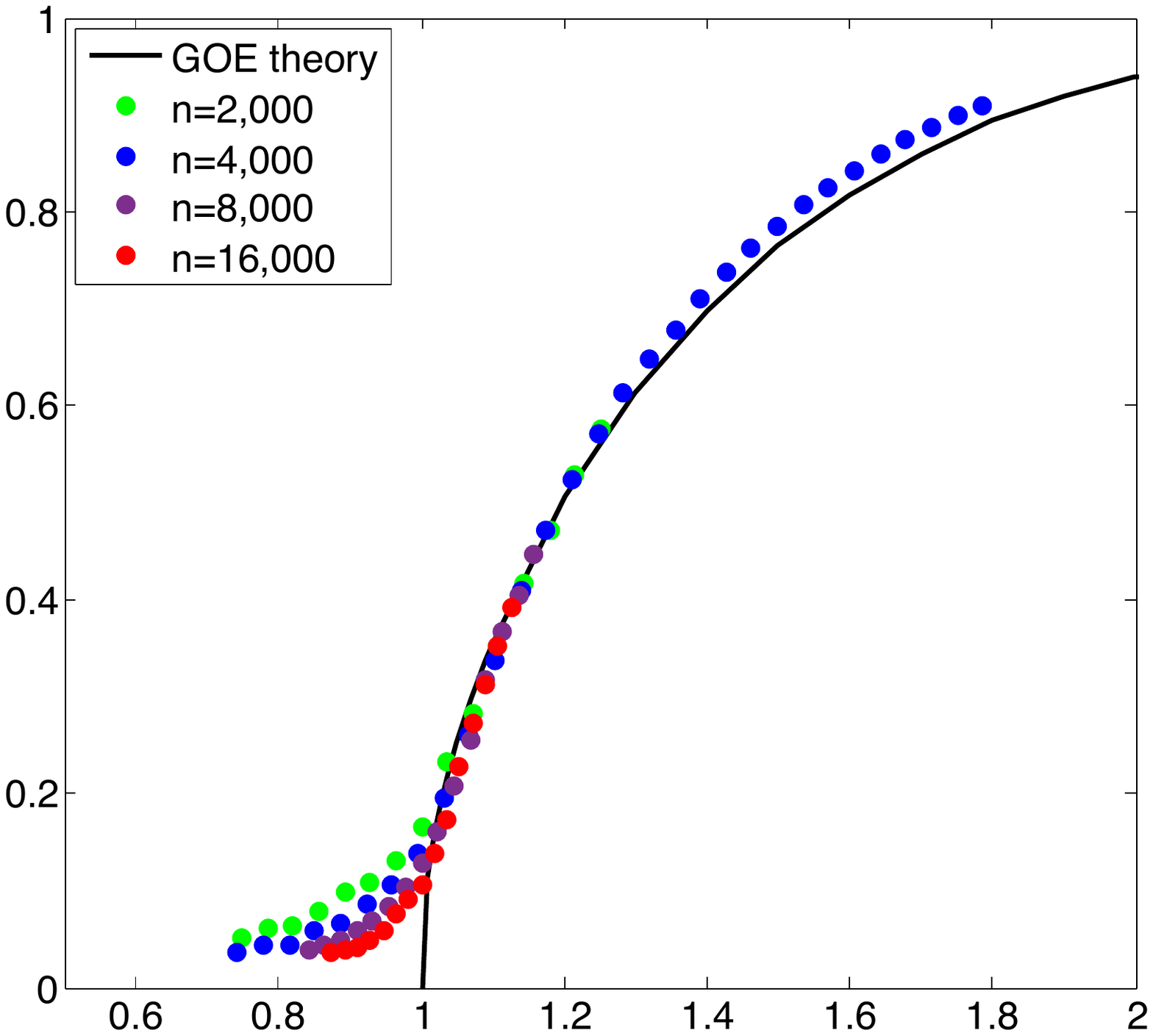}
\put(-215,77){\rotatebox{90}{\small {\sf Overlap}}}
\put(-120,-10){{\scriptsize $a-b/\sqrt{2(a+b)}$}}t
\caption{$d = 10$}
\end{subfigure}
\begin{subfigure}[t]{0.47\textwidth}
\includegraphics[width =2.8in]{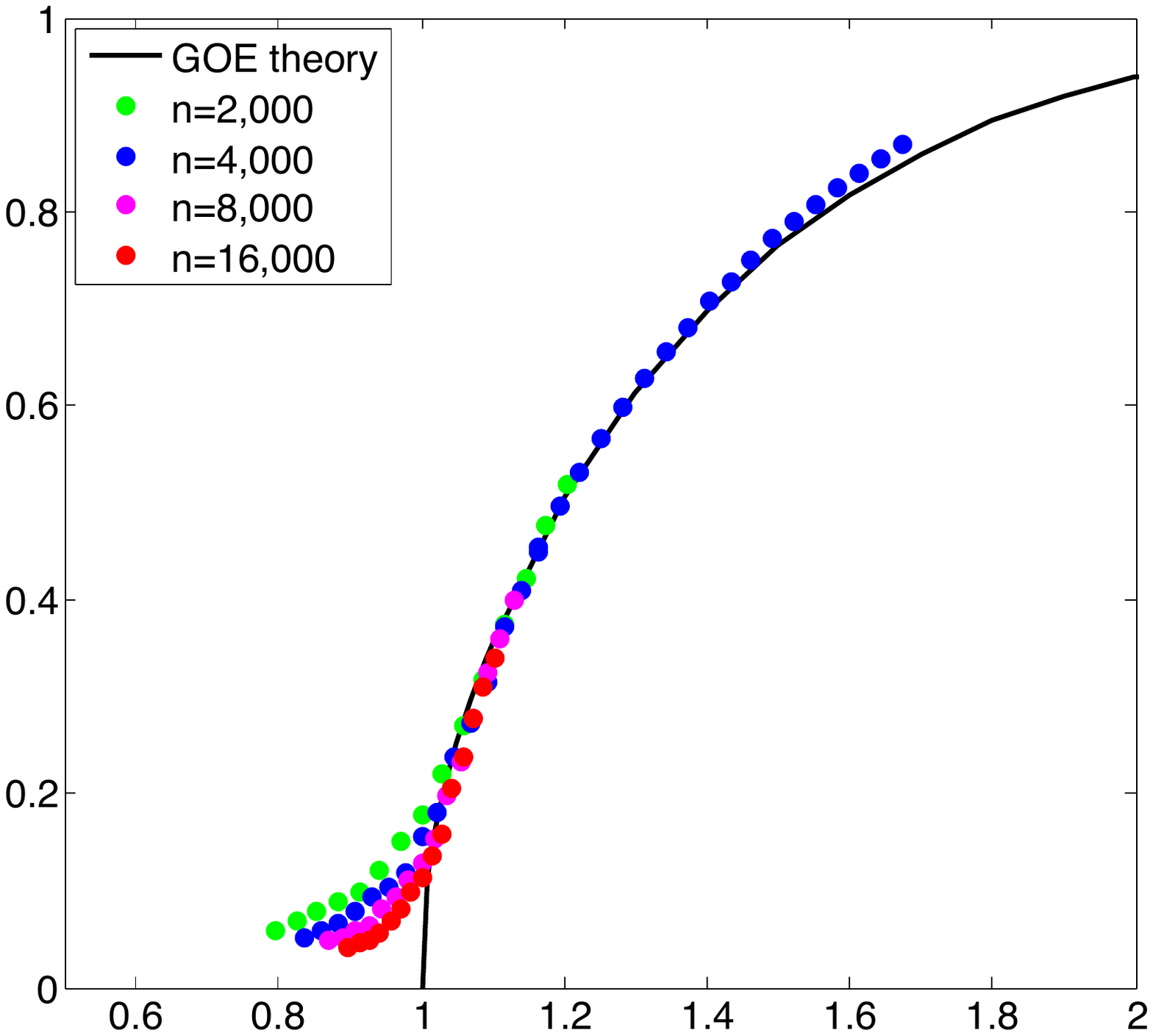}
\put(-215,77){\rotatebox{90}{\small {\sf Overlap}}}
\put(-120,-10){{\scriptsize $a-b/\sqrt{2(a+b)}$}}
\caption{$d = 15$}
\end{subfigure}
\hspace{0.7cm}
\begin{subfigure}[t]{0.47\textwidth}
\includegraphics[width =2.8in]{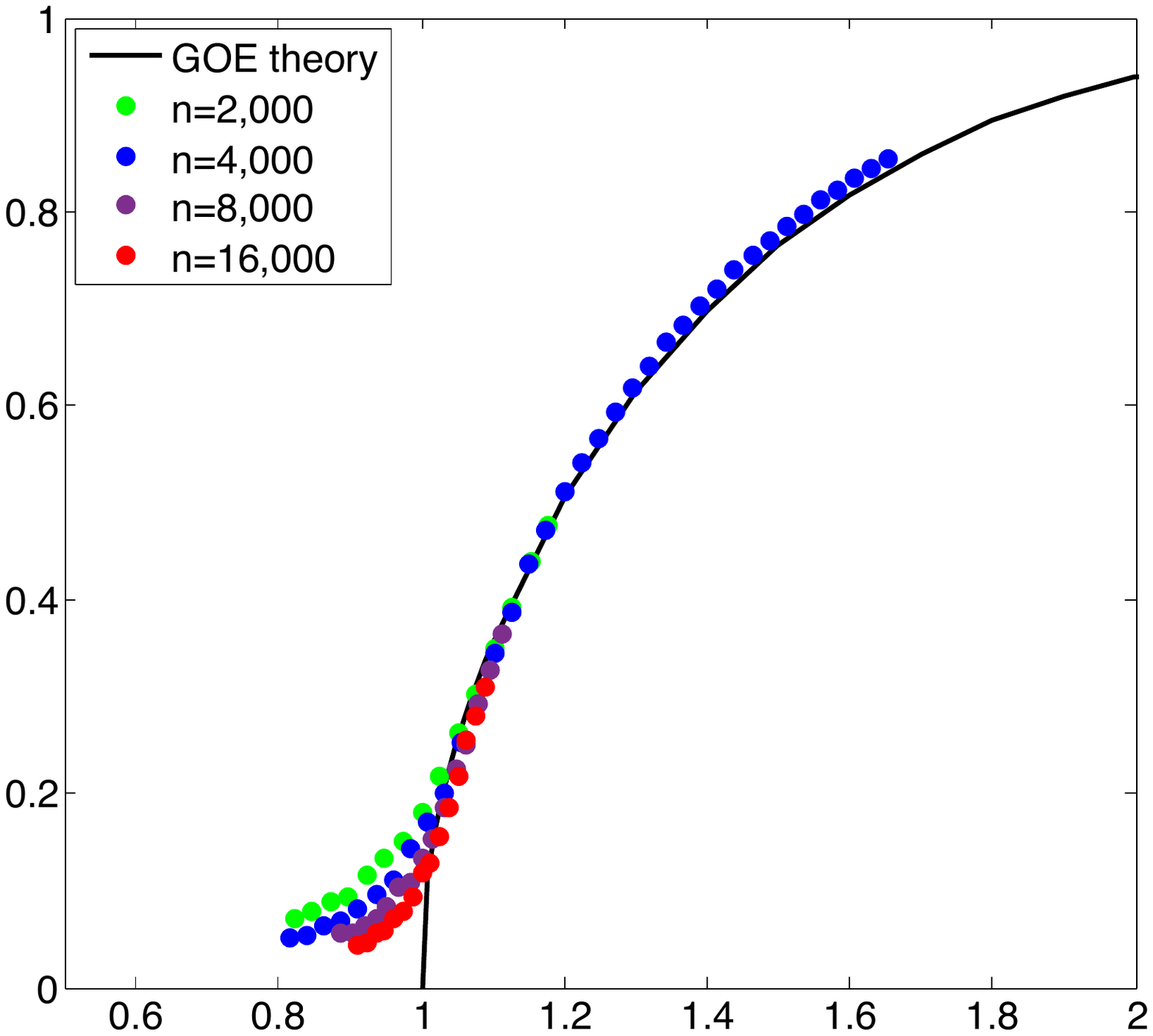}
\put(-215,77){\rotatebox{90}{\small {\sf Overlap}}}
\put(-120,-10){{\scriptsize $a-b/\sqrt{2(a+b)}$}}
\caption{$d = 20$}
\end{subfigure}

\begin{subfigure}[t]{0.47\textwidth}
\includegraphics[width =2.8in]{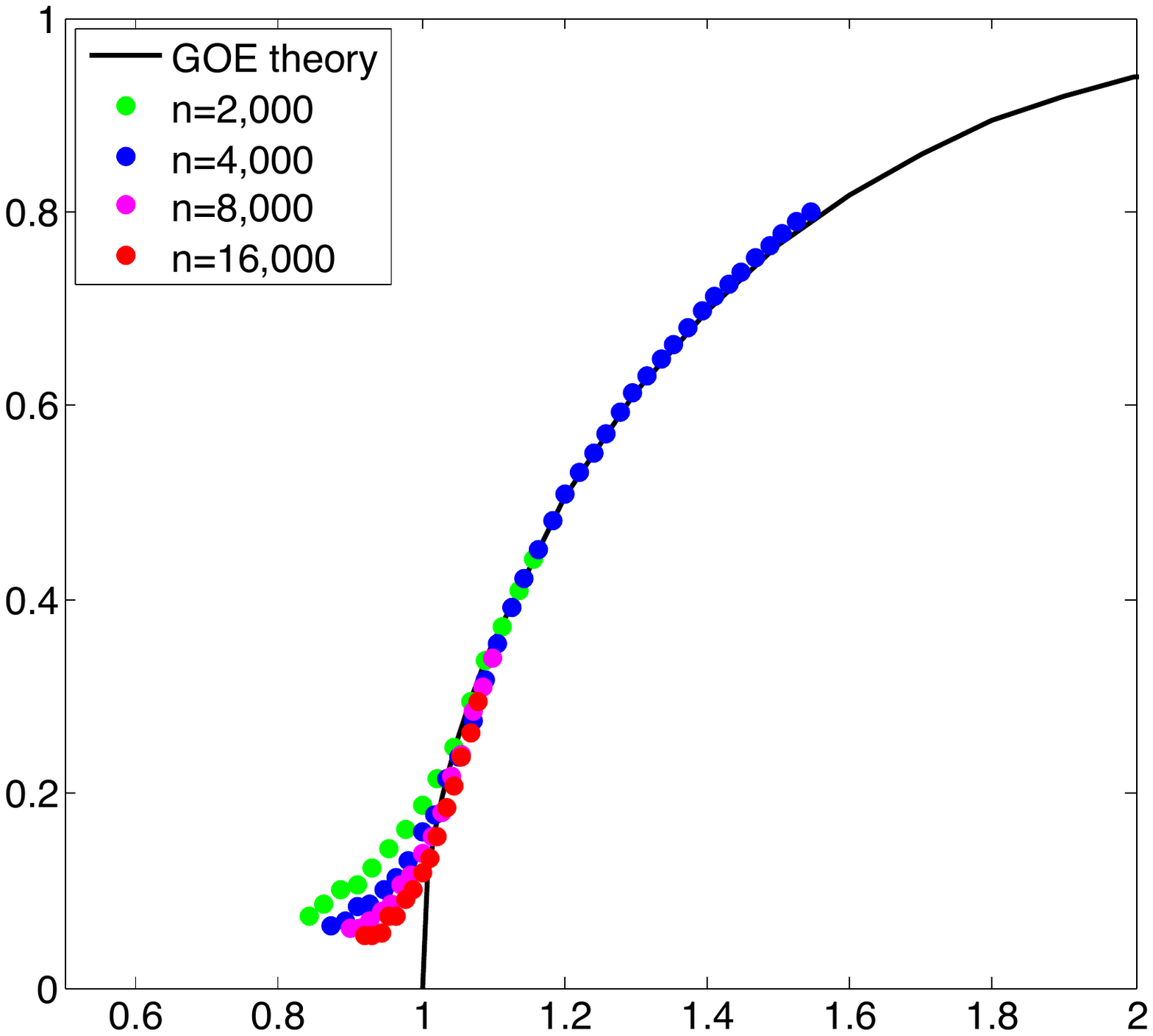}
\put(-215,77){\rotatebox{90}{\small {\sf Overlap}}}
\put(-120,-10){{\scriptsize $a-b/\sqrt{2(a+b)}$}}
\caption{$d = 25$}
\end{subfigure}
\hspace{0.7cm}
\begin{subfigure}[t]{0.47\textwidth}
\includegraphics[width =2.8in]{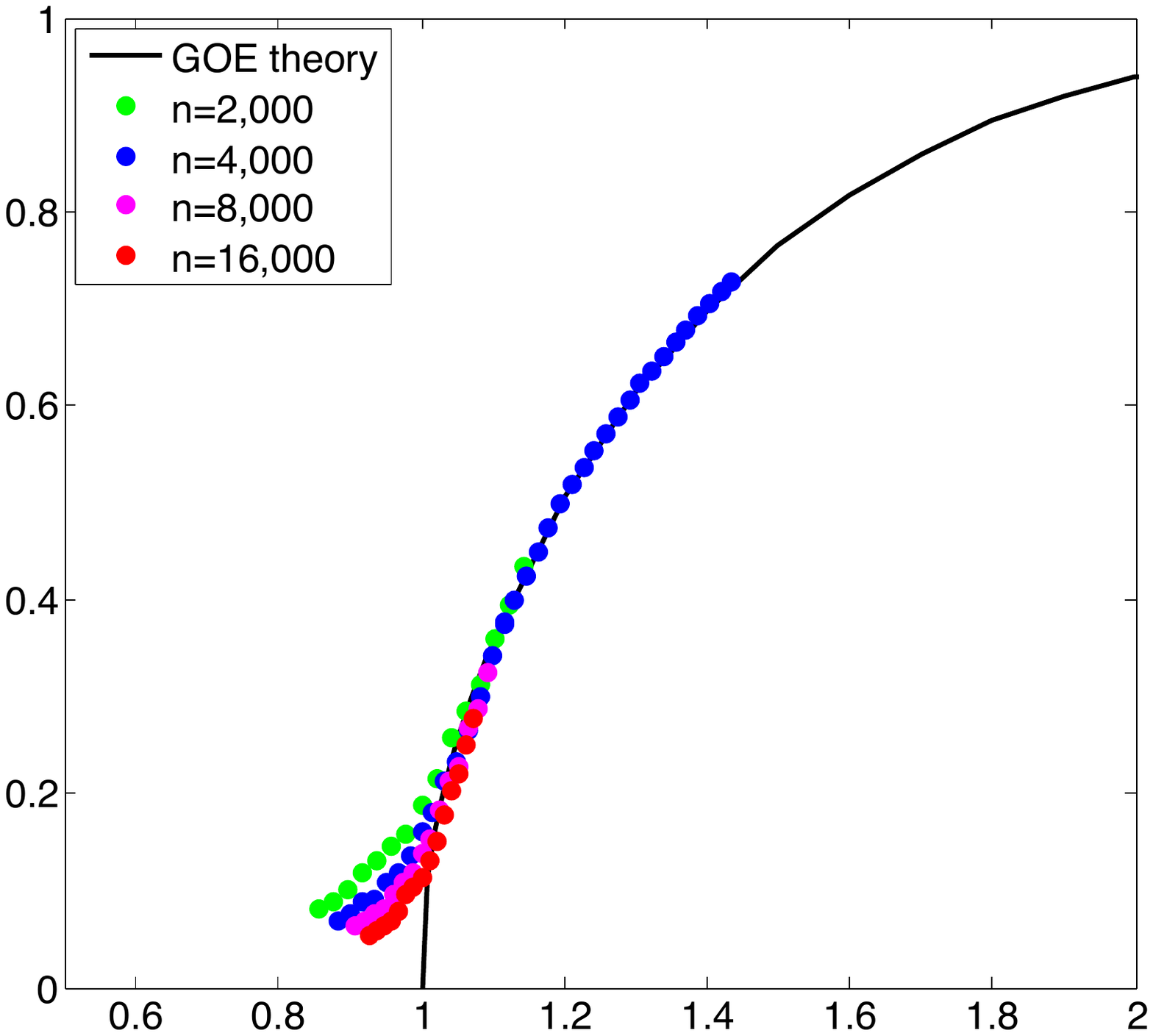}
\put(-215,77){\rotatebox{90}{\small {\sf Overlap}}}
\put(-120,-10){{\scriptsize $a-b/\sqrt{2(a+b)}$}}
\caption{$d = 30$}
\end{subfigure}
\vspace{3mm}
\caption{Community detection under the hidden partition model of
  Eq. \eqref{eq:SBMDefinition},
for several average degree $d =(a+b)/2$. Dots corresponds to the
performance of the SDP reconstruction method (averaged over $500$
realizations). The continuous curve is the asymptotic analytical
prediction for the Gaussian model (which captures the large-degree behavior).\label{fig:SDP_OL}}
\end{figure}
%
%***********************************
\subsubsection{Dependence on $m$}\label{sec:m}

As we explained before, optimization problem~\eqref{eq:NonConvex_NEW} and SDP~\eqref{eq:SDP} are equivalent provided $m\ge n$.
In principle, one can solve~\eqref{eq:NonConvex_NEW} applying Algorithm~\ref{alg:NonConvex} with $m = n$. 
However, this choice leads to a computationally 
expensive procedure. On the other hand, we expect the solution to be
essentially independent of  $m$ already for moderate values of $m$.
Several theoretical arguments point to this (in particular, the
Grothendieck-type inequality of \cite{montanari2015semidefinite}).
We provide numerical evidence in this section (supporting in
particular the choice $m=40$).

In the first experiment, we set the average degree $d = (a+b)/2 = 5$, $n = 4000$ 
and vary $\lambda = (a-b)/\sqrt{2(a+b)} \in \{0.9,1,1.1,1.2\}$.
For each $\lambda$, we solve for $a$ and $b$ and generate $100$ realizations of the graph as per model~\eqref{eq:SBMDefinition} with parameters $a, b$.
For several values of $m$, we solve optimization~\eqref{eq:NonConvex_NEW} and report the average overlap and its standard deviation. The results are summarized in Table~\ref{tbl:diffd_5}. 
As we see, for $m\ge 40$ the changes in average overlaps are comparable with the corresponding standard deviations. 
An interesting observation is that error bars for smaller $m$ are larger, indicating more variations of overlaps across different realizations.

Table~\ref{tbl:diffd_10} demonstrates the results for an analogous experiment with $d = 10$. 

We observe a similar trend for other several values of $a, b, n$. Based on these observations, we use $m = 40$ in our numerical experiments throughout this section.
   
\vspace{0.5cm}

\begin{table}[h]
\begin{center}
\begin{tabular}{|c|c|c|c|c|}
\hline
$$ & $\lambda=0.9$ & $\lambda=1$ & $\lambda=1.1$ & $\lambda=1.2$\\
\hline
$m = 5$    &  $0.1231\pm 0.0037$ & $0.0898\pm 0.0059$ & $0.2637\pm 0.0074$ & $0.4627\pm 0.0017$ \\
$m = 10$  &  $0.1544\pm 0.0008$ & $0.0659\pm 0.0045$ & $0.3303\pm 0.0018$ & $0.4612\pm 0.0006$ \\
$m = 20$  &  $0.1663\pm 0.0002$ & $0.1363\pm 0.0008$ & $0.3358\pm 0.0005$ & $0.4598\pm 0.0002$ \\
$m = 40$  &  $0.1668\pm 0.0001$ & $0.1398\pm 0.0002$ & $0.3358\pm 0.0002$ & $0.4599\pm 0.0001$ \\
$m = 80$  &  $0.1669\pm 0.0001$ & $0.1397\pm 0.0002$ & $0.3358\pm 0.0002$ & $0.4596\pm 0.0001$ \\
$m =160$ &  $0.1669\pm 0.0001$ & $0.1397\pm 0.0002$ & $0.3358\pm 0.0002$ & $0.4596\pm 0.0001$ \\
\hline
\end{tabular}
\end{center}
\caption{We fix $d = (a+b)/2 = 5$ and vary $\lambda = (a-b)/\sqrt{2(a+b)} \in \{0.9,1,1.1.,1.2\}$. The reported values are the average overlaps (over 100 realizations)
with one standard deviations, for several values of $m$. As we see for $m\ge 40$, the changes in the average overlaps are comparable to the corresponding standard deviations.
\label{tbl:diffd_5}}
\end{table}

\begin{table}[h]
\begin{center}
\begin{tabular}{|c|c|c|c|c|}
\hline
$$ & $\lambda=0.9$ & $\lambda=1$ & $\lambda=1.1$ & $\lambda=1.2$\\
\hline
$m = 5  $   &  $0.0875\pm0.0035$ & $0.1385\pm 0.0072$ & $0.3760\pm 0.0021$ & $0.5045\pm 0.0008$\\
$m = 10$   &  $0.1322\pm0.0007$ & $0.1943\pm 0.0014$ & $0.3873\pm 0.0006$ & $0.5050\pm 0.0003$\\
$m = 20$   &  $0.1346\pm0.0002$ & $0.2100\pm 0.0003$ & $0.3890\pm 0.0001$ & $0.5070\pm 0.0001$\\
$m = 40$   &  $0.1353\pm0.0001$ & $0.2089\pm 0.0001$ & $0.3885\pm 0.0001$ & $0.5088\pm 0.0001$\\
$m = 80$   &  $0.1354\pm0.0000$ & $0.2090\pm 0.0000$ & $0.3885\pm 0.0000$ & $0.5089\pm 0.0000$\\
$m = 160$ &  $0.1354\pm0.0000$ & $0.2090\pm 0.0000$ & $0.3885\pm 0.0000$ & $0.5089\pm 0.0000$\\
\hline
\end{tabular}
\end{center}
\caption{We fix $d = (a+b)/2 = 10$ and vary $\lambda = (a-b)/\sqrt{2(a+b)} \in \{0.9,1,1.1.,1.2\}$. The reported values are the average overlaps (over 100 realizations)
with one standard deviations, for several values of $m$. As we see for $m\ge 40$, the changes in the average overlaps are comparable to the corresponding standard deviations.
\label{tbl:diffd_10}}
\end{table}

%
%
%***********************************
\subsubsection{Robustness and comparison with spectral methods}

Spectral methods are among the most popular nonparametric approaches
to clustering. These 
methods classify nodes according to a the eigenvectors of a matrix associated with the graph, for instance its  adjacency matrix or
Laplacian.  While standard spectral clustering works well when the
graph is sufficiently dense or is regular, it is  significantly
suboptimal for sparse graphs. The reason is that the leading
eigenvector of the adjacency matrix is localized around the high
degree nodes.\footnote{Note that for sparse stochastic block models as
  in~\eqref{eq:SBMDefinition}, node degrees do not concentrate and we
  observe highly heterogeneous degrees.} 

Recently,~\cite{krzakala2013spectral} proposed a class of very interesting spectral
methods based on the non-backtracking walk on the directed edges of
the graph $G$. The spectrum of non-backtracking matrix is more robust to 
high-degree nodes because a walk starting at a node cannot return to
it immediately.  Later,~\cite{saade2014spectral} proposed 
another spectral method, based on the Bethe Hessian operator, that is
computationally more efficient than the non-backtracking
operator. Further, the (determinant of the) Bethe Hessian is closely
related to the spectrum of the non-backtracking operator and exhibits
the same convenient properties for the aim of clustering.
Rigorous analysis of spectral methods under the model
\eqref{eq:SBMDefinition} was carried out in \cite{massoulie2014community,mossel2013proof,bordenave2015non}.
The main result of these papers is that spectral methods allow to
estimate the hidden partition significantly better than random
guessing immediately above the ideal threshold
$\lambda=(a-b)/\sqrt{2(a+b)}=1$. 

As we saw in the previous section, the threshold of the SDP-based
method is extremely close to the ideal one.
Here, we compare the Bethe Hessian algorithm with SDP approach in terms of robustness to model miss-specification.    
We perturb the hidden partition model~\eqref{eq:SBMDefinition} as follows. For a perturbation level $\alpha \in [0,1]$,
we draw $n\alpha$ vertices $i_1,i_2, \dotsc, i_{n\alpha}$ uniformly at
random and for each vertex $i_\ell$, we  add to graph $G$ connecting
all of the neighbors of $i_{\ell}$. 
This results in adding $O(nd^2 \alpha)$ edges to the underlying
model~\eqref{eq:SBMDefinition}. This perturbation mimics an important
feature of real networks that is absent from the stochastic block
model \eqref{eq:SBMDefinition}, the so-called triangle closure
property \cite{easley2010networks} (two friends of a person are often friends).

 For perturbation levels $\alpha\in \{0, 0.025, 0.05\}$, we compare
 the performance of SDP and Bethe Hessian algorithms in terms of
 Overlap, defined by~\eqref{eq:myoverlap}. 
Figure~\ref{fig:SDP-BH} summarizes the results for $n=16,000$ and
average degree $d = (a+b)/2 = 10$. 
The reported overlaps are averaged over $100$ realizations of
the model. 

In absence of any perturbation (curves $\alpha=0$), the two algorithms
have nearly equivalent performances. However, already for
$\alpha=0.025$, SDP is substantially superior.  While  SDP appears to
be rather insensitive to the perturbation,  the performance of the Bethe Hessian
algorithm is severely degraded by it. This is because the added triangles perturb the spectrum of
the non-backtracking operator (and similarly of the Bethe Hessian operator) significantly, resulting in poor classification of the nodes.  

\begin{figure}[]
\centering
\includegraphics[width=4in]{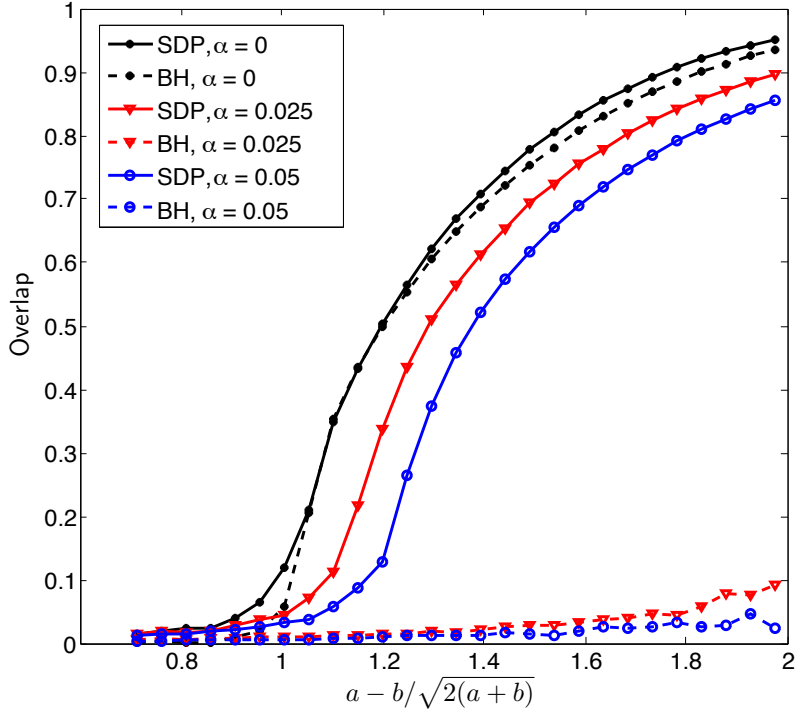}
\put(-300,120){\rotatebox{90}{\small {\sf Overlap}}}
\put(-173,-10){{\small $a-b/\sqrt{2(a+b)}$}}
\caption{Comparison of SDP and Bethe Hessian algorithm on perturbed hidden partition model.
Here, $n = 16,000$, $d = (a+b)/2 = 10$ and we report average overlaps over $100$ realizations. 
Different colors (markers) represent different perturbation levels $\alpha$. Solid curves are for SDP
algorithm and dashed curves are for Bethe-Hessian algorithm. As we see SDP algorithm is more robust than Bethe-Hessian
algorithm to the perturbation level $\alpha$.
\label{fig:SDP-BH}}
\end{figure}
%

%
%***********************************

\subsection{Numerical experiments with block coordinate ascent}
\label{sec:numGreedy}

In this section we present our simulations  with the block
coordinate ascent algorithm, cf. Algorithm \ref{alg:greedy}.
We first discuss the choice of the algorithm parameters $\eta$ and
$\tol_3$. cf. Section \ref{sec:M_Times}. 
In Section \ref{sec:M_Times} we  analyze the dependence on the number of dimensions
$m$ and the behavior of the convergence time. We conclude by
determining the phase transition point in Section \ref{sec:LambdaC},
and comparing this location with our analytical predictions.

\subsubsection{Selection of the algorithm parameters}
\label{sec:BlockParam}

 Algorithm~\ref{alg:greedy} requires specifying the parameters $\eta$
 (that penalizes $\bM(\bsigma)\neq 0$)
 and $\tol_3$ (for the convergence criterion). In order to investigate
 the dependance on these parameters, we set $m=100$ which, as we will
 see, is large enough to approximate the behavior at $m=n$.

\begin{figure}[]
\centering
\includegraphics[width=0.7\textwidth]{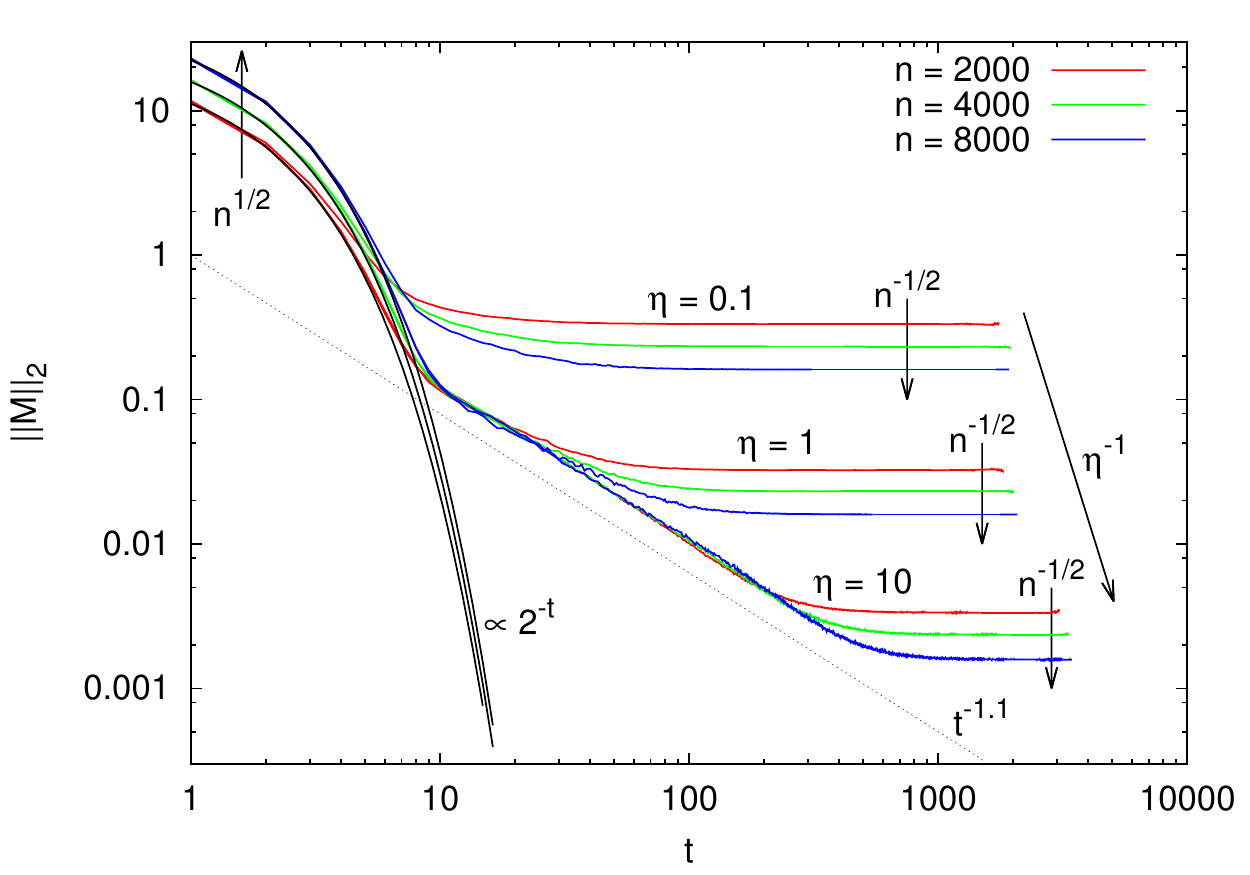}
\caption{Decrease of the norm of the total magnetization
  $\|\bM(\bsigma^t)\|_2$ during block-coordinate ascent. The starting value $\|\bM(\bsigma^0)\|_2$ is $O(\sqrt{n})$ since the initial configuration is randomly chosen. The asymptotic value is always very small for the sizes studied and decreases as $1/\eta$ and $1/\sqrt{n}$.}
\label{fig:globalMag}
\end{figure}

In Figure~\ref{fig:globalMag} we plot the evolution of the norm of the
`global magnetization,' $\|\bM(\bsigma^t)\|_2$, as a function of the
number of iterations $t$. Notice that each iteration corresponds to
$n$ updates, one update of each vector $\bsigma_i$, $i\in [n]$.
 We used $d=5$ and $\lambda=1.1$, and  we averaged over a number of samples
 ranging from $100$ (for $n=8000$) to $400$ (for $n=2000$).

We observe three regimes: 
\begin{itemize}
\item[$(i)$] Initially the magnetization decays exponentially,
  $\|\bM(\bsigma^t)\|_2 \approx  \|\bM(\bsigma^0)\|_2\: 2^{-t}$.  Further, it
  increases  slowly with $n$. Indeed from central limit theorem, we
  have $\|\bM(\bsigma^0)\|_2= \Theta(\sqrt{n})$. The same behavior
  $\|\bM(\bsigma^t)\|_2 =\Theta(\sqrt{n})$ is found empirically at
  small $t$.
\item[$(ii)$]  In an intermediate interval of times, we have a power
  law decay $\|\bM(\bsigma^t)\|_2 \propto t^{-a}$, with exponent
  $a\approx 1.1$. This intermediate regime is present only for $\eta$
  large enough.
\item[$(iii)$] For large $t$, $\|\bM(\bsigma^{t})\|_2$ reaches a
  plateau whose value scales like 
$\|\bM(\bsigma^{t})\|_2 =\Theta(1/\sqrt{n})$ with the system size and is proportional to $1/\eta$.
\end{itemize}

Already for $\eta=1$, the value of the plateau is very small, namely
\begin{align}
\left\|\sum_{i=1}^n\bsigma_i^t\right\|_2\lesssim 0.1\, .
\end{align}
Further, this value is decreasing with $n$. Given that
$\|\bsigma_i\|_2=1$, we interpret the above as evidence that the
constraint $\bX\bone=0$ is satisfied with good approximation. We will
therefore use $\eta=1$ in our simulations.

As an additional remark, notice that there is no special reason to enforce the
constraint $\bX\bone=0$ strictly. Indeed the SDP (\ref{eq:SDP_Graph})
can be replaced by
\begin{align}
\mbox{maximize} &\;\;\;\;\; \<\big(\bA_G-\eta\bone\bone^{\sT}\big),\bX\>\, ,\\
\mbox{subject to} &\;\;\;\;\; \bX\succeq 0\, ,\\
&\;\;\;\;\; X_{ii} = 1\;\; \forall i\in [n]\, ,
\end{align}
with an arbitrary value of $\eta$. Of course this is useful provided
$\eta$ is large enough to rule out the solution
$\bX=\bone\bone^{\sT}$. As mentioned above, $\eta\ge d/n$ should be already
large enough \cite{montanari2015semidefinite}.

\begin{figure}[]
\centering
\includegraphics[width=0.9\textwidth]{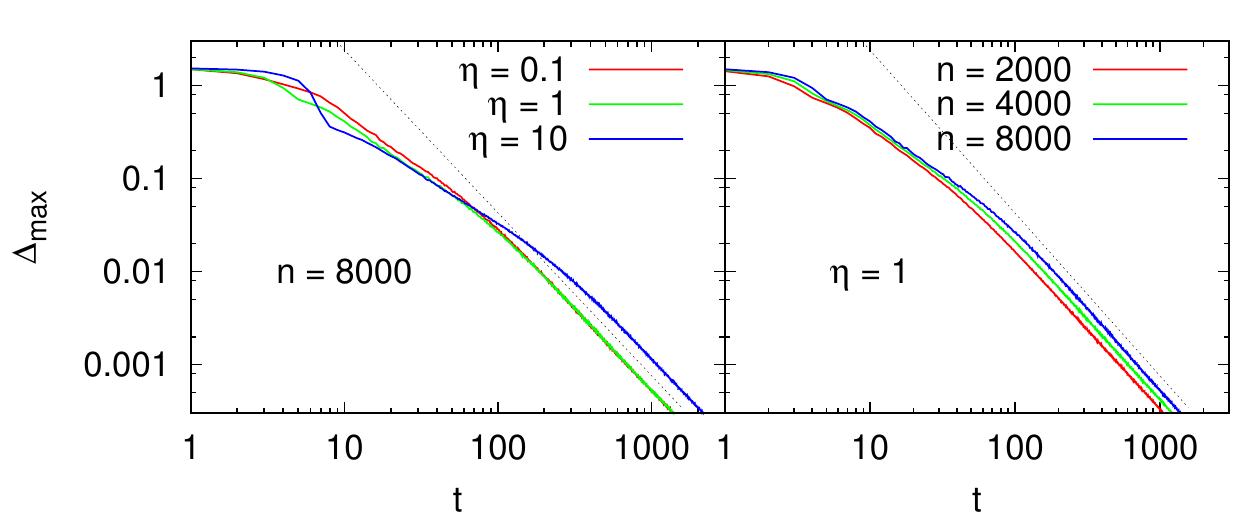}
\caption{Evolution of $\E\Delta_\text{max}(t)  = \E\max_{i\in [n]}
  \|\bsigma_i^{t+1} - \bsigma_i^t\|_2$ during block-coordinate
  ascent.  Left panel: dependence on the choice of the Lagrange
  parameter $\eta$. Right panel: dependence on the number of vertices.}
\label{fig:error}
\end{figure}

\begin{figure}[]
\centering
\includegraphics[width=0.7\textwidth]{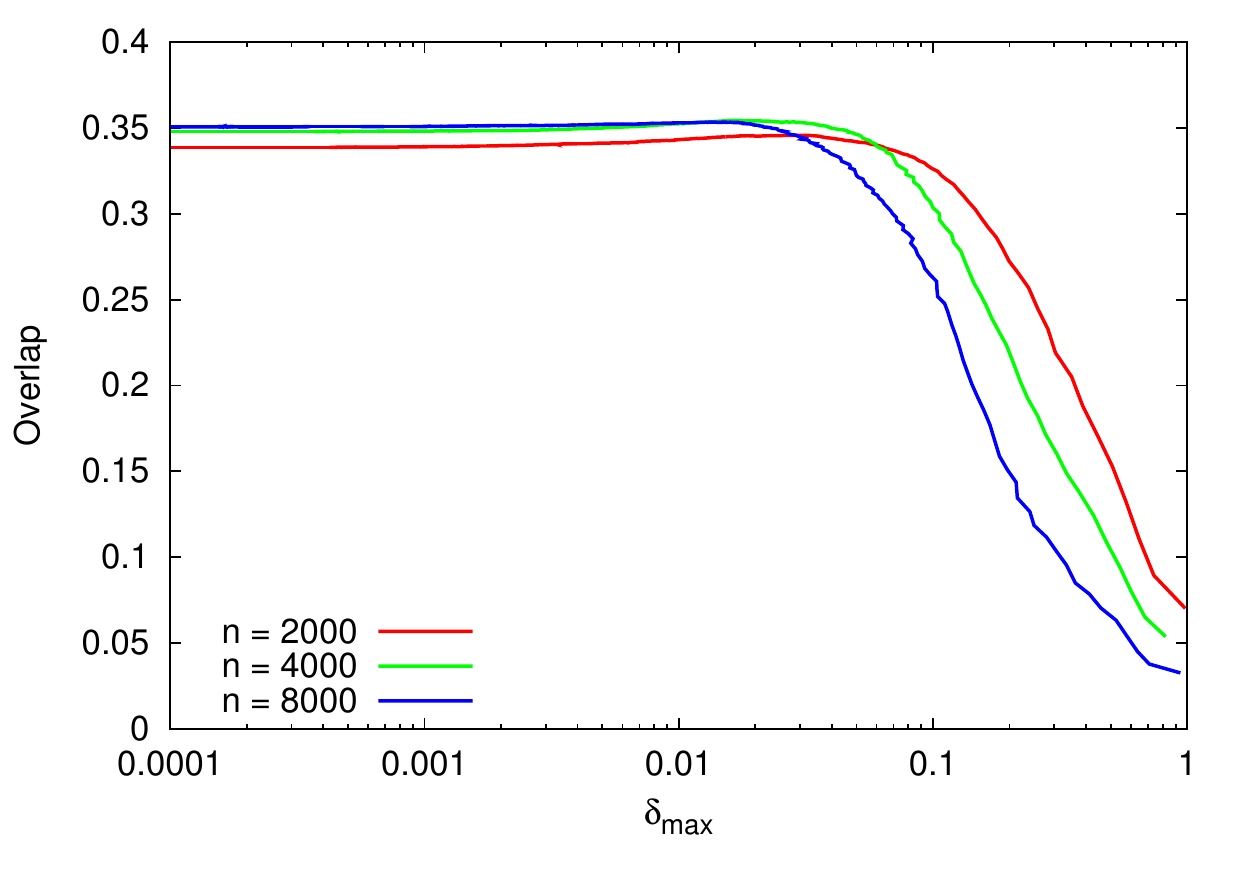}
\caption{The mean overlap does not depend on the value of $\E\Delta_\text{max}$ at which the greedy algorithm is stopped, as soon as $\delta_\text{max} \lesssim 10^{-3}$.}
\label{fig:overlapVSdelta}
\end{figure}

In Figure~\ref{fig:error} we show how $\DeltaMax(t)$ decreases with
time in Algorithm~\ref{alg:greedy}, again with $d=5$ and
$\lambda=1.1$. 

In the left panel we fix $n=8000$ and study  the
dependence on the Lagrange parameter $\eta$. 
We observe two regimes. While for $\eta\lesssim 1$, the convergence rate is
roughly independent of $\eta$, for $\eta\gtrsim 1$, it becomes
somewhat slower with $\eta$. This supports the choice $\eta=1$. 

In the right we fix $\eta=1$ and study the dependence of the
convergence time on the graph size $n$. The number of iterations
appears to increase slowly with $n$  (see also Figure
\ref{fig:convTimesAll}).   Both datasets are consistent with a power
law convergence
\begin{align}
\Delta_\text{max}(t) \approx C(n)\, t^{-b}\, ,
\end{align}
with $b\approx 1.75$ (dotted line), and $C(n)$ polynomially increasing
with $n$
(see below for a discussion of the overall scaling of computational
complexity with $n$).

In order to select the tolerance parameter for convergence,  $\tol_3$,
we study the evolution of estimation error. Define the overlap
achieved after $t$ iteration as follows. First estimate the vertex labels
by computing the top-left singular vector of $\bsigma^t$, namely
\begin{align}
\hbx^t(G) = \sign\big(\bv_1(\bsigma^t(\bsigma^t)^{\sT})\big)\, .
\end{align}
Then define, as before
\begin{align}
\Ove_n(\hbx^t)=
  \frac{1}{n}\E\big\{\big|\<\hbx^t(G),\bxz\>\big|\big\}\, .
\end{align}
Of course, the accuracy of the SDP estimator is given by
\begin{align}
\Ove_n(\xsdp) = \lim_{t\to\infty}\Ove_n(\hbx^t)\, .
\end{align}
In Figure~\ref{fig:overlapVSdelta}, we plot $\Ove_n(\hbx^t)$ as a
function of $\E\Delta_{\max}(t)$  for several values of $n$, $d=5$ and $\lambda=1.1$.
These data suggest that $\tol_3=10^{-3}$ is small enough to
approximate the $t\to\infty$ behavior. We will fix such a value hereafter.

%
%************
%
\subsubsection{Selection of $m$ and scaling of convergence times}
\label{sec:M_Times}

The last important choice is the value of the dimension (rank)
parameter  $m$. We know from \cite{montanari2015semidefinite} that the
optimal value of the rank constrained problem (\ref{eq:NonConvex_NEW}) is
within a relative error of order $O(1/m)$ of the value of the SDP
(\ref{eq:SDP}). Also, a result by Burer and Monteiro \cite{burer2003nonlinear} implies
that, for $m\ge 2\sqrt{n}$, the objective function
(\ref{eq:NonConvex_NEW}) has no local maxima that are not also global
maxima (barring accidental degeneracies).

We empirically found that $m$ of the order of $10$ or larger is
sufficient to obtain accurate results. Through most of our simulations,
we fixed however $m=100$, and we want to provide evidence that this  is a safe choice

\begin{figure}[]
\centering
\includegraphics[width=0.9\textwidth]{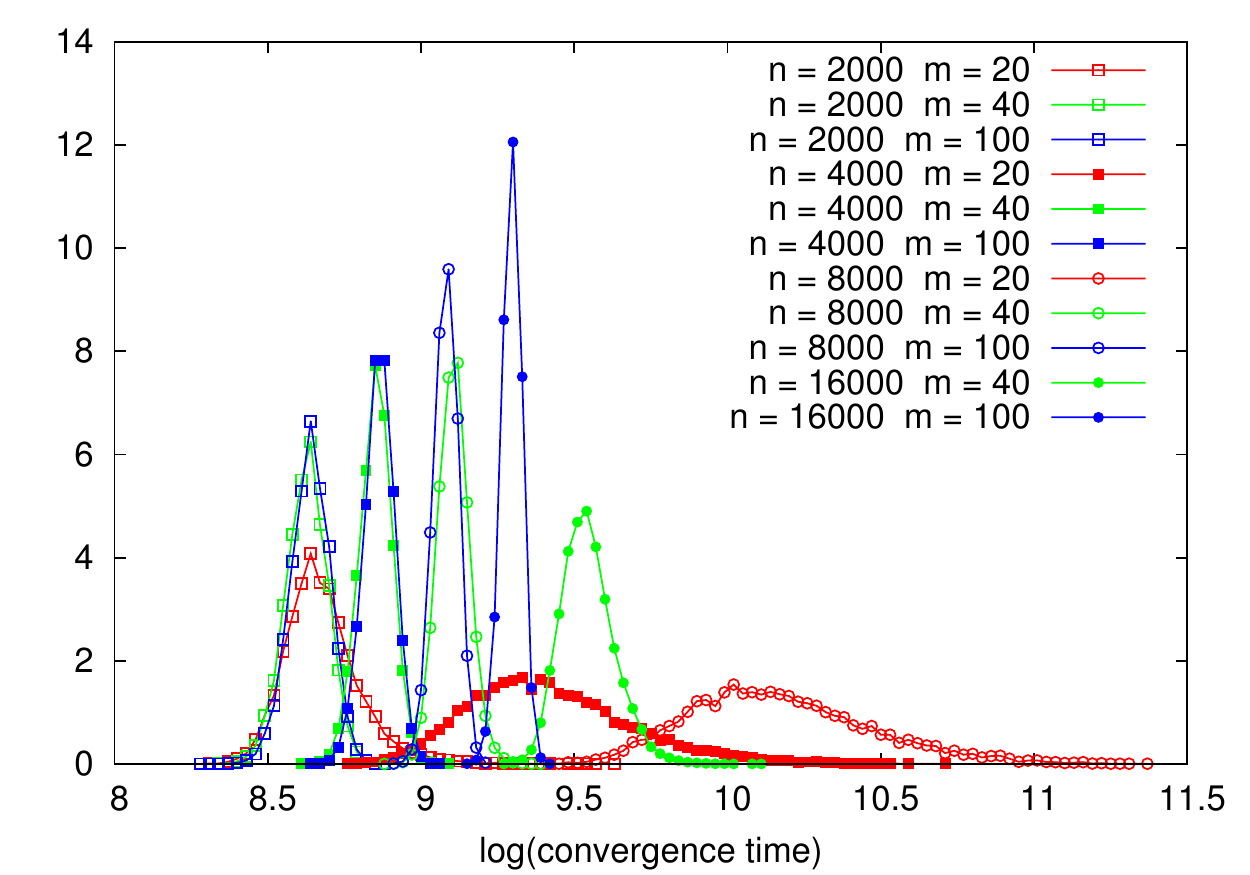}
\caption{Histograms of the logarithm of the convergence time for the greedy algorithm run with $d=10$ and $\lambda=1$. Red, green and blue histograms are for $m=20,40$ and 100 respectively. System size increases from left to right histograms.}
\label{fig:convTimesAll}
\end{figure}

\begin{figure}[]
\centering
\includegraphics[width=0.7\textwidth]{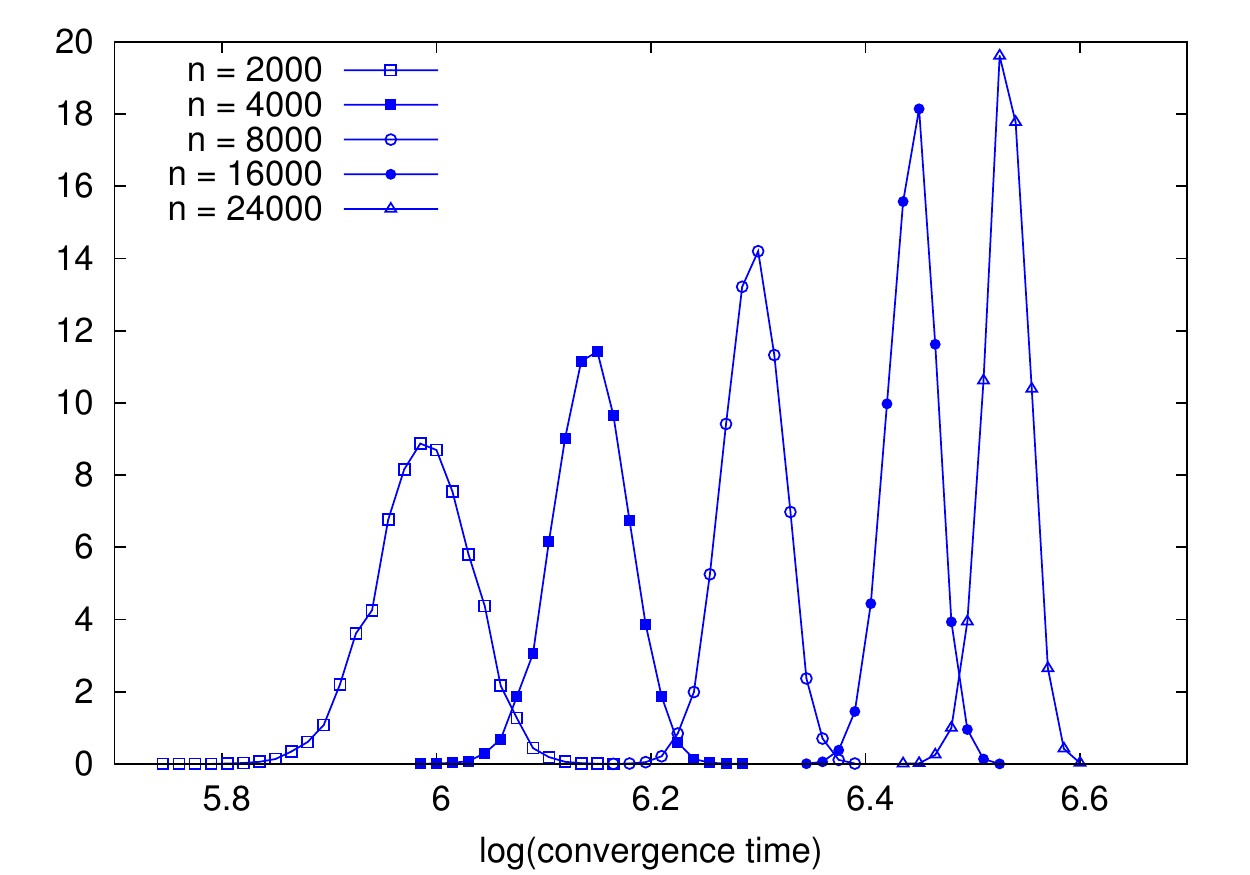}
\caption{Histograms of the logarithm of the convergence times for $m=100$, $d=10$ and $\lambda=1$.}
\label{fig:convTimesM100}
\end{figure}

For each realization of the problem we compute the convergence time
$t_\text{conv}$ as the first time such that the condition 
$\Delta_{\max}(t)\le \tol_3=10^{-3}$ is met. In Figure
\ref{fig:convTimesAll} we plot histograms of $\log(t_\text{conv})$ for $n\in\{2000,4000,8000,16000,24000\}$ and $m\in\{20,40,100\}$.
Here $d=10$ and $\lambda=1$, but $t_\text{conv}$ does not seems to
depend  strongly on $\lambda$, $d$ in the range we are interested in.

We observe that, for $m$ large enough (in particular $m=100$, see also
data in Figure~\ref{fig:convTimesM100}), the
histogram of  $\log(t_\text{conv})$ concentrates around its median.
We interpret this as evidence of convergence towards a well defined
global minimum, whose properties concentrate for $n$ large.
On the other hand,   for $m$ small, e.g. $m=20$, the  histogram broadens
as $n$ increases. This is a typical signature of convergence towards
local minima, whose properties fluctuate from one graph realization to
the other.  

Intermediate values of $m$ display a mixed behavior, with the
histogram of convergence times concentrating for small $n$ and 
broadening for larger $n$. This crossover behavior is consistent with
the analytical results of \cite{braun2006m}. Extrapolating this
crossover suggests that $m=100$ is sufficient for obtaining very
accurate results for the range $n\lesssim 10^5$ of interest to us (and
most likely, well above).

Focusing on $m=100$ (data in Figure~\ref{fig:convTimesM100}), we
computed the mean and variance of
$\log(t_{\text{conv}})$, for each value of $n$. These appear to be well fitted by the
following expressions
\begin{align}
\E \log(t_\text{conv}) &\approx 4.3 + 0.22\, \log n\,,\\
\Var\Big(\log(t_\text{conv}) \Big) &\approx 0.063\cdot n^{-1/2}\,.
\end{align}
In other words the typical time complexity of our block coordinate ascent algorithm is
--empirically-- $O(m\, n^{1.22})$ (recall that each iteration comprises
$n$ updates).
%
%*******************************
%
\subsubsection{Determination of the phase transition location}
\label{sec:LambdaC}

As already shown in Section~\ref{sec:numAdel}, the overlap
$\Ove_n(\xsdp)$ undergoes a phase transition at a critical point
$\lsdp(d)$ close to $1$. Namely
$\lim_{n\to\infty}\Ove_n(\xsdp)=0$ for $\lambda\le \lsdp(d)$,
while $\lim_{n\to\infty}\Ove_n(\xsdp)>0$ strictly for $\lambda>\lsdp(d)$.
In order to determine more precisely the phase transition location, we
use the Binder's cumulant method, which is standard in statistical
physics \cite{binder1981finite,landau2014guide}. We summarize the main ideas of this method
for the readers that might not be familiar with this type of analysis.

For a given graph realization $G$, we define $Q(G)$ to be the overlap
achieved by the SDP estimator on that realization, i.e.
\begin{align}
Q(G) \equiv \frac{1}{n}\<\xsdp(G),\bxz\>\, ,
\end{align}
Notice that $Q(G)$ is a random variable taking values in
$[-1,1]$. Also, by the symmetry of the model, its distribution is
symmetric around $0$. Further $\Ove_n(\xsdp) = \E\{|Q(G)|\}$.

\begin{figure}[]
\centering
\includegraphics[width=0.6\textwidth]{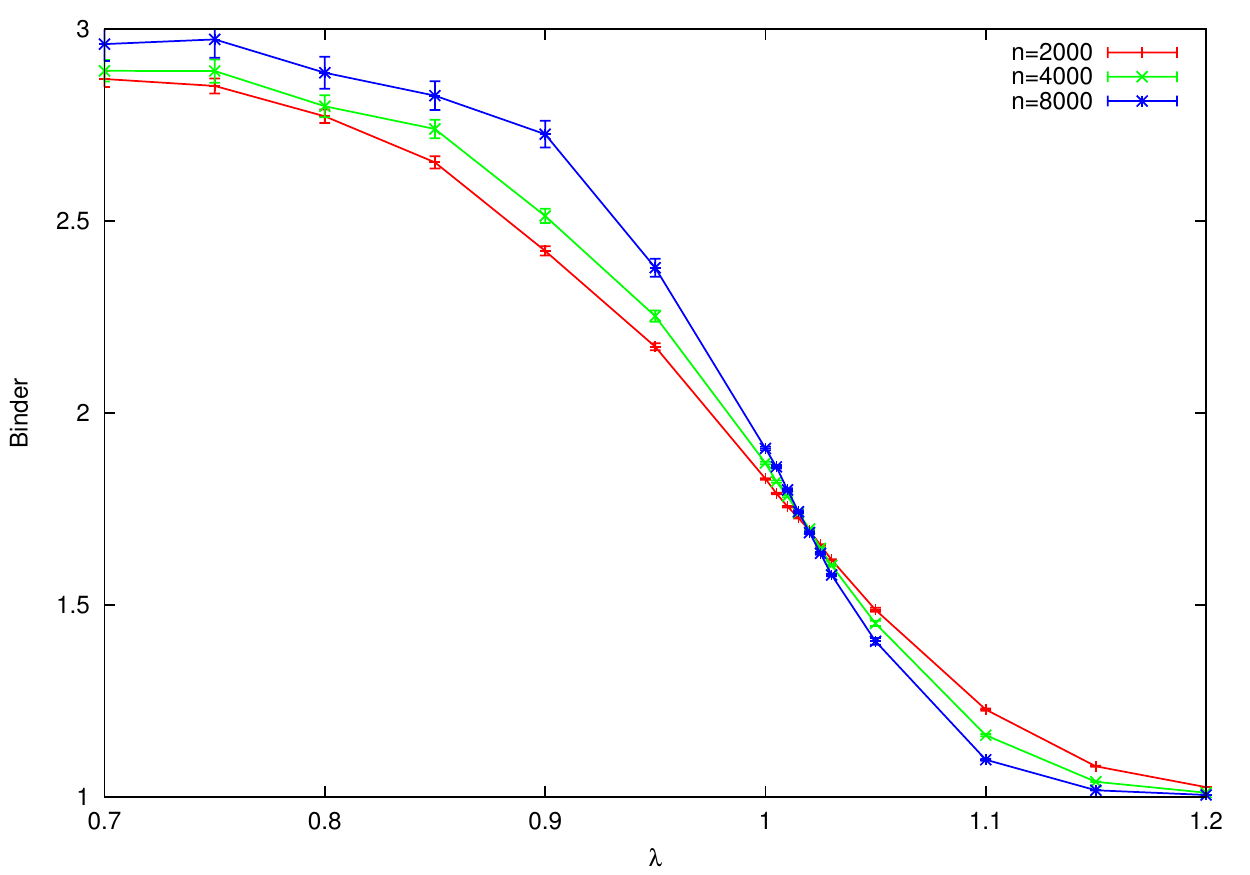}
\caption{Empirical estimates of the Binder cumulant for $d=5$.}
\label{fig:binderLong}
\end{figure}
We define the Binder cumulant by
\begin{align}
\Bind(n,\lambda, d) \equiv
  \frac{\E\big\{Q(G)^4\big\}}{\E\big\{Q(G)^2\big\}^2}\, .\label{eq:BinderDef}
\end{align}
For $\lambda>\lsdp(d)$, we expect $|Q(G)|$ to concentrate around
its expectation $\Ove_n(\xsdp)$, which converges to a non-zero
limit. Hence $\lim_{n\to\infty}\Bind(n,\lambda,d) = 1$.
On the other hand, for $\lambda>\lsdp(d)$, $Q(G)$ concentrates around
$0$, and we expect it to obey a central limit theorem asymptotics,
namely $Q(G)\approx \normal(0,\sigma^2_Q(n))$, with $\sigma^2_Q(n)
\approx \sigma_{Q,*}^2/n$. This implies
$\lim_{n\to\infty}\Bind(n,\lambda,d) = 3$.
Summarizing
\begin{align}
\lim_{n\to\infty}\Bind(n,\lambda,d) = \begin{cases}
3 & \mbox{ if $\lambda<\lsdp(d)$,}\\
1 & \mbox{ if $\lambda>\lsdp(d)$.}
\end{cases} \label{eq:BinderBehavior}
\end{align}
We carried out extensive simulations with the block coordinate ascent,
in order to evaluate the Binder cumulant, and will present our data in
the next plots.
In order to approximate the expectation over the random graph $G$, we
computed empirical  averages over $N_{\text{sample}}$ random graph
samples,  with $N_{\text{sample}}$ chosen so that $N_{\text{sample}} \times n = 6.4\; 10^8$.
(The rationale for using less samples for larger graph sizes is that we
expect statistical uncertainties to decrease with $n$.)

Figure \ref{fig:binderLong} reports a first evaluation of
$\Bind(n,\lambda,d)$ for $d=5$ and a grid of values of
$\lambda$. The results are consistent with the prediction of
Eq.~(\ref{eq:BinderBehavior}).
The approach to the $n\to\infty$ limit is expected to be described by
a finite-size scaling ansatz \cite{cardy2012finite,landau2014guide}
\begin{align}
\Bind(n,\lambda, d) \approx \cF\big(n^{1/\nu}(\lambda-\lsdp(d))\big)\, ,
\end{align}
for a certain scaling function $\cF$, and exponent $\nu$.
Formally, the above approximation is meant to be asymptotically exact
 in the sense that, for any $z$ fixed, letting $\lambda(z,n) =
 \lsdp(p)+n^{-1/\nu}z$, we have $\lim_{n\to\infty}\Bind(n,\lambda(z,n),d) = \cF(z)$. 
We refer to \cite{bollobas2001scaling,dembo2008finite} for recent examples of rigorous
finite-size scaling results in random graph problems.

\begin{figure}[]
\centering
\includegraphics[width=0.6\textwidth]{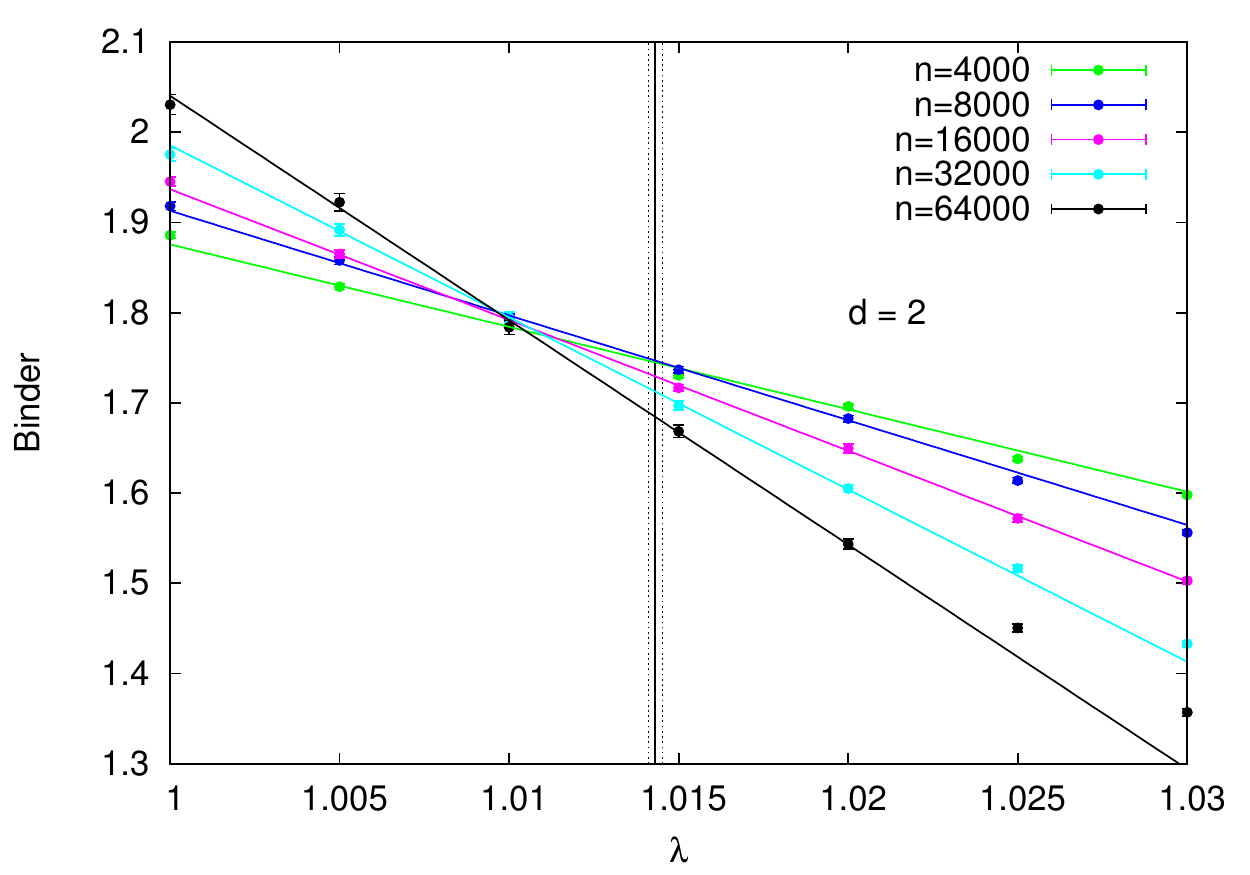}
\includegraphics[width=0.6\textwidth]{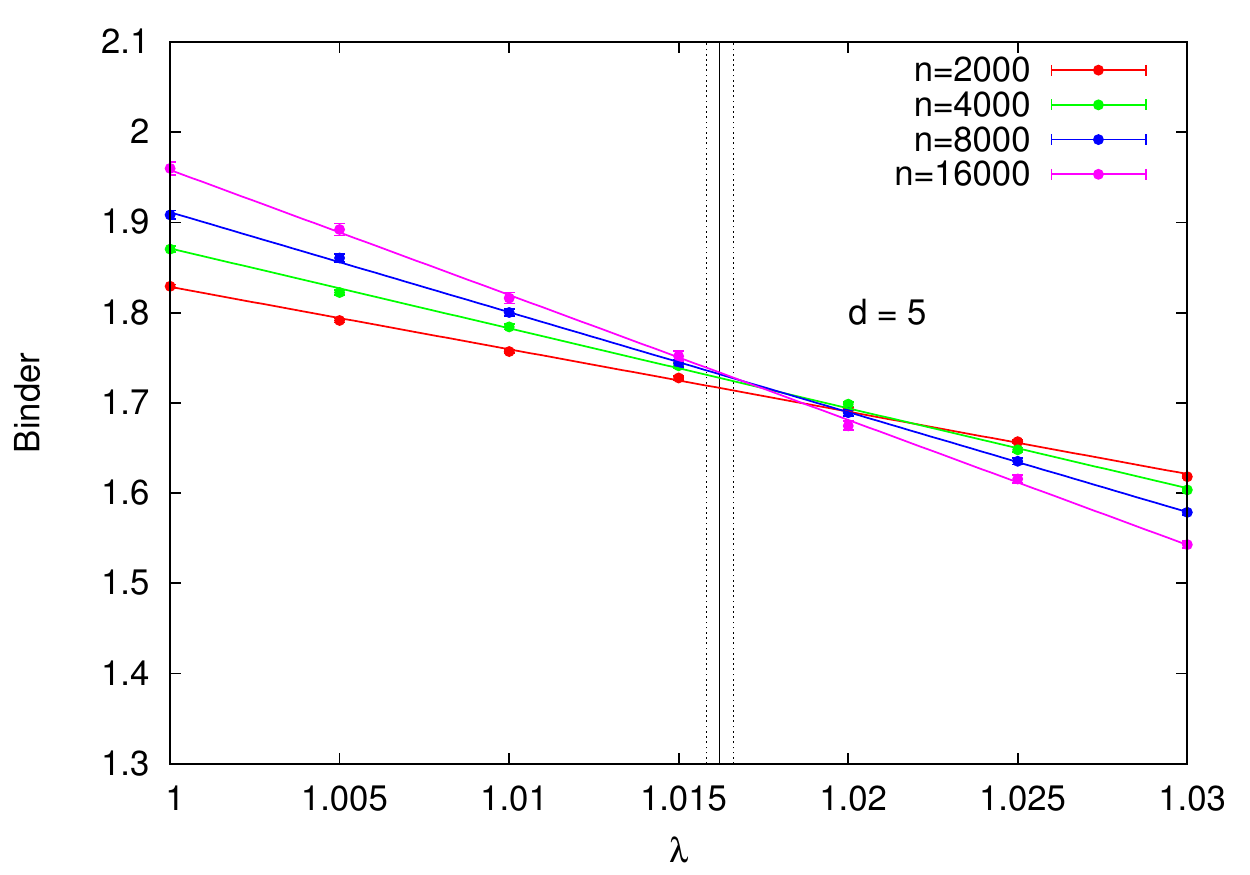}
\includegraphics[width=0.6\textwidth]{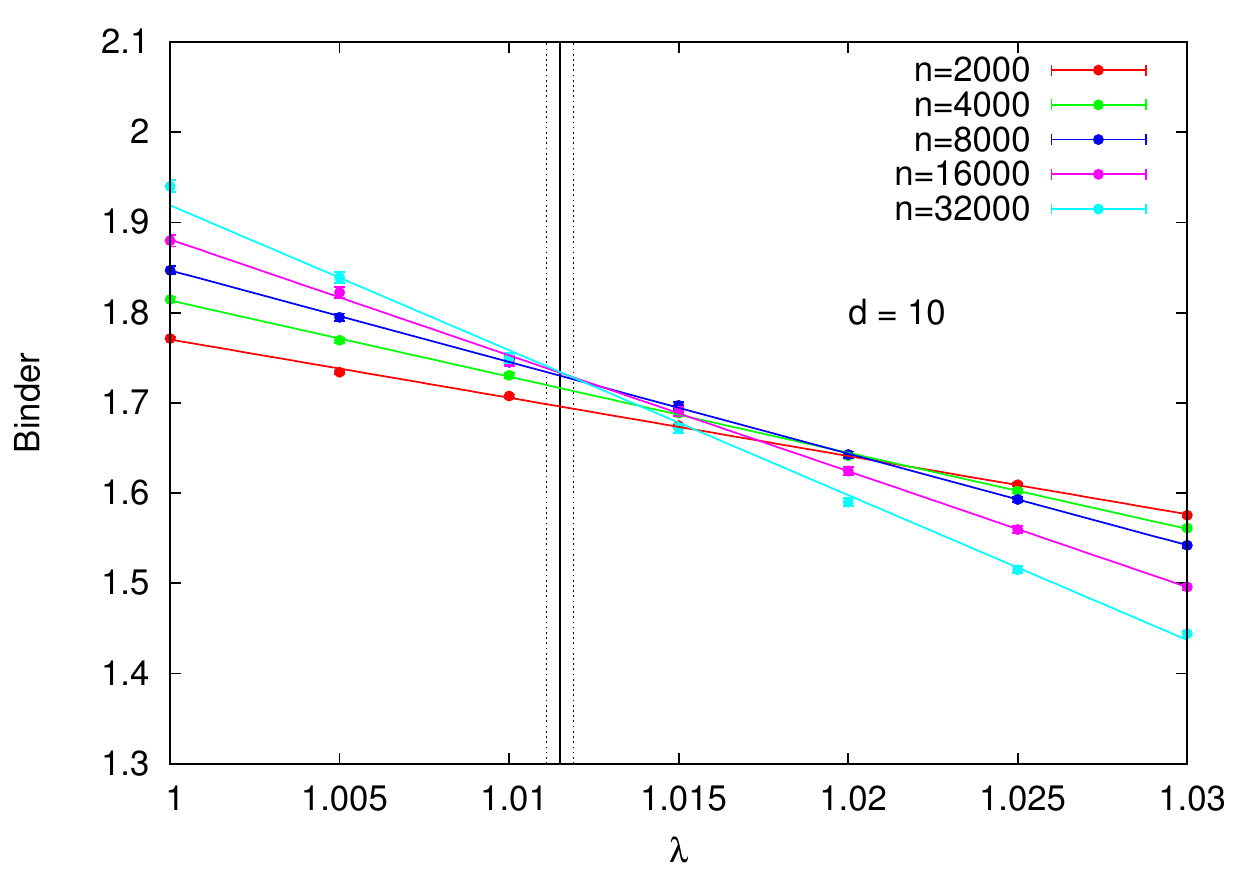}
\caption{Crossings of the Binder parameters mark the critical value of
  $\lambda$. Vertical lines are the analytical estimates for the
  critical point $\tlsdp(d)$ (with dashed lines indicating the
  uncertainty in this estimate, due to numerical solution of the
  recursive distributional equation).}
\label{fig:binder}
\end{figure}

In particular, finite size scaling suggests to estimate $\lsdp$ by the
value of $\lambda$ at which the curves
$\lambda\mapsto\Bind(n,\lambda,d)$, corresponding to different values
of $n$, intersect. In Figure \ref{fig:binder} we report our data for
$d=2$, $5$, $10$, focusing on a small window around the crossing
point. Continuous lines are linear fit to the data, and vertical lines
correspond to the analytical estimates of Section \ref{sec:numLinear}.

We observe that, for large $n$, the crossing point is roughly
independent of the the value of $n$, in agreement with the finite-size
scaling ansatz. 
As a nominal estimate for the critical point, we use
the crossing point $\lambda_{\#}(d)$ of the two Binder cumulant curves corresponding to
the two largest values of $n$, see Fig.~\ref{fig:binder}. These are
$n=32,000$ and $64,000$ for $d=2$, and  $n=16,000$ and $32,000$ for
$d=5$, $10$. We obtain
\begin{align}
\lambda_{\#}(d=2) & = 1.010\, ,\\
\lambda_{\#}(d=5) & = 1.016\, ,\\
\lambda_{\#}(d=10) & = 1.012\, .
\end{align}
These values are broadly consistent with our analytical prediction for
$\tlsdp(d)$.
There appear to be some discrepancy, especially for $d=2$. This might be due to the
graph-size being still too small for extrapolating to $n\to\infty$,
or to the inaccuracy of our calculation based on the vectorial ansatz.

\subsection{Improving numerical results by restricting to the 2-core}

In order to accelerate our numerical experiments presented in
Section~\ref{sec:numAdel} and \ref{sec:numGreedy}, we preprocessed the
graph $G$ by reducing it to its $2$-core. Recall that 
the $k$-core of a graph $G$ is the largest subgraph of $G$, with
minimum degree at least $k$. It can be constructed in linear time by
recursively removing vertices with degree at most $(k-1)$.

In numerical experiments we first generated $G_0$ according to the
model (\ref{eq:SBMDefinition}), then reduced $G_0$ to its $2$-core
$G$, and finally solved the SDP (\ref{eq:SDP}) on $G$. If $G_0$ has size
$n$, and $d>1$, the size of $G$ is still of order $n$ albeit somewhat
smaller \cite{pittel1996sudden}. 

The pruned graph $G\setminus G_0$ is formed with high probability by a
collection of trees with size of order $1$. It is not hard to see that
the SDP estimator can achieve strictly positive overlap on $G_0$ (as
$n\to\infty$) if and only if it does on $G$. Hence, this reduction does
not change the phase transition location. We confirmed numerically
this argument as well.


\begin{thebibliography}{KMM{\etalchar{+}}13}

\bibitem[AB05]{aldous2005survey}
D.~J. Aldous and A.~Bandyopadhyay, \emph{A survey of max-type recursive
  distributional equations}, Annals of Applied Probability (2005), 1047--1110.

\bibitem[ABBS14]{abbe2014decoding}
E.~Abbe, A.~S. Bandeira, A.~Bracher, and A.~Singer, \emph{Decoding binary node
  labels from censored edge measurements: Phase transition and efficient
  recovery}, Network Science and Engineering, IEEE Transactions on \textbf{1}
  (2014), no.~1, 10--22.

\bibitem[ABFM14]{alexeev2014phase}
Boris Alexeev, Afonso~S Bandeira, Matthew Fickus, and Dustin~G Mixon,
  \emph{Phase retrieval with polarization}, SIAM Journal on Imaging Sciences
  \textbf{7} (2014), no.~1, 35--66.

\bibitem[ABH{\etalchar{+}}05]{arora2005non}
S.~Arora, E.~Berger, E.~Hazan, G.~Kindler, and M.~Safra, \emph{On
  non-approximability for quadratic programs}, Foundations of Computer Science,
  2005. FOCS 2005. 46th Annual IEEE Symposium on, IEEE, 2005, pp.~206--215.

\bibitem[ALMT14]{amelunxen2014living}
D.~Amelunxen, M.~Lotz, M.~B. McCoy, and J.~A. Tropp, \emph{Living on the edge:
  Phase transitions in convex programs with random data}, Information and
  Inference (2014), iau005.

\bibitem[BA06]{braun2006m}
A.~Braun and T.~Aspelmeier, \emph{The m-component spin glass on a bethe
  lattice}, Physical Review B \textbf{74} (2006), no.~14, 144205.

\bibitem[BBC{\etalchar{+}}01]{bollobas2001scaling}
B.~Bollob{\'a}s, C.~Borgs, J.~T. Chayes, J.~H. Kim, and D.~B. Wilson, \emph{The
  scaling window of the 2-sat transition}, Random Structures \& Algorithms
  \textbf{18} (2001), no.~3, 201--256.

\bibitem[BBS14]{bandeira2014tightness}
A.~S. Bandeira, N.~Boumal, and A.~Singer, \emph{Tightness of the maximum
  likelihood semidefinite relaxation for angular synchronization}, {\sf
  arXiv:1411.3272} (2014).

\bibitem[BDSY99]{ben1999clustering}
A.~Ben-Dor, R.~Shamir, and Z.~Yakhini, \emph{Clustering gene expression
  patterns}, Journal of computational biology \textbf{6} (1999), no.~3-4,
  281--297.

\bibitem[Bin81]{binder1981finite}
K.~Binder, \emph{Finite size scaling analysis of ising model block distribution
  functions}, Zeitschrift f{\"u}r Physik B Condensed Matter \textbf{43} (1981),
  no.~2, 119--140.

\bibitem[BLM15]{bordenave2015non}
C.~Bordenave, M.~Lelarge, and L.~Massouli{\'e}, \emph{Non-backtracking spectrum
  of random graphs: community detection and non-regular ramanujan graphs},
  Foundations of Computer Science (FOCS), 2015 IEEE 55th Annual Symposium on,
  2015.

\bibitem[BM03]{burer2003nonlinear}
Samuel Burer and Renato~DC Monteiro, \emph{A nonlinear programming algorithm
  for solving semidefinite programs via low-rank factorization}, Mathematical
  Programming \textbf{95} (2003), no.~2, 329--357.

\bibitem[BSS87]{banavar1987graph}
Jayanth~R Banavar, David Sherrington, and Nicolas Sourlas, \emph{Graph
  bipartitioning and statistical mechanics}, Journal of Physics A: Mathematical
  and General \textbf{20} (1987), no.~1, L1.

\bibitem[BW93]{brezin1993large}
E.~Br{\'e}zin and S.~R. Wadia, \emph{The large n expansion in quantum field
  theory and statistical physics: from spin systems to 2-dimensional gravity},
  World scientific, 1993.

\bibitem[Car12]{cardy2012finite}
J.~Cardy, \emph{Finite-size scaling}, Elsevier, 2012.

\bibitem[CDMF09]{capitaine2009largest}
M.~Capitaine, C.~Donati-Martin, and D.~F{\'e}ral, \emph{The largest eigenvalues
  of finite rank deformation of large wigner matrices: convergence and
  nonuniversality of the fluctuations}, The Annals of Probability (2009),
  1--47.

\bibitem[CDS98]{chen1998atomic}
S.~S. Chen, D.~L. Donoho, and M.~A. Saunders, \emph{Atomic decomposition by
  basis pursuit}, SIAM journal on scientific computing \textbf{20} (1998),
  no.~1, 33--61.

\bibitem[CESV15]{candes2015phase}
E.~J Candes, Y.~C. Eldar, T.~Strohmer, and V.~Voroninski, \emph{Phase retrieval
  via matrix completion}, SIAM Review \textbf{57} (2015), no.~2, 225--251.

\bibitem[CRT03]{crisanti2003parisi}
A~Crisanti, T~Rizzo, and T~Temesvari, \emph{On the parisi-toulouse hypothesis
  for the spin glass phase in mean-field theory}, The European Physical Journal
  B-Condensed Matter and Complex Systems \textbf{33} (2003), no.~2, 203--207.

\bibitem[CT07]{Dantzig}
E.~Cand\'es and T.~Tao, \emph{{The Dantzig selector: statistical estimation
  when p is much larger than n}}, Annals of Statistics \textbf{35} (2007),
  2313--2351.

\bibitem[CT10]{candes2010power}
E.~J. Cand{\`e}s and T.~Tao, \emph{The power of convex relaxation: Near-optimal
  matrix completion}, Information Theory, IEEE Transactions on \textbf{56}
  (2010), no.~5, 2053--2080.

\bibitem[Cuc15]{cucuringu2015synchronization}
M.~Cucuringu, \emph{Synchronization over ${\mathbb z}_2$ and community
  detection in signed multiplex networks with constraints}, Journal of Complex
  Networks (2015), cnu050.

\bibitem[DAM15]{deshpande2015asymptotic}
Y.~Deshpande, E.~Abbe, and A.~Montanari, \emph{Asymptotic mutual information
  for the two-groups stochastic block model}, {\sf arXiv:1507.08685} (2015).

\bibitem[DKMZ11]{decelle2011asymptotic}
A.~Decelle, F.~Krzakala, C.~Moore, and L.~Zdeborov{\'a}, \emph{Asymptotic
  analysis of the stochastic block model for modular networks and its
  algorithmic applications}, Physical Review E \textbf{84} (2011), no.~6,
  066106.

\bibitem[DM08]{dembo2008finite}
A.~Dembo and A.~Montanari, \emph{Finite size scaling for the core of large
  random hypergraphs}, The Annals of Applied Probability \textbf{18} (2008),
  no.~5, 1993--2040.

\bibitem[DM10]{dembo2010gibbs}
\bysame, \emph{Gibbs measures and phase transitions on sparse random graphs},
  Brazilian Journal of Probability and Statistics \textbf{24} (2010), no.~2,
  137--211.

\bibitem[DMM09]{DMM09}
D.~L. Donoho, A.~Maleki, and A.~Montanari, \emph{{Message Passing Algorithms
  for Compressed Sensing}}, Proceedings of the National Academy of Sciences
  \textbf{106} (2009), 18914--18919.

\bibitem[DT05]{DoTa05}
D.~L. Donoho and J.~Tanner, \emph{Neighborliness of randomly-projected
  simplices in high dimensions}, Proceedings of the National Academy of
  Sciences \textbf{102} (2005), no.~27, 9452--9457.

\bibitem[EK10]{easley2010networks}
D.~Easley and J.~Kleinberg, \emph{Networks, crowds, and markets: Reasoning
  about a highly connected world}, Cambridge University Press, 2010.

\bibitem[Gau09]{gauss1809theoria}
C.~F. Gauss, \emph{Theoria motus corporum coelestium in sectionibus conicis
  solem ambientium auctore carolo friderico gauss}, sumtibus Frid. Perthes et
  IH Besser, 1809.

\bibitem[GN02]{girvan2002community}
M.~Girvan and M.~Newman, \emph{Community structure in social and biological
  networks}, Proceedings of the national academy of sciences \textbf{99}
  (2002), no.~12, 7821--7826.

\bibitem[GT02]{guerra2002thermodynamic}
Francesco Guerra and Fabio~Lucio Toninelli, \emph{The thermodynamic limit in
  mean field spin glass models}, Communications in Mathematical Physics
  \textbf{230} (2002), no.~1, 71--79.

\bibitem[HLL83]{holland}
P.~W. Holland, K.~Laskey, and S.~Leinhardt, \emph{{Stochastic blockmodels:
  First steps}}, Social Networks \textbf{5} (1983), no.~2, 109--137.

\bibitem[KBV09]{KBV09}
Y.~Koren, R.~Bell, and C.~Volinsky, \emph{Matrix factorization techniques for
  recommender systems}, Computer \textbf{42} (2009), no.~8, 30--37.

\bibitem[Kho06]{khot2006ruling}
S.~Khot, \emph{Ruling out ptas for graph min-bisection, dense k-subgraph, and
  bipartite clique}, SIAM Journal on Computing \textbf{36} (2006), no.~4,
  1025--1071.

\bibitem[KMM{\etalchar{+}}13]{krzakala2013spectral}
F.~Krzakala, C.~Moore, E.~Mossel, J.~Neeman, A.~Sly, L.~Zdeborov{\'a}, and
  P.~Zhang, \emph{Spectral redemption in clustering sparse networks},
  Proceedings of the National Academy of Sciences \textbf{110} (2013), no.~52,
  20935--20940.

\bibitem[KN12]{khot2012grothendieck}
S.~Khot and A.~Naor, \emph{Grothendieck-type inequalities in combinatorial
  optimization}, Communications on Pure and Applied Mathematics \textbf{65}
  (2012), no.~7, 992--1035.

\bibitem[LB14]{landau2014guide}
D.~P. Landau and K.~Binder, \emph{A guide to monte carlo simulations in
  statistical physics}, Cambridge university press, 2014.

\bibitem[LKZ15]{lesieur2015mmse}
Thibault Lesieur, Florent Krzakala, and Lenka Zdeborov{\'a}, \emph{Mmse of
  probabilistic low-rank matrix estimation: Universality with respect to the
  output channel}, {\sf arXiv:1507.03857} (2015).

\bibitem[LP13]{lyons2013probability}
R.~Lyons and Y.~Peres, \emph{Probability on trees and networks}, Citeseer,
  2013.

\bibitem[LPP97]{lyons1997unsolved}
R.~Lyons, R.~Pemantle, and Y.~Peres, \emph{Unsolved problems concerning random
  walks on trees}, Classical and modern branching processes, Springer, 1997,
  pp.~223--237.

\bibitem[Lyo90]{lyons1990random}
R.~Lyons, \emph{Random walks and percolation on trees}, The Annals of
  Probability (1990), 931--958.

\bibitem[Mas14]{massoulie2014community}
L.~Massouli{\'e}, \emph{Community detection thresholds and the weak ramanujan
  property}, Proceedings of the 46th Annual ACM Symposium on Theory of
  Computing, ACM, 2014, pp.~694--703.

\bibitem[MM09]{MezardMontanari}
M.~M{\'e}zard and A.~Montanari, \emph{{Information, Physics and Computation}},
  Oxford, 2009.

\bibitem[MNS12]{mossel2012stochastic}
E.~Mossel, J.~Neeman, and A.~Sly, \emph{Stochastic block models and
  reconstruction}, {\sf arXiv:1202.1499} (2012).

\bibitem[MNS13]{mossel2013proof}
\bysame, \emph{A proof of the block model threshold conjecture}, {\sf
  arXiv:1311.4115} (2013).

\bibitem[MP01]{mezard2001bethe}
M.~M{\'e}zard and G.~Parisi, \emph{{The Bethe lattice spin glass revisited}},
  The European Physical Journal B-Condensed Matter and Complex Systems
  \textbf{20} (2001), no.~2, 217--233.

\bibitem[MPV87]{SpinGlass}
M.~M\'ezard, G.~Parisi, and M.~A. Virasoro, \emph{Spin glass theory and
  beyond}, World Scientific, 1987.

\bibitem[MPW15]{moitra2015robust}
Ankur Moitra, William Perry, and Alexander~S Wein, \emph{How robust are
  reconstruction thresholds for community detection?}, {\sf arXiv:1511.01473}
  (2015).

\bibitem[MRZ14]{montanari2014limitation}
A.~Montanari, D.~Reichman, and O.~Zeitouni, \emph{On the limitation of spectral
  methods: From the gaussian hidden clique problem to rank one perturbations of
  gaussian tensors}, {\sf arXiv:1411.6149} (2014).

\bibitem[MS15]{montanari2015semidefinite}
A.~Montanari and S.~Sen, \emph{Semidefinite programs on sparse random graphs
  and their application to community detection}, {\sf arXiv:1504.05910} (2015).

\bibitem[OS08]{oppermann2008universality}
Reinhold Oppermann and Manuel~J Schmidt, \emph{Universality class of replica
  symmetry breaking, scaling behavior, and the low-temperature fixed-point
  order function of the sherrington-kirkpatrick model}, Physical Review E
  \textbf{78} (2008), no.~6, 061124.

\bibitem[P{\etalchar{+}}09]{plaza2009recent}
A.~Plaza et~al., \emph{Recent advances in techniques for hyperspectral image
  processing}, Remote sensing of environment \textbf{113} (2009), S110--S122.

\bibitem[PSW96]{pittel1996sudden}
B.~Pittel, J.~Spencer, and N.~Wormald, \emph{Sudden emergence of a giantk-core
  in a random graph}, Journal of Combinatorial Theory, Series B \textbf{67}
  (1996), no.~1, 111--151.

\bibitem[Sin11]{singer2011angular}
A.~Singer, \emph{Angular synchronization by eigenvectors and semidefinite
  programming}, Applied and computational harmonic analysis \textbf{30} (2011),
  no.~1, 20--36.

\bibitem[SKZ14]{saade2014spectral}
A.~Saade, F.~Krzakala, and L.~Zdeborov{\'a}, \emph{Spectral clustering of
  graphs with the bethe hessian}, Advances in Neural Information Processing
  Systems, 2014, pp.~406--414.

\bibitem[Som83]{sommers1983properties}
H-J Sommers, \emph{Properties of sompolinsky's mean field theory of spin
  glasses}, Journal of Physics A: Mathematical and General \textbf{16} (1983),
  no.~2, 447.

\bibitem[SS11]{singer2011three}
A.~Singer and Y.~Shkolnisky, \emph{Three-dimensional structure determination
  from common lines in cryo-em by eigenvectors and semidefinite programming},
  SIAM journal on imaging sciences \textbf{4} (2011), no.~2, 543--572.

\bibitem[SW87]{sherrington1987graph}
D~Sherrington and KYM Wong, \emph{Graph bipartitioning and the bethe spin
  glass}, Journal of Physics A: Mathematical and General \textbf{20} (1987),
  no.~12, L785.

\bibitem[Tib96]{Tibs96}
R.~Tibshirani, \emph{{Regression shrinkage and selection with the Lasso}}, J.
  Royal. Statist. Soc B \textbf{58} (1996), 267--288.

\bibitem[Tou80]{toulouse1980mean}
G~Toulouse, \emph{On the mean field theory of mixed spin glass-ferromagnetic
  phases}, Journal de Physique Lettres \textbf{41} (1980), no.~18, 447--449.

\bibitem[VL07]{von2007tutorial}
Ulrike Von~Luxburg, \emph{A tutorial on spectral clustering}, Statistics and
  computing \textbf{17} (2007), no.~4, 395--416.

\bibitem[Was00]{wasserman2000bayesian}
L.~Wasserman, \emph{Bayesian model selection and model averaging}, Journal of
  mathematical psychology \textbf{44} (2000), no.~1, 92--107.

\bibitem[WdM15]{waldspurger2015phase}
I.~Waldspurger, A.~d’Aspremont, and S.~Mallat, \emph{Phase recovery, maxcut
  and complex semidefinite programming}, Mathematical Programming \textbf{149}
  (2015), no.~1-2, 47--81.

\end{thebibliography}
\end{document}